\documentclass[10pt]{article}

\usepackage[utf8]{inputenc}
\usepackage{float}

\usepackage{amsthm}
\usepackage[all]{xy}
\usepackage{amsmath}
\usepackage{amssymb}
\usepackage{amsfonts}
\usepackage{authblk}
\usepackage{enumerate}

%%%%%%%%%%%%%%%  TBS  %%%%%%%%%%%%%%

%\newcommand{\todo}{}

%\newcommand{\hbml}{\textsc{hbml}}
\newenvironment{tbs}{%
   \small\tt
   \begin{itemize}}{\end{itemize}}
\newcommand{\btbs}{\begin{tbs}}                                                                      
\newcommand{\etbs}{\end{tbs}}

\newcommand{\hide}[1]{}

%%%%%%%%%%%%%%%  TBS  %%%%%%%%%%%%%%

%%%% NOTA BENE %%%%
\renewcommand{\phi}{\varphi}

%%%% ENVIRONMENTS %%%%

\newtheorem{theo}{Theorem}

\newtheorem{prop}{Proposition}
\newtheorem{lemma}{Lemma}
\newtheorem{coro}{Corollary}
\newtheorem{fact}{Fact}
\newtheorem{question}{Question}

\theoremstyle{definition}
\newtheorem{defi}{Definition}
\theoremstyle{definition}
\newtheorem{example}{Example}
\theoremstyle{definition}
\newtheorem{remark}{Remark}

%%%% ENVIRONMENTS %%%%

%%%% MATHEMATICS %%%%

% Mathematics

\newcommand{\isdef}{\mathrel{:=}}

% Structures
\newcommand{\bbA}{\mathbb{A}}
\newcommand{\bbS}{\mathbb{S}}
\newcommand{\bbT}{\mathbb{T}}

%%%% MATHEMATICS %%%%

%%%% CATEGORY THEORY %%%%

% Functors
\newcommand{\fun}{\mathsf{T}}
\newcommand{\psf}{\mathcal{P}}
\newcommand{\cvp}{\mathcal{Q}}
\newcommand{\mon}{\mathcal{M}}
    \newcommand{\monstar}{\mathcal{M}^{\star}}

% Coalgebra

%%%% SYNTAX %%%%

% Languages
\newcommand{\mathlang}[1]{\mathtt{#1}}
   
\newcommand{\MSO}{\mathlang{MSO}}
   \newcommand{\MSOT}{\MSO_{\fun}}
   \newcommand{\MSOLa}{\MSO_{\Lambda}}
\newcommand{\muML}{\mathlang{\mu ML}}
   \newcommand{\muMLT}{\muML_{\fun}}
   \newcommand{\muMLLa}{\muML_{\Lambda}}
\newcommand{\ML}{\mathlang{ML}}
   \newcommand{\MLT}{\mathlang{ML}_{\fun}}
   \newcommand{\MLLa}{\mathlang{ML}_{\Lambda}}
\newcommand{\SO}{\mathlang{SO}}
   \newcommand{\SOT}{\mathlang{SO}_{\fun}}
   \newcommand{\SOLa}{\mathlang{SO}_{\La}}

\newcommand{\Var}{\mathit{Var}}

\newcommand{\isbnf}{\mathrel{::=}}
\newcommand{\divbnf}{\mathrel{|}}

% symbols, abbreviations etc

\newcommand{\sr}{\mathtt{sr}}
\newcommand{\Em}{\mathtt{Em}}

\newcommand{\Sing}{\mathtt{Sing}}

%%%% AUTOMATA %%%%

\newcommand{\Aut}{\mathit{Aut}}

%%%% VARIOUS %%%%

\newcommand{\smod}{\mathbb{S}}

% various abbreviations
\newcommand{\si}{\sigma}

\newcommand{\De}{\Delta}
\newcommand{\La}{\Lambda}
\newcommand{\Om}{\Omega}

\title{Monadic Second-Order Logic and Bisimulation Invariance for Coalgebras}

\author[1,2]{Sebastian Enqvist}
\author[1]{Fatemeh Seifan}
\author[1]{Yde Venema}
\affil[1]{Institute for Logic, Language and Computation,  University of Amsterdam}
\affil[2]{Department of Philosophy, Lund University}

\begin{document}
\maketitle
\begin{abstract}
Generalizing standard monadic second-order logic for Kripke models, we 
introduce monadic second-order logic $\MSO_{\fun}$ interpreted
over coalgebras for an arbitrary set functor $\fun$. 
Similar to well-known results for monadic second-order logic over trees, we 
provide a translation of this logic into a class of automata, relative to the class 
of  $\fun$-coalgebras that admit a tree-like supporting Kripke frame. 
We then consider invariance under behavioral equivalence of 
$\MSO_{\fun}$-formulas; more in particular, we investigate whether the coalgebraic 
$\mu$-calculus is the bisimulation-invariant fragment of $\MSO_{\fun}$.
%, that is, whether we can prove a coalgebraic generalization of the Janin-Walukiewicz
%Theorem.
Building on recent results by the third author we show that in order to provide
such a coalgebraic generalization of the Janin-Walukiewicz Theorem, it suffices 
to find what we call an \textit{adequate uniform construction} for the functor $\fun$.
As applications of this result we obtain a partly new proof of the Janin-Walukiewicz 
Theorem, and bisimulation invariance results for the bag functor (graded modal
logic) and all exponential polynomial functors.

Finally, we consider in some detail the monotone neighborhood functor $\mon$, 
which provides coalgebraic semantics for monotone modal logic. 
It turns out that  there is \textit{no} adequate uniform construction for $\mon$, 
whence the automata-theoretic approach towards bisimulation invariance
does not apply  directly. 
This problem can be overcome if we consider \textit{global} bisimulations between
neighborhood models: one of our main technical results provides a characterization of the
monotone modal $\mu$-calculus extended with the \textit{global modalities}, as the
fragment of monadic second-order logic for the monotone neighborhood functor
that is invariant for global bisimulations. 
\end{abstract}

\section{Introduction}

\subsection{Logic, automata and coalgebra}
The aim of this paper is to strengthen the link between the areas of logic, 
automata and coalgebra.
More in particular, we provide a coalgebraic generalization of the
automata-theoretic approach towards monadic second-order logic ($\MSO$), and
we address the question whether the Janin-Walukiewicz Theorem can be 
generalized from Kripke structures 
%(Labelled Transition Systems) 
to the setting of arbitrary coalgebras.

The connection between \textit{monadic second-order logic} and \textit{automata}
is classic, going back to the seminal work of B\"uchi, Rabin, and others.
For instance, Rabin's decidability result for the monadic second-order theory
of binary trees, or $S2 S$, makes use of a translation of 
monadic second-order logic into a class of automata, thus reducing the 
satisfiability problem for $S 2 S$ to the non-emptiness problem for the
corresponding automata \cite{rabi:deci69}.
The link between $\MSO$ and automata over trees with arbitrary branching
was further explored by Walukiewicz~\cite{walu:mona96}.
Janin and Walukiewicz considered monadic second-order logic interpreted over 
Kripke structures, and used automata-theoretic techniques to obtain a van 
Benthem-like characterization theorem for monadic second-order logic, identifying
the modal $\mu$-calculus as the bisimulation invariant fragment of 
$\MSO$ \cite{jani:expr96}.
Given the fact that in many applications bisimilar models are considered
to represent the \emph{same} process, one has little interest in properties of 
models that are \emph{not} bisimulation invariant.
Thus the Janin-Walukiewicz Theorem can be seen as an expressive completeness 
result, stating that all \emph{relevant} properties in monadic second-order 
logic can be expressed in the modal $\mu$-calculus.

Coalgebra enters naturally into this picture.
Recall that Universal Coalgebra~\cite{rutt:univ00} provides the notion of a 
\emph{coalgebra} as the natural mathematical generalization of state-based 
evolving systems such as streams, (infinite) trees, Kripke models, 
(probabilistic) transition systems, and many others.
This approach combines simplicity with generality and wide applicability: many
features, including input, output, nondeterminism, probability, and interaction, 
can easily be encoded in the coalgebra type $\fun$ (formally an endofunctor on 
the category $\mathbf{Set}$ of sets as objects with functions as arrows).
Starting with Moss' seminal paper~\cite{moss:coal99}, coalgebraic logics have 
been developed for the purpose of specifying and reasoning about \emph{behavior},
one of the most fundamental concepts that allows for a natural coalgebraic 
formalization.
And with Kripke structures constituting key examples of coalgebras, it should
come as no surprise that most coalgebraic logics are some kind of modification
or generalization of \emph{modal logic}~\cite{cirs:moda11}.

The coalgebraic modal logics that we consider here originate with
Pattinson~\cite{patt:coal03}; they are characterized by a completely standard
syntax, in which the semantics of each modality is determined by a so-called 
\emph{predicate lifting} (see Definition~\ref{d:pl} below).
Many well-known variations of modal logic in fact arise as the coalgebraic logic
$\ML_{\La}$ associated with a set $\La$ of such predicate liftings; examples
include both standard and (monotone) neighborhood modal logic, graded and 
probabilistic modal logic, coalition logic, and conditional logic.
Extensions of coalgebraic modal logics with fixpoint operators, needed 
for describing \emph{ongoing} behavior, were developed 
in~\cite{vene:auto06,cirs:expt09}.

The link between coalgebra and automata theory is by now well-established.
For instance, finite state automata operating on finite words have been 
recognized as key examples of coalgebra from the 
outset~\cite{rutt:univ00}.
More relevant for the purpose of this paper is the link with precisely the
kind of automata mentioned earlier, since the (potentially infinite) objects 
on which these devices operate, such as streams, trees and Kripke frames,
usually are coalgebras.
Thus, the automata-theoretic perspective on modal fixpoint logic could be
lifted to the abstraction level of coalgebra~\cite{vene:auto06,font:auto10}.
In fact, many key results in the theory of automata operating on infinite 
objects, such as Muller \& Schupp's Simulation Theorem~\cite{mull:simu95} can
in fact be seen as instances of more general theorems in Universal 
Coalgebra~\cite{kupk:coal08}.

%% 
 % First of all, trees of branching
 % degree $\omega$ are coalgebras for the functor $(-) \times \omega$, and trees 
 % of arbitrary branching degree can be seen as special coalgebras for the powerset
 % functor. Furthermore, the coalgebras for this functor in general are exactly 
 % Kripke frames. Second, automata theory and coalgebra go well together: certain
 % types of automata are themselves naturally seen as coalgebras
 % \cite{CianciaVenema12}, and generalized automata running on coalgebras for an 
 % arbitrary set functor $\fun$ have been shown to behave well in the sense that 
 % they satisfy several of the closure properties known from standard automata
 % theory \cite{KupkeVenema08, KissigVenema09}. 
 %%

\subsection{Coalgebraic monadic second-order logic}

Missing from this picture is, to start with, a coalgebraic version of 
\emph{(monadic) second-order logic}.
Filling this gap is the first aim of the current paper, which introduces a notion
of \emph{monadic second-order logic} $\MSOT$ for coalgebras of type $\fun$.
Our formalism combines two ideas from the literature.
First of all, we looked for inspiration to the coalgebraic versions of 
\emph{first-order logic} of Litak \& alii~\cite{lita:coal12}.
These authors introduced Coalgebraic Predicate Logic as a common generalisation 
of first-order logic and coalgebraic modal logic, combining first-order 
quantification with coalgebraic syntax based on predicate liftings.
Our formalism $\MSOT$ will combine a similar syntactic feature with second-order
quantification.
Second, following the tradition in automata-theoretic approaches towards monadic
second-order logic, our formalism will be \emph{one-sorted}.
That is, we \emph{only} allow second-order quantification in our language, 
relying on the fact that individual quantification, when called for, can be
encoded as second-order quantification relativized to singleton sets.
Since predicate liftings are defined as families of maps on powerset
algebras, these two ideas fit together very well, to the effect that our
second-order logic is in some sense simpler than the first-order 
formalism of~\cite{lita:coal12}.

In section~\ref{sec:mso} we will define, for any set $\Lambda$ of 
monotone\footnote{%
  In the most general case, restricting to monotone predicate liftings is not 
  needed, one could define $\MSOT$ as the logic obtained by taking for $\La$
  the set of \emph{all} predicate liftings. 
  However, in the context of this paper, where we take an automata-theoretic
  perspective on $\MSO$, this restriction makes sense.
  }
predicate liftings, a formalism $\MSOLa$, and we let $\MSOT$ denote the logic
obtained by taking for $\Lambda$ the set of \emph{all} monotone predicate 
liftings.
Clearly we will make sure that this definition generalizes the standard case,
in the sense that the standard version of $\MSO$ for Kripke structures 
instantiates the logic $\MSO_{\{\Diamond\}}$ and is equivalent to the coalgebraic
logic $\MSO_{\psf}$ (where $\psf$ denotes the power set functor).

The introduction of a monadic second-order logic $\MSOT$ for $\fun$-coalgebras 
naturally raises the question, for which $\fun$ the coalgebraic modal
$\mu$-calculus for $\fun$ corresponds to the bisimulation-invariant fragment of 
$\MSOT$.

\begin{question}
\label{q:Q}
\text{Which functors $\fun$ satisfy $\muMLT \equiv \MSOT/{\simeq}$?}
\end{question}

%\newpage

\subsection{Automata for coalgebraic monadic second-order logic}

In order to address Question~\ref{q:Q}, we take an automata-theoretic 
perspective on the logics $\MSOT$ and $\muMLT$, and as the second contribution
of this paper we introduce a class of parity automata for $\MSOT$.

As usual, the operational semantics of our automata is given in terms
of a two-player acceptance game, which proceeds in \emph{rounds} moving from
one basic position to another, where a basic position is a pair consisting of
a state of the automaton and a point in the coalgebra structure under 
consideration.
In each round, the two players, $\exists$ and $\forall$, focus on a certain
local `window' on the coalgebra structure.
This `window' takes the shape of a \emph{one-step $\fun$-model}, that is, a 
triple $(X,\alpha,V)$ consisting of a set $X$, a \emph{chosen object} $\alpha
\in \fun X$, and 
a valuation $V$ interpreting the states of the automaton as subsets of $X$.
More specifically, during each round of the game it is the task of $\exists$
to come up with a valuation $V$ that creates a one-step model in which a 
certain \emph{one-step formula} $\delta$ (determined by the current basic 
position in the game) is true.

Generally, our automata will have the shape $\bbA = (A,\De,\Om,a_{I})$
where $A$ is a finite carrier set with initial state $a_{I} \in A$, and $\Om$
and $\De$ are the parity and transition map of $\bbA$, respectively.
The flavour of such an automaton is largely determined by the co-domain of its
transition map $\De$, the so-called \emph{one-step language} which consists of 
the one-step formulas that feature in the acceptance game as 
described.

Each one-step language $\mathlang{L}$ induces its own class of automata 
$\Aut(\mathlang{L})$.
For instance, the class of automata corresponding to the coalgebraic fixpoint
logic $\muMLLa$ can be given as $\Aut(\ML_{\La})$, where $\ML_{\La}$ is the set
of positive modal formulas of depth one that use modalities from 
$\La$~\cite{font:auto10}.
Basically then, the problem of finding the right class of automata for the
coalgebraic monadic second-order logic $\MSOLa$ consists in the identification
of an appropriate one-step language.
Our proposal comprises a one-step \emph{second-order logic} which uses predicate 
liftings to describe the chosen object of the one-step model.

Finally, note that similar to the case of standard $\MSO$, the equivalence 
between formulas in $\MSOT$ and automata in $\Aut(\SO)$ is only guaranteed to 
hold for coalgebras that are `tree-like' in some sense (to be defined further 
on).

\begin{theo}[Automata for coalgebraic $\MSO$]
\label{t:automatachar}
For any set $\La$ of monotone predicate liftings for $\fun$ there is an
effective construction mapping any formula $\varphi\in \MSOLa$ into an
automaton $\mathbb{A}_\varphi \in \Aut(\SOLa)$, 
which is
equivalent to $\varphi$ over $\fun$-tree models.
\end{theo}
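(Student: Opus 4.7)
The plan is to proceed by induction on the structure of $\varphi \in \MSOLa$, exploiting closure properties of the automaton class $\Aut(\SOLa)$: at each step I want to exhibit a $\bbA_\varphi$ whose acceptance game, played on a tree-like $\fun$-coalgebra, tracks the Tarski semantics of $\varphi$. Before starting I would put formulas into a convenient normal form so that all atoms are at the level of set variables — for instance inclusions $X \subseteq Y$ together with applications $\lambda(X_1,\dots,X_n)$ of predicate liftings from $\La$, possibly relativized to a singleton second-order variable playing the role of a ``current point''. Because $\MSOLa$ is one-sorted, individual quantification need not be treated separately; it is subsumed by set quantification restricted to singletons.

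For the base case, each atomic formula is mapped to a very small automaton whose transition map $\Delta$ is essentially the one-step $\SOLa$-rendering of the atom. Boolean combinations are then handled by the standard constructions on parity automata: a synchronous product for $\wedge$, and a disjoint union with a fresh initial state whose transition formula is a disjunction for $\vee$. For negation I would exploit the fact that $\SOLa$ is a genuine second-order one-step language, hence closed under Boolean dualization; complementing an automaton then reduces to negating each formula $\Delta(a)$ and shifting the parity map $\Omega$ by one, in the usual De Morgan / dualization style for alternating parity automata.

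For the second-order quantifier $\exists X.\psi$, I would apply a projection construction: given $\bbA_\psi$, define $\bbA_{\exists X.\psi}$ so that in every round of the acceptance game $\exists$ additionally commits to the restriction of $X$ to the successors currently in view. Concretely this can be absorbed into the transition map by existentially quantifying, inside each $\Delta(a)$, over a fresh set variable acting as $X$ — a move that is expressible precisely because $\SOLa$ already contains second-order quantification. This is where the \emph{tree-like} restriction becomes indispensable: only on a tree does round-by-round local guessing patch together into a globally well-defined subset of the carrier, since every point is reached by a unique path from the root and $\exists$ is therefore never forced to commit twice to incompatible values at the same point.

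The main obstacle I anticipate is the correctness proof for this projection step, more specifically the strategy-stitching argument that converts $\exists$'s winning strategies in $\bbA_{\exists X.\psi}$ into witnesses for $X$ on the original coalgebra, and vice versa, under the tree-likeness assumption. The remaining bookkeeping — verifying that each construction leaves $\Delta$ inside $\SOLa$ and $\Omega$ a legal priority map, so that the output is indeed an element of $\Aut(\SOLa)$ — is routine but needs to be tracked through all inductive cases. Once these invariants are maintained, soundness and completeness follow by a one-step calculation at each clause combined with the standard parity-game gluing lemmas, yielding the desired effective translation $\varphi \mapsto \bbA_\varphi$ over $\fun$-tree models.
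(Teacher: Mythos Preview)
Your inductive skeleton matches the paper's: atoms, then closure of $\Aut(\SOLa)$ under the Boolean connectives, then projection for $\exists p$. The gap is precisely where you yourself flag the ``main obstacle'': the projection step as you describe it does not go through.

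You argue that on a tree every point is reached by a unique path, so $\exists$'s round-by-round guesses for $X$ patch into a well-defined global subset. This is not correct for \emph{alternating} automata. At a basic position $(a,s)$, $\exists$ plays a valuation $U : A \to \psf(R(s))$, and nothing prevents the same successor $t$ from lying in $U(b)\cap U(b')$ for distinct $b,b'$. Hence even on a tree the same node $t$ may occur in $\chi$-guided matches with different automaton states, and $\exists$'s guesses for the colour $q$ at $t$ need not agree across those matches. Tree-likeness gives you uniqueness of the \emph{path in the model}, not uniqueness of the \emph{automaton state} along that path; the latter is what the stitching argument actually needs.

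The paper fills this gap by inserting a \emph{simulation theorem} before projection: every (monotone) automaton in $\Aut(\SOLa)$ is converted into an equivalent \emph{non-deterministic} one, meaning each $\Delta(a,c)$ is a \emph{special basic} lifting --- any satisfying valuation can be thinned to one whose values on distinct states are pairwise disjoint. For such automata $\exists$'s winning strategy can be made \emph{scattered} (each tree node is visited with at most one automaton state), and then the naive projection $\Delta^{*}(a,c)=\Delta(a,c)\lor\Delta(a,c\cup\{q\})$ is sound. Note also that the guess concerns whether $q$ holds at the \emph{current} node (through the colour $c$), not at the successors as you suggest; your idea of pushing an $\exists Z$ into the one-step formula does not by itself interact with the colour argument of $\Delta$. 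So the missing ingredient is the non-determinization / simulation step, without which the strategy-stitching argument you anticipate being hard is in fact impossible.
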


The proof of Theorem~\ref{t:automatachar} proceeds by induction on the 
complexity of $\MSOT$-formulas, and thus involves various \emph{closure 
properties} of automata, such as closure under complementation, union and
projection.
In order to establish these results, it will be convenient to take an
\emph{abstract} perspective, revealing how closure properties of a class of
automata are completely determined at the level of the one-step language.
%\newpage

\subsection{Bisimulation Invariance}

With automata-theoretic characterizations in place for both coalgebraic $\MSO$ 
and the coalgebraic $\mu$-calculus $\muML$, we can address Question~\ref{q:Q}
by considering the following question:
\begin{question}
\label{q:Qa}
\text{Which functors $\fun$ satisfy $\Aut(\ML)\equiv \Aut(\SO)/{\simeq}$?}
\end{question}
Continuing the program of the third author~\cite{vene:expr14}, we will approach
this question \emph{at the level of the one-step languages}, $\SO$ and $\ML$.
To start with, observe that any translation (from one-step formulas in) $\SO$ to
(one-step formulas in) $\ML$ naturally induces a translation from $\SO$-automata
to $\ML$-automata.
A new observation we make here is that any so-called \emph{uniform construction}
on the class of one-step models for the functor $\fun$ that satisfies certain
\emph{adequacy} conditions, provides
(1) a translation $(\cdot)^{*}: \SO \to \ML$, together with
(2) a construction $(\cdot)_{*}$ transforming a pointed $\fun$-model $(\bbS,s)$
into a tree model $(\bbS_{*},s_{*})$ which is a coalgebraic pre-image of 
$(\bbS,s)$ satisfying
\[
\bbA \text{ accepts } (\bbS_{*},s_{*}) \text{ iff }
\bbA^{*} \text{ accepts } (\bbS,s).
\]
From this it easily follows that an $\SO$-automaton $\bbA$ is bisimulation 
invariant iff it is equivalent to the $\ML$-automaton $\bbA^{*}$.

On the basis of these observations we can prove the following generalisation of 
the Janin-Walukiewicz Theorem.

\begin{theo}[Coalgebraic Bisimulation Invariance]
\label{t:main2}
If the set functor $\fun$ admits an adequate uniform construction, then 
\[
\muMLT \equiv \MSOT/{\simeq}.
\]
\end{theo}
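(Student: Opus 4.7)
The direction $\muMLT \subseteq \MSOT/{\simeq}$ is routine: I translate any $\muMLT$-formula into $\MSOT$ via the standard encoding of least and greatest fixpoints by second-order quantifiers (as intersections/unions of (post)fixpoints expressible using predicate liftings), and note that the resulting formula is bisimulation invariant because $\muMLT$ is. All the work lies in the converse: given a bisimulation-invariant $\MSOT$-formula $\varphi$, produce an equivalent $\muMLT$-formula.

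My plan for the converse is to go via automata. First, I apply Theorem~\ref{t:automatachar} to obtain an $\SOT$-automaton $\bbA_{\varphi}$ equivalent to $\varphi$ on $\fun$-tree models. Second, the adequate uniform construction gives me, as indicated in the excerpt, a one-step translation $(\cdot)^{*}:\SO \to \MLT$; applying it pointwise to the transition map of $\bbA_{\varphi}$ yields an $\MLT$-automaton $\bbA_{\varphi}^{*}$, which by the characterization of $\muMLT$ as $\Aut(\MLT)$ recalled in the introduction corresponds to some $\muMLT$-formula $\psi$. What remains is to verify $\varphi \equiv \psi$ on \emph{all} $\fun$-coalgebras, not merely on trees.

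To check this, fix a pointed coalgebra $(\bbS,s)$ and let $(\bbS_{*},s_{*})$ be the tree pre-image furnished by $(\cdot)_{*}$. Adequacy supplies two ingredients: $(\bbS_{*},s_{*})$ is behaviorally equivalent to $(\bbS,s)$, and for every $\SO$-automaton $\bbA$, $\bbA$ accepts $(\bbS_{*},s_{*})$ iff $\bbA^{*}$ accepts $(\bbS,s)$. Chaining these with Theorem~\ref{t:automatachar} applied to the tree $\bbS_{*}$ gives
\begin{align*}
\bbS,s \models \varphi
&\Leftrightarrow \bbS_{*},s_{*} \models \varphi
 \Leftrightarrow \bbA_{\varphi} \text{ accepts } (\bbS_{*},s_{*}) \\
&\Leftrightarrow \bbA_{\varphi}^{*} \text{ accepts } (\bbS,s)
 \Leftrightarrow \bbS,s \models \psi,
\end{align*}
where the first equivalence uses the bisimulation invariance of $\varphi$ together with the behavioral equivalence of $\bbS_{*}$ and $\bbS$, the second is Theorem~\ref{t:automatachar} on the tree $\bbS_{*}$, the third is the acceptance-transport property of the uniform construction, and the last is the formula/automaton correspondence for $\muMLT$.

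The real difficulty is packaged into the hypothesis: once an adequate uniform construction is on the table, the argument above is bookkeeping. The two points I expect to need most care are (a) ensuring that $(\cdot)^{*}$ applied to the transition formulas of $\bbA_{\varphi}$ genuinely lands in the positive one-step modal fragment $\MLT$ and leaves the parity map untouched, so that $\bbA_{\varphi}^{*}$ really is a $\muMLT$-automaton and not merely an $\SO$-automaton; and (b) checking that the adequacy conditions are strong enough on their own to yield both the behavioral-equivalence clause and the acceptance-transport clause used in the chain, so that no hidden assumption on $\fun$ sneaks in beyond the mere existence of an adequate uniform construction (this is precisely what will later fail for the monotone neighborhood functor $\mon$ and force the passage to global bisimulations).
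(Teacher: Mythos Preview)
Your proposal is correct and follows essentially the same route as the paper: translate $\varphi$ to a second-order automaton on trees (Theorem~\ref{t:automatachar}), apply the one-step translation $(\cdot)^{*}$ induced by the adequate uniform construction to obtain a modal automaton, and close the loop via the tree pre-image of Proposition~\ref{p:unr}, whose coalgebra homomorphism simultaneously yields behavioral equivalence and the acceptance-transport property you use. Your concern~(a) is exactly what the paper handles by first passing to a monotone automaton (Proposition~\ref{p:mon-aut}) so that each $\Delta(a,c)^{*}$ is a monotone \emph{natural} lifting and hence an $\MLT^{1}$-formula; concern~(b) is discharged by the proof of Proposition~\ref{p:unr}, where the homomorphism $\gamma$ and the acceptance equivalence are both derived from the same iterated one-step construction.
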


In our eyes, the significance of Theorem~\ref{t:main2} is twofold.
First of all, the proof separates the `clean', abstract part of 
bisimulation-invariance results from the more functor-specific parts.
As a consequence, Theorem~\ref{t:main2} can be used to obtain immediate results
in particular cases.
Examples include the power set functor (standard Kripke structures),
where the adequate uniform construction roughly consists of taking $\omega$-fold 
products (see Example~\ref{ex:psf}), the bag functor (Example~\ref{ex:bag}),
and all exponential polynomial functors (Corollary~\ref{c:epf}).
Second, in case the functor does \emph{not} admit an adequate uniform
construction, Theorem~\ref{t:main2} may still be of use in proving alternative 
characterization results for the functor.

Instantiating the latter phenomenon is the \emph{monotone neighborhood functor} 
$\mon$ (see the next section for its definition).
The importance of this functor lies, among other things, in it providing a 
coalgebraic semantics for monotone modal logic~\cite{hans:coal04}. 
The coalgebraic monadic second-order language $\MSO_\mon$  is equivalent to a 
natural second-order language for reasoning about monotone neighborhood 
structures that we shall denote by $\mathtt{MMSO}$, and $\muML_\mon$ is 
equivalent to the fixpoint-extension of the monotone $\mu$-calculus, denoted 
$\mu \mathtt{MML}$.
As we shall see in Proposition~\ref{p:no-adc-mon} below, $\mon$ does \emph{not}
admit an adequate uniform construction.\footnote{%
   This does not necessarily mean that the monotone $\mu$-calculus $\muML_{\mon}$ 
   does \emph{not} correspond to the bisimulation-invariant fragment of 
   $\MSO_{\mon}$, but it does mean that a proof of such a result, if provable at 
   all, will be significantly more involved than for those functors where 
   Theorem~\ref{t:main2} does apply.}
This, however, is not the end of the story. 
It turns out that we \emph{can} find an adequate uniform construction for 
a \emph{variant} $\monstar$ of the functor $\mon$ (see 
Proposition~\ref{p:ad-monstar}).
As a corollary, we obtain a characterization of the fragment of 
$\mathtt{MMSO}$ that is invariant under \emph{global} bisimulations
(bisimulations that are full on both domain and codomain).
This fragment turns out to be exactly the extension of the monotone $\mu$-calculus
with the global modalities (for precise definitions we refer to 
section~\ref{sec:mon}), which we shall denote $\mu \mathtt{MML}_g$.

In this notation, our final contribution is the following characterization result:
\begin{theo}
\label{t:JWmonglob}
A formula in $\mathtt{MMSO}$ is invariant for global neighborhood 
bisimulations if, and only if, it is equivalent to a formula of the logic  
$\mu \mathtt{MML}_g$.
\end{theo}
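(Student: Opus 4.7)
The plan is to derive Theorem~\ref{t:JWmonglob} by transferring the general coalgebraic bisimulation-invariance result (Theorem~\ref{t:main2}) from the auxiliary functor $\monstar$ back to $\mon$. By Proposition~\ref{p:ad-monstar}, $\monstar$ admits an adequate uniform construction, so applying Theorem~\ref{t:main2} to $\monstar$ directly yields
\[
\muML_{\monstar} \;\equiv\; \MSO_{\monstar}/{\simeq}.
\]
The substance of the argument is then to match both sides of this equivalence with the intended notions on the $\mon$-side: $\muML_{\monstar}$ with $\mu\mathtt{MML}_g$, and $\MSO_{\monstar}/{\simeq}$ with the global-bisimulation-invariant fragment of $\mathtt{MMSO}$.

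To carry this out, I would first set up a natural correspondence between $\monstar$-coalgebras and $\mon$-coalgebras enriched with enough data to interpret the global modalities. Since $\monstar$ is designed as a variant of $\mon$ precisely for this purpose, each $\monstar$-coalgebra should decompose canonically into a monotone neighborhood frame together with a ``globality'' component that exposes the full carrier to the coalgebraic signature. Using this decomposition I would verify three translation lemmas: (i) $\monstar$-behavioral equivalence coincides with global neighborhood bisimilarity on the induced $\mon$-structures; (ii) the one-step second-order language for $\monstar$ decomposes into the one-step fragment of $\mathtt{MMSO}$ plus predicate liftings capturing the global component, making $\MSO_{\monstar}$ and $\mathtt{MMSO}$ mutually interpretable; and (iii) at the modal-fixpoint level, $\muML_{\monstar}$ is correspondingly expressively equivalent to $\mu\mathtt{MML}_g$.

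With these three translations in hand, the nontrivial direction of the theorem is a diagram chase: a global-bisimulation-invariant formula $\phi\in\mathtt{MMSO}$ corresponds via (ii) to an $\MSO_{\monstar}$-formula $\phi'$ which, by (i), is $\monstar$-behaviorally invariant; by Theorem~\ref{t:main2} applied to $\monstar$, $\phi'$ is equivalent to some $\psi'\in\muML_{\monstar}$; and translating back via (iii) delivers the desired formula in $\mu\mathtt{MML}_g$. The converse, that every formula of $\mu\mathtt{MML}_g$ is globally bisimulation invariant, is a routine induction on the syntax, using the evident invariance of the monotone modalities, the global modalities, and the fixpoint operators under the strong notion of global bisimulation.

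The principal obstacle I anticipate lies in step (i). Because $\mon$ itself has the ``wrong'' coalgebraic notion of bisimilarity for our purposes --- exactly the reason Proposition~\ref{p:no-adc-mon} rules out a direct application of Theorem~\ref{t:main2} to $\mon$ --- the identification of $\monstar$-behavioral equivalence with global $\mon$-bisimilarity must be engineered carefully at the level of the functor, and one must verify that no information is silently lost or spuriously added when passing between the two coalgebraic signatures. Items (ii) and (iii) should then be largely mechanical, since the syntactic ingredients of $\mathtt{MMSO}$ and $\mu\mathtt{MML}_g$ are by construction precisely what the predicate liftings for $\monstar$ are meant to capture.
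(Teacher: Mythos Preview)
Your plan has a genuine gap at step (i), and it is precisely the step you flagged as the principal obstacle. The claim that $\monstar$-behavioral equivalence coincides with global $\mon$-bisimilarity on the underlying neighborhood structures is false, and the direction you need fails. Take $\bbT_2$ with carrier $\{a\}$, $\sigma_2(a)=(\{\{a\}\},\{a\})$, $V_2(p)=\emptyset$, and $\bbT_1$ with carrier $\{a,b\}$, $\sigma_1(a)=(\{\{a\},\{a,b\}\},\{a\})$, $\sigma_1(b)$ arbitrary, $V_1(p)=\{b\}$. The inclusion $\iota:\{a\}\hookrightarrow\{a,b\}$ is a $\monstar$-model morphism, so $(\bbT_1,a)$ and $(\bbT_2,a)$ are $\monstar$-behaviorally equivalent; yet $(\bbT_{1,\mon},a)$ and $(\bbT_{2,\mon},a)$ are not globally bisimilar, since $b$ has no partner. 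The globally invariant $\mathtt{MMSO}$-formula $\neg\Em(p)$ therefore translates to an $\MSO_{\monstar}$-formula that is \emph{not} $\monstar$-behaviorally invariant, and your appeal to Theorem~\ref{t:main2} for $\monstar$ is blocked. The same phenomenon undermines (iii): the lifting $E$ quantifies over the support component $S_\alpha$, which only agrees with $[\exists]$ on the special models $\bbS^G$, not on arbitrary $\monstar$-models.

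The paper circumvents this by \emph{not} invoking Theorem~\ref{t:main2} as a black box. Instead it works directly with Proposition~\ref{p:unr}, applied only to models of the form $\bbS^G$. The point is that for such models the support component is the full carrier, so the map $h_{\sigma(s)}$ in the uniform construction of Definition~\ref{d:F-monstar} is surjective, and the ``furthermore'' clause of Proposition~\ref{p:unr} then yields a \emph{surjective} morphism $f:\bbT\to\bbS^G$. Surjectivity is exactly what makes the graph of $f$ a \emph{global} neighborhood bisimulation between $\bbT_{\mon}$ and $\bbS$, and this is where the global-invariance hypothesis on $\varphi$ is discharged. The chain of equivalences is then run for each individual pointed $\mon$-model $(\bbS,s)$, never requiring $\varphi^{*}$ to be invariant over the full class of $\monstar$-models. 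In short: the transfer from $\monstar$ to $\mon$ is not a clean equivalence of invariance notions, and the argument must be threaded through the specific unravelling construction and its surjectivity property rather than through Theorem~\ref{t:main2}.
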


\section{Some technical background}

In this paper we assume familiarity with the basic theory of modal (fixpoint) 
logic, monadic second-order logic, coalgebra, coalgebraic modal (fixpoint) 
logic, and parity games.
Here we fix some notation and terminology.

\subsection{Kripke models and their logics}

We restrict to the theory of modal logic with one modality (and hence, one 
accessibility relation).
% We follow the presentation in \cite{jani:expr96}. 
Let $\Var$ be  a fixed infinite supply of variables. 
A \textit{Kripke model} is a structure $\bbS = (S,R,V)$ where $S$ is a set,
$R \subseteq S \times S$ and $V : \Var \rightarrow \psf(S)$ is a 
$\Var$-valuation.
Associated with such a valuation $V$, we define the \emph{conjugate coloring} 
$V^{\dagger}: S \to \psf(\Var)$ by $V^{\dagger}(s) \isdef \{ p \in \Var \mid
s \in V(p)\}$.
Given a subset $T \subseteq S$, the valuation $V[p \mapsto T]$ is as $V$ except 
that it maps the variable $p$ to $T$.
A \textit{pointed} Kripke model is a structure $(\bbS,u)$ where $\bbS$ is a
Kripke model and $u$ is a point in $\bbS$. 
Turning to syntax, we define the formulas of monadic second-order logic $\MSO$ 
through the following grammar:
$$
\varphi \isbnf \sr(p) \divbnf p \subseteq q \divbnf R(p,q)   
  \divbnf \neg \varphi \divbnf\varphi \vee\varphi \divbnf \exists p. \varphi,
$$
with $p,q \in \Var$. 
Formulas are evaluated over pointed Kripke models by the following induction:
\begin{itemize}
\item $(S,R,V,u) \vDash  \sr(p) $ iff $V(p) = \{u\}$
\item $(S,R,V,u)\vDash p \subseteq q $ iff $V(p) \subseteq V(q)$
\item $(S,R,V,u) \vDash R(p,q)$ iff for all $v \in V(p)$ there is $w \in V(q)$ 
  with $v R w$ 
\item standard clauses for the boolean connectives
\item $(S,R,V,u)\vDash \exists p. \varphi$ iff $(S,R,V[p \mapsto T],u)\vDash 
   \varphi$ for some $T \subseteq S$.
\end{itemize}

We present the language of the modal $\mu$-calculus $\muML$ in
negation normal form, by the following grammar:
$$ \varphi \isbnf p \divbnf \neg p \divbnf \bot \divbnf \top 
   \divbnf \phi \lor \phi \divbnf \phi \land \phi
   \divbnf \Box \varphi \divbnf \Diamond \varphi 
   \divbnf \eta p. \varphi 
$$
where $p \in \Var$, $\eta \in \{ \mu, \nu \}$,  and in the formula $\eta p. \varphi$
no free occurrence of the variable $p$ may be in the scope of a negation. 

The satisfaction relation between pointed Kripke models and formulas in 
$\muML$ is defined by the usual induction, with, e.g.
\begin{itemize}
\item 
$(S,R,V,u)\vDash \mu p. \varphi$ iff $u \in \bigcap \{Z \subseteq S 
\mid \varphi_{p}(Z) \subseteq Z\}$
where $\varphi_{p}(Z)$ denotes the truth set of the formula $\varphi$ in the 
model $(S,R,V[p \mapsto Z])$.
\end{itemize}

We assume familiarity with the notion of bisimilarity between two (pointed)
Kripke models, and say that a formula of $\MSO$ is \textit{bisimulation
invariant} if  it has the same truth value in any pair of bisimilar pointed
Kripke models. 

\begin{fact}\cite{jani:expr96}
A formula $\varphi$ of $\MSO$ is equivalent to a formula of $\muML$ iff
$\varphi$ is invariant for bisimulations.
\end{fact}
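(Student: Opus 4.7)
The plan is to prove the two directions separately. The direction from $\muML$ to bisimulation invariance is a routine induction on formula complexity: atomic formulas trivially agree on bisimilar points, the Boolean and modal clauses respect bisimilarity because the $\Box/\Diamond$ clauses only require matching successors, and for the fixpoint clauses one shows by an outer induction on fixpoint approximants that the predicate transformer $\varphi_p$ maps bisimulation-closed sets to bisimulation-closed sets.

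For the harder direction, I would follow the automata-theoretic strategy that the present paper is abstracting. First, translate $\varphi \in \MSO$ into an equivalent parity automaton $\bbA_{\varphi}$ whose transition map takes values in a one-step second-order language $\SO$, using the $\psf$-instance of (the yet-to-come) Theorem~\ref{t:automatachar}. This is sound for $\psf$-tree models, so the question reduces to: every bisimulation-invariant $\SO$-automaton is equivalent (over arbitrary Kripke models) to an $\ML$-automaton, which in turn is equivalent to a formula of $\muML$ by the standard translation of modal parity automata into the $\mu$-calculus.

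Second, I would exhibit an adequate uniform construction for $\psf$ in the sense promised by Theorem~\ref{t:main2}. The natural candidate is the $\omega$-fold product on one-step models (at the global level: the $\omega$-expansion that replaces every successor by countably many bisimilar copies). This should yield a translation $\delta \mapsto \delta^{*}$ of one-step $\SO$-formulas into one-step $\ML$-formulas such that $\delta$ holds in a one-step model $(X,\alpha,V)$ iff $\delta^{*}$ holds in its thickened version. Lifting this componentwise to automata gives $\bbA \mapsto \bbA^{*}$ with the property that $\bbA$ accepts the $\omega$-expansion $(\bbS_{*},s_{*})$ iff $\bbA^{*}$ accepts $(\bbS,s)$. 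Since the $\omega$-expansion is a $\psf$-tree model bisimilar to $(\bbS,s)$, bisimulation invariance of $\varphi$ forces $\bbA_{\varphi}$ to agree on the two, whence $\bbA_{\varphi}^{*}$ is globally equivalent to $\varphi$ and may be unfolded into the desired $\muML$ formula.

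The main obstacle will be constructing the one-step translation $(\cdot)^{*}$ and verifying its adequacy. One must show that any behaviour an $\SO$-one-step formula detects in $(X,\alpha,V)$ can, after $\omega$-thickening, be witnessed by a modal one-step formula: essentially, since an automaton has only finitely many states and hence induces only finitely many colourings of successors, the $\omega$-many bisimilar copies are enough for $\exists$ to simultaneously realize any coherent family of second-order choices, so the cardinality-sensitive power of $\SO$ collapses into the modal fragment. Making this precise — defining $(\cdot)^{*}$ on the full one-step language, verifying the two adequacy conditions, and ensuring that the thickening preserves both acceptance games up to an obvious strategy transfer — is where the real work lies; the surrounding automata-theoretic scaffolding is then essentially bookkeeping.
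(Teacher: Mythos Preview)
The paper does not give a standalone proof of this Fact at all: it is stated with a citation to \cite{jani:expr96} and left unproved at that point. What the paper \emph{does} do, later, is recover the Janin--Walukiewicz theorem as the first application of its general machinery (Theorem~\ref{t:bisinv2} together with Example~\ref{ex:psf}), and your proposal is essentially a sketch of that derivation: the adequate uniform construction for $\psf$ you describe --- taking $(X,\alpha)$ to $X_{*} = \alpha_{*} = \alpha \times \omega$ with $h_{\alpha}$ the projection --- is exactly the one the paper writes down in Example~\ref{ex:psf}.

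One small refinement: you frame the ``main obstacle'' as constructing the one-step translation $(\cdot)^{*}$ and then verifying adequacy. In the paper's setup these are not two separate tasks. Once a uniform construction is fixed, the translation $\varphi^{*}$ is \emph{defined} semantically by $\alpha \in \varphi^{*}_{X}(V)$ iff $(X_{*},\alpha_{*},V_{[h_{\alpha}]}) \vDash_{1} \varphi$, and the only thing left to check is that this generalized lifting is natural --- i.e., that the construction is adequate in the sense of condition~$(\star)$. So the work is purely on the semantic side (showing the $\omega$-thickening makes one-step $\SO$-truth invariant under pushforward along coalgebra maps), not on producing an explicit syntactic translation into $\ML$-formulas.
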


\subsection{Coalgebras and models}

Our basic semantic structures consist of coalgebras together with valuations.
We only consider coalgebras over the base category $\mathbf{Set}$ with sets as
objects and functions as arrows.
The co- and contravariant power set functors will be denoted by $\psf$ and 
$\cvp : \mathbf{Set} \to \mathbf{Set}^{op}$, respectively.
Covariant endofunctors on $\mathbf{Set}$ will be called \textit{set 
functors}.

\begin{defi}
Let $\fun$ be a set functor. 
A \textit{$\fun$-coalgebra} is a pair $(S,\si)$ consisting of a set $S$, 
together with a map $\sigma : S \to \fun S$.
A $\fun$-\textit{model} is a structure $\mathbb{S} = (S,\sigma,V)$ where 
$(S,\sigma)$ is a $\fun$-coalgebra and $V : \Var \rightarrow \psf S$.
A \textit{pointed} $\fun$-model is a structure $(\mathbb{S},s)$ where 
$\mathbb{S}$ is a $\fun$-model and $s \in S$.
\end{defi}

The usual notion of a $p$-morphism between Kripke models can be generalized as
follows:
Let $\mathbb{S}_1 = (S_1,\sigma_1,V_1)$ and $\mathbb{S}_2 = (S_2,\sigma_2,V_2)$
be two $\fun$-models and let $f : S_1 \rightarrow S_2$ be any map. 
Then $f$ is said to be a $\fun$-\textit{model homomorphism} if:
\begin{enumerate}
\item 
for each variable $p$ and each $u \in S_1$, we have $u \in V_1(p)$ iff 
$f(u) \in V_2(p)$;
\item 
the map $f$ is a \textit{coalgebra morphism}, i.e. we have
$$\sigma_2 \circ f = \fun f \circ \sigma_1.$$
\end{enumerate}
Two pointed coalgebras $(\bbS,s)$ and $(\bbS',s')$ are \emph{behaviorally
equivalent} if $s$ and $s'$ can be identified by coalgebra morphisms $f: 
\bbS \to \bbT$ and $f': \bbS' \to \bbT$ such that $f(s) = f'(s')$.

A \emph{coalgebraic logic} consists of a set $\mathlang{L}$ of \emph{formulas}
together with, for each coalgebra
$(S,\si)$, a truth or satisfaction relation ${\Vdash} \subseteq S \times
\mathlang{L}$.
A formula $\phi$ is called \emph{bisimulation invariant}\footnote{Strictly speaking, behavioral equivalence and bisimilarity are distinct concepts. However, in many concrete cases, behavioural equivalence and bisimilarity coincide, so we shall be content to use the more common parlance of ``bisimulation invariance'' rather than ``invariance for behavioural equivalence.}
if $\bbS, s 
\Vdash \phi \iff \bbS',s' \Vdash \phi$ whenever $\bbS,s \simeq \bbS', s'$.

Kripke frames are coalgebras for the (covariant) power set functor $\psf$.
A functor of particular interest in this paper is the \textit{monotone 
neighborhood} functor $\mon$, usually defined as the subfunctor of $\cvp \circ
\cvp$ given by
$$
\mon X = \{N \in \cvp \cvp X \mid \forall Z,Z': Z \in N 
\mathrel{\&} 
   Z \subseteq Z' \Rightarrow Z' \in N\}.$$
This functor comes equipped with the following notion of bisimilarity.
A \emph{neighborhood bisimulation} between $\mon$-models $\mathbb{S}_1$ and 
$\mathbb{S}_2$ is a relation $R \subseteq S_{1} \times S_{2}$ such that, if 
$s_1 R s_2$ 
then:
\begin{itemize}
\item $V_1^\dagger(s_1) = V_2^\dagger(s_2)$;
\item for all $Z_1$ in $\sigma_1(s_1)$ there is $Z_2 $ in $\sigma_2(s_2)$ such 
   that, for all $t_2 \in Z_2$ there is $t_1 \in Z_1$ with $t_1 R t_2$;
\item for all $Z_2$ in $\sigma_2(s_2)$ there is $Z_1 $ in $\sigma_1(s_1)$ such
   that, for all $t_1 \in Z_1$ there is $t_2 \in Z_2$ with $t_1 R t_2$.
\end{itemize}

\subsection{Coalgebraic $\mu$-calculus \& coalgebra automata}

The modal $\mu$-calculus is just one in a family of logical systems that may 
collectively be referred to as the \textit{coalgebraic 
$\mu$-calculus}~\cite{cirs:expt09}. 
These logics essentially make use of \textit{predicate liftings}. 

\begin{defi}
\label{d:pl}
Given a set functor $\fun$, an \textit{$n$-place predicate lifting} for $\fun$ 
is a natural transformation
$$\lambda : \cvp(-)^n \rightarrow \cvp \circ \fun,
$$
where $\cvp(-)^n$ denotes the $n$-fold product of $\cvp$ with itself.
A predicate lifting $\lambda$ is said to be \textit{monotone} if
$$
\lambda_X(Y_1,...,Y_n) \subseteq \lambda_X(Z_1,...,Z_n),
$$
whenever $Y_i \subseteq Z_i$ for each $i$. 
The \textit{Boolean dual}  $\lambda^d$ of $\lambda$ is defined by
$$(Z_1,...,Z_n) \mapsto \fun X \setminus 
   (\lambda_X(X\setminus Z_1,...,X \setminus Z_n)).
$$
\end{defi}

Given a set functor $\fun$, the language $\muMLT$ of the coalgebraic 
$\mu$-calculus for $\fun$ is defined thus:
$$
\varphi \isbnf p  \divbnf \neg p \divbnf \bot \divbnf  \top 
   \divbnf \lambda(\varphi_1,...,\varphi_n) 
   \divbnf \varphi \vee \varphi \divbnf  \varphi \wedge \varphi 
   \divbnf \eta p. \varphi
$$
where $p \in \Var$, $\lambda$ is any monotone $n$-place predicate lifting for 
$\fun$, $\eta \in \{\mu,\nu\}$, and, in $\eta p. \varphi$, no free occurrence of 
the variable $p$ is in the scope of a negation. 
If we restrict the formulas $\lambda(\varphi_1,...,\varphi_n)$ so that $\lambda$ 
must come frome some distinguished set of liftings $\Lambda$, then we denote the 
corresponding sublanguage of $\muMLT$ by $\muMLLa$. 

The semantics of formulas in a pointed $\fun$-model is defined as follows:
\begin{itemize}
\item 
$(\mathbb{S},s)\vDash p $ iff $s \in V(p)$ and $(\mathbb{S},s)\vDash \neg p$ iff 
   $s \notin V(p)$
\item $(\mathbb{S},s)\vDash \lambda(\varphi_1,...,\varphi_n)$ iff $\sigma(s)\in 
   \lambda_S(\Vert \varphi_1 \Vert,...,\Vert \varphi_n\Vert)$, where
  $\Vert \varphi_i \Vert = \{t \in S \mid (\mathbb{S},t)\vDash \varphi_i\}$
denotes the ``truth set'' of $\varphi_i$ in $\mathbb{S}$
\item standard clauses for the boolean connectives
\item $(\mathbb{S},s)\vDash \mu p. \varphi$ iff
  $s \in \bigcap \{X \subseteq S \mid \varphi_{p}(X) \subseteq X\}$,
  where $\varphi_{p}(Z)$ denotes the truth set of the formula $\varphi$ in the 
  $\fun$-model $(S,\si,V[p \mapsto Z])$.
\end{itemize}

It is routine to prove that all formulas in $\muMLT$ are bisimulation 
invariant.

Turning to the parity automata corresponding to the language $\muMLLa$, we
first define the \textit{modal one-step language} $\MLLa^{1}$.
Its set $\MLLa^{1}(A)$ of \emph{modal one-step formulas} over a set $A$ of
variables is given by the following grammar:
$$ 
\varphi \isbnf
   \bot \divbnf \top \divbnf \lambda(\psi_1,...,\psi_n) 
   \divbnf \varphi \vee \varphi \divbnf \varphi \wedge \varphi
$$
where $\psi_1,...,\psi_n$ are formulas built up from variables in $A$ using 
disjunctions and conjunctions.

\begin{defi}
Given a functor $\fun$ and a set of variables $A$, a \textit{one-step model} 
over $A$ is a triple $(X,\alpha,V)$ where $X$ is any set, $\alpha \in \fun X$
and $V : A \rightarrow \psf(X)$ is a valuation.
\end{defi}

The semantics of formulas in the modal one-step language in a one-step model is
given as follows:
\begin{itemize}
\item standard clauses for the boolean connectives,
\item $(X,\alpha,V) \vDash_1 \lambda(\psi_1,...,\psi_n)$ iff 
   $\alpha \in \lambda_X(\Vert \psi_1\Vert,...,\Vert \psi_n \Vert)$, where 
   $\Vert \psi_i \Vert \subseteq X$ is the (classical) truth set of the formula 
   $\psi_i$ under the valuation $V$.
\end{itemize}
We can now define the class of automata used to characterize the coalgebraic 
$\mu$-calculus.

\begin{defi}
Let $P$ be a finite set of variables and $\Lambda$ a set of predicate liftings. 
Then a \textit{($P$-chromatic) modal $\Lambda$-automaton} is a tuple 
$(A,\Delta,\Omega,a_I)$ where $A$ is a finite set of states with $a_I \in A$,
\[
\Delta : A \times \psf(P) \rightarrow \MLLa^{1}(A)
\]
is the transition map of the automaton, and $\Omega : A \rightarrow \omega$ is 
the parity map. 
The class of these automata is denoted as $\Aut(\MLLa)$.
\end{defi}

The acceptance game for an automaton $\mathbb{A} = (A,\Delta,\Omega,a_I)$ and 
a $\fun$-model $(S,\sigma,V)$ is given by the following table:

\begin{table}[h]
    \centering
\begin{tabular}{|l|c|l|}
\hline
Position  & Pl'r  &  Admissible moves 
\\ \hline
     $(a,s)\in A\times S$  
   & $\exists$  
   & $\{U:A\rightarrow {\mathcal{P}}S \mid 
        (S,\sigma(s),U) \vDash_{1}\Delta(a, V^{\dagger}(s))$
\\
     $U:A\rightarrow{\mathcal{P}}S$ 
   & $\forall$ 
   & $\{(b,t)\mid t\in U(b)\}$                                                    \\
\hline
    \end{tabular}
\end{table}

The loser of a finite match is the player who got stuck, and the 
winner of an infinite match is $\exists$ if the greatest parity that 
appears infinitely often in the match is even, and the winner is 
$\forall$ if this parity is odd.  
%Given a strategy $\chi$ for 
%$\exists$, a match is said to be $\chi$-guided if it is consistent 
%with every choice made by $\exists$ according to $\chi$. A strategy 
%$\chi$ is said to be a winning strategy relative to the starting 
%position $(a,v)$ if $\exists$ wins every $\chi$-guided
%match starting at this position. 
The automaton $\mathbb{A}$ \emph{accepts} the pointed model $(\smod,s)$ 
if $\exists$ has a winning strategy in the acceptance game from the starting
position $(a_I,s)$. 
We say that and automaton $\mathbb{A}$ is \emph{equivalent} to a formula 
$\varphi \in \muMLLa$ if, for every pointed $\fun$-model 
$(\mathbb{S},s)$, we have that $\mathbb{A}$ accepts $(\mathbb{S},s)$ iff 
$(\mathbb{S},s)\vDash \varphi$.

%The following result can be proved by standard methods:

\begin{fact}\cite{font:auto10}
\label{coalgebraicmu}
Let $\fun$ be a set functor, and $\Lambda$ a set of monotone predicate 
liftings for $\fun$, closed under Boolean duals. 
Then
\[
\muMLLa \equiv \Aut(\MLLa).
\]
That is, there are effective transformations of formulas in $\muMLLa$ into
equivalent automata in $\Aut(\MLLa)$, and vice versa.
\end{fact}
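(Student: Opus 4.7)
The result is an effective two-way equivalence between formulas in $\muMLLa$ and automata in $\Aut(\MLLa)$, each preserving semantics over pointed $\fun$-models, so I would address the two directions separately, taking the coalgebraic signature $\La$ and the predicate-lifting semantics of one-step modalities as the only extra ingredients over and above the standard Kripke case.

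For the formula-to-automaton direction, the plan is a direct construction whose states form (a version of) the Fischer--Ladner closure of the input formula $\varphi$. After renaming bound variables so that each is bound exactly once, take $A$ to consist of $\varphi$ together with its subformulas and the unfoldings of fixpoint subformulas, with initial state $a_I \isdef \varphi$. Define $\Delta(a,c)$ by case analysis on the principal connective of $a$: a literal $p$ (resp.\ $\neg p$) maps to $\top$ or $\bot$ according to whether $p \in c$; a boolean combination $a_1 \vee a_2$ or $a_1 \wedge a_2$ maps to the corresponding one-step formula with $a_1,a_2$ viewed as one-step variables in $A$; a modal formula $\lambda(\psi_1,\dots,\psi_n)$ maps to itself in $\MLLa^{1}(A)$; an occurrence of a bound variable $p$ maps to the associated fixpoint formula binding it; and a top-level fixpoint $\eta p.\chi$ maps to the unfolding $\chi[\eta p.\chi/p]$. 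The parity map $\Omega$ assigns each state a priority reflecting the alternation depth of its governing fixpoints, with $\nu$-bindings contributing even priorities and $\mu$-bindings odd ones, arranged so that more deeply alternating fixpoints receive strictly higher priority. Correctness then follows from an unfolding argument matching $\exists$'s winning strategies to the standard game semantics of the $\mu$-calculus.

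For the automaton-to-formula direction, given $\mathbb{A} = (A,\Delta,\Omega,a_I)$, introduce a fresh propositional variable $x_a$ for each $a \in A$ and define, for each $a$, the $\muMLLa$-formula
\[
\phi_a \isdef \bv_{c \subseteq P}\Bigl( \bw_{p \in c} p \,\wedge\, \bw_{p \in P \setminus c} \neg p \,\wedge\, \widehat{\Delta(a,c)} \Bigr),
\]
where $\widehat{\Delta(a,c)}$ is obtained from $\Delta(a,c) \in \MLLa^{1}(A)$ by replacing each state-variable $b \in A$ with $x_b$; this substitution is well-defined because states occur only positively in one-step formulas. Enumerate $A = \{a_1,\dots,a_n\}$ in order of decreasing priority and define, recursively on $i$ from $n$ down to $1$,
\[
\phi^{\mathbb{A}}_{a_i} \isdef \eta_i x_{a_i}.\; \phi_{a_i}\bigl[x_{a_j} \mapsto \phi^{\mathbb{A}}_{a_j}\bigr]_{j > i},
\]
where $\eta_i$ is $\nu$ if $\Omega(a_i)$ is even and $\mu$ otherwise. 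The translation of $\mathbb{A}$ is $\phi^{\mathbb{A}}_{a_I}$ (after performing all inner substitutions). Equivalence is shown by exhibiting a natural game bisimulation between the acceptance game of $\mathbb{A}$ and the parity evaluation game of $\phi^{\mathbb{A}}_{a_I}$.

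The main obstacle I anticipate is aligning the parity condition of the automaton with the fixpoint alternation of the translated formula in both directions. The key observation is that in a parity game, infinite plays whose highest infinitely-recurring priority is even (resp.\ odd) are won by $\exists$ (resp.\ $\forall$), mirroring the semantics of $\nu$ (resp.\ $\mu$) fixpoints at the outermost binding level; nesting the fixpoints in order of decreasing priority then guarantees that higher priorities syntactically dominate lower ones exactly as in the acceptance condition. Once this game-theoretic correspondence is carefully set up, preservation of semantics in both directions reduces to a routine but technically involved induction on game positions, and the effectiveness of both constructions is evident from their presentation.
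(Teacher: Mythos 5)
You should first note that the paper itself offers no proof of this Fact: it is imported wholesale from \cite{font:auto10}, so your sketch must be measured against the standard construction in that literature, whose overall architecture (closure-style automaton one way, Walukiewicz/Niwi\'nski equation-solving the other way) you do follow. The genuine gap is in your formula-to-automaton direction: the transition map must take values in $\MLLa^{1}(A)$, and by the paper's grammar a one-step formula is built from $\bot$, $\top$ and $\lambda(\psi_1,\ldots,\psi_n)$ using $\vee$ and $\wedge$, where \emph{only} the arguments $\psi_i$ (positive lattice terms) may mention the state variables. A bare state variable at modal depth zero is not a one-step formula. Your clauses for $a_1 \vee a_2$ (output ``$a_1 \vee a_2$''), for a bound variable $p$ (output the binding fixpoint state), and for $\eta p.\chi$ (output the unfolding as a bare state) all produce formulas outside $\MLLa^{1}(A)$. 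The standard repair is to let $\Delta(a,c)$ be the modal-depth-one \emph{expansion} of $a$: evaluate literals against $c$, distribute through the Booleans, and unfold fixpoints until every branch ends in $\top$, $\bot$, or some $\lambda(\psi_1,\ldots,\psi_n)$, taking the (subformulas of the) arguments as next states. That expansion terminates only for \emph{guarded} formulas, so a guarded normal form step (or, alternatively, a relaxed automaton model admitting depth-zero state variables together with an elimination lemma for such silent moves) is a missing ingredient of substance, not a routine detail; it also perturbs your priority assignment, since several fixpoint states can now be traversed within a single transition and the dominant priority along the unfolding must be recorded.

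In the automaton-to-formula direction your nesting discipline (higher priorities bound further out) is the right one, but the translation as written is not well-defined unless $a_I$ happens to carry the highest priority: since $\phi^{\bbA}_{a_i}$ substitutes away only the variables $x_{a_j}$ with $j > i$, the formula $\phi^{\bbA}_{a_I}$ in general retains free occurrences of $x_{a_k}$ for every higher-priority state $a_k$ appearing in its body. You need the full Gaussian-elimination/Beki\'c bookkeeping, in which the closed solutions of the already-solved components are back-substituted so that \emph{every} state, not merely the top-priority one, receives a closed formula. Both defects are repairable along standard lines, but as stated neither direction yields an object of the required type. Finally, observe that your argument never invokes the hypothesis that $\La$ is closed under Boolean duals, whereas the cited proof does rely on it; an unused hypothesis is usually a sign that some step of the correctness argument has been elided, and you should locate where the dual liftings enter before trusting the sketch.
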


\section{Coalgebraic $\MSO$}
\label{sec:mso}

%\subsection{Syntax and semantics}
We now introduce coalgebraic monadic second-order logic for a set functor $\fun$
and a set of liftings $\La$ and show how $\MSO$ can be recovered as a special 
case.
We define the syntax of the monadic second-order logic $\MSO_\fun$ by the 
following grammar:
\[
\varphi \isbnf
  \bot \divbnf  \sr(p) \divbnf p \subseteq q  
  \divbnf \lambda(p,q_1,..,q_n) \divbnf
  \varphi \vee \varphi \divbnf \neg \varphi \divbnf \exists p. \varphi
\]
where $\lambda$ is any $n$-place monotone predicate lifting and $p,q,q_1,...,q_n \in Var$. 
More generally, restricting to a set $\La$ of monotone liftings for $\fun$,
we define the sublanguage $\MSOLa \subseteq \MSOT$ by the same grammar 
except that we require the liftings to be in $\La$.

For the semantics, let $(\mathbb{S},s)$ be a pointed  $\fun$-model. 
We define the satisfaction relation ${\vDash} \subseteq S \times 
\MSO_\fun$ as follows:
\begin{itemize}
\item $(\mathbb{S},u) \vDash \sr(p)$ iff $V(p) = \{u\}$,
\item $(\mathbb{S},u)\vDash p \subseteq q$ iff  $V(p) \subseteq V(q)$,
\item $(\mathbb{S},u) \vDash \lambda(p, q_1,...,q_n)$ iff $\sigma (v)\in 
   \lambda_S(V(q_1),..,V(q_n))$ for all $ v \in V(p)$,
\item standard clauses for the Boolean connectives
\item $(\mathbb{S},u) \vDash \exists p .\varphi$ iff
   $(S,\sigma,V[p \mapsto Z],u)\vDash  \varphi$, some $Z \subseteq S$.
\end{itemize}

\noindent
We introduce the following abbreviations:
\begin{itemize}
\item $p = q$ for $p \subseteq q \wedge q \subseteq p$,
\item $\Em(p)$ for $\forall q. (q \subseteq p \rightarrow q = p)$,
\item $\Sing(p)$ for $\neg \Em (p) \wedge 
   \forall q (q \subseteq p \rightarrow (em(q) \vee q = p))$
%\item $\Dis(p,q)$ for $\forall Z . (Z \subseteq p \wedge Z \subseteq q 
%   \rightarrow \Em(Z))$
%\item $\Comp(p,q)$ for $\Dis(p,q) \wedge \forall Z. (\Dis(Z,q) 
%   \rightarrow Z \subseteq p))$
\end{itemize}
expressing, respectively, that $p$ and $q$ are equal, that $p$ denotes the
empty set, and that $p$ denotes a singleton.
%, that $p$ and $q$ are disjoint sets, and that $p$ and $q$ are each other's 
%complement.

Clearly, standard $\MSO$ is the logic $\MSO_{\{\Diamond\}}$, where $\Diamond$
is the predicate lifting corresponding to the usual diamond modality over 
Kripke models. 
Obviously then, $\MSO_\psf$ contains $\MSO$. 
In order to see that the languages are in fact equivalent in expressive power,
we need the notion of \emph{expressive completeness}, which plays an important 
role in this paper.

\begin{defi}
A set of monotone liftings $\La$ for a set functor $\fun$ is said to be 
\textit{expressively complete} if, for every finite set of variables $A$ and 
every monotone predicate lifting $\lambda : \cvp(-)^A \to \cvp \circ \fun $, 
there exists a formula $\varphi \in \MLLa^1(A)$ such that, for every one-step
model $(X,\alpha,V)$ with $V : A \to \cvp (X)$, we have
$$
(X,\alpha,V)\vDash_1 \varphi \text{ iff } \alpha \in \lambda_X(V).
$$
\end{defi}

If $\La$ is expressively complete, then clearly $\muMLLa$ is equivalent in
expressive power to the full language $\muMLT$. 
It is not much harder to show that, under the same conditions, $\MSOLa$ is
equivalent in expressive power to the full language $\MSO_\fun$. 
Furthermore, expressive completeness can often be obtained fairly easily if
we make use of an application of the Yoneda lemma to represent $n$-place
predicate liftings as subsets of $\fun (2^n)$, a method developed 
in~\cite{schr:expr08}. 
In particular, since the liftings $ \{\Box,\Diamond\}$ for $\psf$ are 
expressively complete and $\Box$ is clearly definable in $\MSO_{\{\Diamond\}}$,
one can show that $\MSO = \MSO_{\{\Diamond\}}$ is equivalent in expressive 
power to the full coalgebraic logic $\MSO_{\psf}$. 
Furthermore, $\muML_{\psf}$ is equivalent to $\muML_{\{\Box,\Diamond\}}$. 
As a second example, involving the monotone neighborhood functor $\mon$, let 
$\Box$ here be the predicate lifting defined by $\alpha \in \Box_X(Z)$ iff 
$Z \in \alpha$, and let $\Diamond$ be its dual.
Then  the language $\MSO_\mon$ is equivalent to $\MSO_{\{\Box,\Diamond\}}$, and 
also $\muML_{\mon}$ is equivalent to $\mu \ML_{\{\Box,\Diamond\}}$.

Finally, as mentioned in the introduction, the key question in this paper will 
be to compare the expressive power of coalgebraic monadic second-order logic
to that of the coalgebraic $\mu$-calculus.
The following observation, of which the (routine) proof is omitted, provides
the easy part of the link.

\begin{prop}
\label{p:mu-to-mso}
Let $\La$ be a set of monotone predicate lfitings for the set functor $\fun$.
There is an inductively defined translation $(\cdot)^{\diamond}$ mapping any
formula $\phi \in \muMLLa$ to an equivalent formula $\phi^{\diamond}\in \MSOLa$.
\end{prop}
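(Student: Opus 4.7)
The plan is to define the translation by structural induction via an auxiliary translation $\tau$ that converts each $\phi \in \muMLLa$ together with a designated variable $x$ into an $\MSOLa$-formula $\tau(\phi,x)$, with the intended reading that $\tau(\phi,x)$ holds under a valuation assigning $V(x) = \{s\}$ exactly when $\phi$ holds at $s$. The translation sought by the proposition is then $\phi^{\diamond} \isdef \exists x.(\sr(x) \wedge \tau(\phi,x))$, and its correctness follows from a structural induction showing
\[
(\bbS,s) \vDash \phi \iff (\bbS, V[x \mapsto \{s\}], s) \vDash \tau(\phi,x).
\]

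For the propositional atoms I set $\tau(p,x) \isdef x \subseteq p$ and $\tau(\neg p, x) \isdef \neg(x \subseteq p)$, keep $\bot$ fixed, and let booleans distribute. For a predicate lifting I introduce fresh second-order variables $q_1,\ldots,q_n$ intended to name the truth sets of the arguments:
\[
\tau(\lambda(\phi_1,\ldots,\phi_n),x) \isdef \exists q_1\ldots\exists q_n\Bigl(\lambda(x,q_1,\ldots,q_n) \wedge \bw_{i=1}^{n}\forall y\bigl((\Sing(y) \wedge y \subseteq q_i) \to \tau(\phi_i,y)\bigr)\Bigr).
\]
One direction instantiates $q_i$ by the truth set $\Vert \phi_i\Vert$ and uses the MSO semantic clause for $\lambda(x,q_1,\ldots,q_n)$ together with $V(x) = \{s\}$ to obtain $\sigma(s) \in \lambda_S(\Vert\phi_1\Vert,\ldots,\Vert\phi_n\Vert)$; the converse uses monotonicity of $\lambda$ to pass from the inner conjunction, which forces $q_i \subseteq \Vert\phi_i\Vert$ by ranging over singletons $y$, to the same conclusion.

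For the fixpoint operators I encode the Knaster--Tarski characterization directly with second-order quantification:
\[
\tau(\mu p.\phi,x) \isdef \forall p\,\Bigl(\forall y\bigl((\Sing(y) \wedge \tau(\phi,y)) \to y \subseteq p\bigr) \to x \subseteq p\Bigr),
\]
and dually $\tau(\nu p.\phi, x) \isdef \exists p.\bigl(x \subseteq p \wedge \forall y((\Sing(y) \wedge y \subseteq p) \to \tau(\phi,y))\bigr)$. Correctness reduces, via the induction hypothesis applied pointwise at singletons $y$, to the standard fact that $s \in \mu p.\phi$ iff $s$ lies in every prefixed point of the operator $Z \mapsto \Vert\phi\Vert_{V[p\mapsto Z]}$, and dually for $\nu$.

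The main obstacle is purely bureaucratic: the bound variables $q_i$, $y$, $p$ introduced during the translation must be chosen fresh to avoid variable capture, which can be arranged by a preliminary $\alpha$-renaming of the source formula. A subsidiary point is that the syntactic positivity of $p$ in $\mu p.\phi$ is inherited by $\tau(\phi,y)$, in the sense that $\tau(\phi,y)$ depends monotonically on the valuation of $p$; this is an easy simultaneous induction, and it is needed to validate the Knaster--Tarski step. Everything else is a direct unfolding of the MSO and coalgebraic $\mu$-calculus semantics recalled in section~\ref{sec:mso}.
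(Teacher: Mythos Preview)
Your proposal is correct and is precisely the standard translation of a modal $\mu$-calculus into monadic second-order logic; the paper itself omits the proof as ``routine'', so there is nothing to compare against beyond noting that your sketch is the expected argument.

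One small remark: your claim that monotonicity of $\tau(\phi,y)$ in $p$ is ``needed to validate the Knaster--Tarski step'' is slightly overstated. The paper defines the semantics of $\mu p.\phi$ directly as $\bigcap\{Z \subseteq S \mid \phi_p(Z) \subseteq Z\}$, and your clause $\tau(\mu p.\phi,x)$ encodes exactly this intersection of prefixed points; correctness then follows from the induction hypothesis alone, with no appeal to Knaster--Tarski or to monotonicity in $p$. (By contrast, monotonicity of the liftings $\lambda \in \La$ is genuinely used in the modal clause, as you correctly observe.)
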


\section{Automata for coalgebraic $\MSO$}
\label{sec:aut}

In this section, we introduce automata for coalgebraic monadic second-order 
logic.

\subsection{A general perspective on parity automata}

Standard monadic second-order formulas can be translated to equivalent automata
over \textit{trees}, but this equivalence is not guaranteed to extend to
arbitrary Kripke models. 
In the case of general coalgebra, we should expect having to introduce a 
coalgebraic concept of ``tree-like'' models.

\begin{defi}
Given a set $S$ and $\alpha \in \fun S$, a subset $X \subseteq S$ is said to
be a \textit{support} for $\alpha$ if there is some $\beta \in \fun X$ with 
$\fun \iota_{X,S}(\beta) = \alpha$. A \textit{supporting Kripke frame} for a 
$\fun$-coalgebra $(S,\sigma)$ is a binary relation $R \subseteq S \times S$ 
such that, for all $u \in S$, $R(u) = \{v \mid u R v\}$ is a support for 
$\sigma(u)$. 
\end{defi}

\begin{defi}
A $\fun$-\textit{tree model} is a structure $(\mathbb{S},R,u)$ where 
$\mathbb{S} = (S,\sigma,V)$ is a $\fun$-model and $u \in S$, such that $R$ is 
a supporting Kripke frame for the coalgebra $(S,\sigma)$, and furthermore 
$(S,R)$ is a tree rooted at $u$, so that there is a unique $R$-path from $u$ 
to $w$ for each $w \in S$.
\end{defi}

Our goal is to translate formulas in $\MSOT$ to equivalent automata over 
$\fun$-tree models.
We start by introducing a very general type of automaton, originating 
with~\cite{vene:expr14}. 

\begin{defi}
Given a finite set $A$, a \textit{generalized predicate lifting} over $A$ 
comprises an assignment of a map
$$\varphi_X : (\cvp X)^A  \rightarrow \cvp \fun X.
$$
to every set $X$.
Concepts like \emph{Boolean dual} and \emph{monotonicity} apply to these 
liftings in the obvious way.
\end{defi}

The difference with respect to standard predicate liftings is that the 
components of a generalized predicate lifting do not need to form a natural 
transformation.\footnote{%
   In the style of abstract logic, it would make sense to require a
   general predicate lifting to be natural with respect to certain 
   maps, in particular, bijections. 
   For the purpose of this paper such a restriction is not needed, 
   however.
   }

\begin{defi}
A \textit{one-step language} $\mathlang{L}$ consists of a collection 
$\mathlang{L}(A)$ of generalized predicate liftings for every finite set $A$.  
The semantics of a generalized predicate lifting $\varphi$ in a one-step model 
$(X,\alpha,V)$ is given by 
$$
(X,\alpha,V)\vDash_1 \varphi \textit{ iff } \alpha \in \varphi_X(V).
$$
\end{defi}

Our automata will be indexed by a (finite) set of variables involved, 
corresponding to the set of free variables of the $\MSOT$-formula.

\begin{defi}
Let $P \subseteq \Var$ be a finite set of variables and let $\mathlang{L}$ be a 
one-step language for functor $\fun$. 
A \textit{($P$-chromatic) $\mathlang{L}$-automaton} is a structure 
$(A,\Delta,\Omega,a_I)$ where
\begin{itemize}
\item $A$ is a finite set, with $a_I \in A$,
\item $\Omega : A \rightarrow \omega$ is a parity map, and
\item $\Delta : A \times \psf(P) \rightarrow \mathlang{L}(A) $ is the 
  transition map of $\bbA$.
\end{itemize}
\end{defi}

The \textit{acceptance game} of $\mathbb{A}$ with respect to a $\fun$-tree model 
$(T,R,\sigma,V,u)$ is given by Table~\ref{table:accgame}.
We say that the automaton $\mathbb{A}$ accepts the model $(T,R,\sigma,V,u)$ 
if $\exists$ has a winning strategy in this game (initialized at position
$(a_{I},u)$).

\begin{table*}[ht]
{\normalsize
\centering
\begin{tabular}{|l|c|l|c|}
\hline
Position & Player & Admissible moves & Parity \\
\hline
    $(a,s) \in A \times T$
  & $\exists$
  & $\{U : A \to \psf(R(s)) \mid  $ &  \\
& &  $(R(s),\sigma(s), U) 
               \vDash_{1} \Delta(a,V^{\dag}(s))  \}$
  & $\Omega(a)$
\\
    $U : A \rightarrow \psf(T)$
  & $\forall$
  & $\{(b,t) \mid t \in U(b) \}$
  & $0$
\\ \hline
   \end{tabular}
   \caption{\small Acceptance game for parity automata.}
\label{table:accgame}
}
\end{table*}

\subsection{Closure properties}

This abstract level is useful for establishing some simple closure properties 
of automata, based on properties of the one-step language. 
The first, easy, results establish sufficient conditions for closure under 
union and complementation.

\begin{prop}
\label{closureunion}
If the one-step language $\mathlang{L}$ is closed under disjunction,
then the class of $\mathlang{L}$-automata is closed under union.
\end{prop}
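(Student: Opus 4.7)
The plan is the standard disjoint-union-plus-fresh-initial-state construction. Given two $P$-chromatic $\mathlang{L}$-automata $\bbA_{i} = (A_{i}, \Delta_{i}, \Omega_{i}, a_{I,i})$, $i\in\{1,2\}$, with $A_{1}, A_{2}$ chosen disjoint, I set $A := A_{1} \uplus A_{2} \uplus \{a_{I}\}$ with $a_{I}$ a fresh state, let $\Omega$ agree with $\Omega_{i}$ on $A_{i}$, and choose $\Omega(a_{I})$ arbitrarily (say $0$). For each old state $a \in A_{i}$ and colour $c \in \psf(P)$, I set $\Delta(a,c)$ to be $\Delta_{i}(a,c)$, regarded as an element of $\mathlang{L}(A)$ by the evident extension of the generalized lifting (i.e.\ simply ignore the coordinates of any $U : A \to \psf(X)$ indexed by $A \setminus A_{i}$). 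At the fresh initial state, I use closure of $\mathlang{L}$ under disjunction to define
\[
\Delta(a_{I}, c) \isdef \Delta_{1}(a_{I,1}, c) \vee \Delta_{2}(a_{I,2}, c),
\]
where both disjuncts are viewed in $\mathlang{L}(A)$.

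The remaining step is to verify that, for every $\fun$-tree model $(\bbS, R, u)$, the composite $\bbA$ accepts $(\bbS, u)$ iff $\bbA_{1}$ or $\bbA_{2}$ does. For the easy direction, suppose $\exists$ has a winning strategy $f_{i}$ in the acceptance game for $\bbA_{i}$ initialised at $(a_{I,i}, u)$. In the $\bbA$-game at $(a_{I}, u)$ she picks a valuation $U : A \to \psf(R(u))$ that witnesses the $i$-th disjunct of $\Delta(a_{I}, V^{\dagger}(u))$, namely the $f_{i}$-move at $(a_{I,i}, u)$ extended by the empty set on $A \setminus A_{i}$. Since $A_{1}, A_{2}, \{a_{I}\}$ are disjoint and transitions out of an $A_{i}$-state use only $A_{i}$-marks, every subsequent basic position of the $\bbA$-match lies in $A_{i} \times S$, and $\exists$ continues to mimic $f_{i}$. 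The induced play in $\bbA_{i}$'s game has the same sequence of priorities from round two onwards, and a single $\Omega(a_{I})$-labelled opening round cannot affect the parity condition, so $\exists$ wins. The converse direction is symmetric: a winning $\exists$-strategy in the $\bbA$-game at $(a_{I}, u)$ must commit, at the opening move, to one of the two disjuncts of $\Delta(a_{I}, V^{\dagger}(u))$, and by disjointness of the carriers the remainder of the match is exactly a play of the corresponding $\bbA_{i}$-game, from which a winning $\exists$-strategy for $\bbA_{i}$ is read off directly.

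I do not expect a serious obstacle here. The only point requiring minor care is the \emph{semantic inertness} of the extension of a lifting from $\mathlang{L}(A_{i})$ to $\mathlang{L}(A)$: for every one-step model $(X,\alpha,U)$ one has $(X,\alpha,U) \vDash_{1} \Delta_{i}(a,c)$ iff $(X,\alpha,U') \vDash_{1} \Delta_{i}(a,c)$ whenever $U$ and $U'$ agree on $A_{i}$, which is immediate from the definition of the extension. Beyond this, the proof is pure bookkeeping of strategies; the closure hypothesis is used exactly once, in order to form the transition formula at the new initial state.
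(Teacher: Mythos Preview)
Your construction and argument are correct; this is the standard disjoint-union-with-fresh-initial-state approach. The paper itself states this proposition without proof, flagging it as one of the ``easy'' closure results, so there is no argument to compare against.

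One small expository point on your converse direction: the winning opening move $U$ of $\exists$ in the $\bbA$-game need not be empty on $A\setminus A_{i}$ merely because it satisfies the $i$-th disjunct, so it is not literally true that ``the remainder of the match is exactly a play of the corresponding $\bbA_{i}$-game''. The fix is already implicit in your semantic-inertness remark: replace $U$ by its $A_{i}$-restriction (still admissible, and only shrinking $\forall$'s options), after which the correspondence with an $\bbA_{i}$-match is exact. This is bookkeeping rather than a gap.
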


\begin{prop}
\label{closurecomplementation}
If the monotone fragment of the one-step language $\mathlang{L}$ is closed under
Boolean duals, then the class of $\mathlang{L}$-automata is closed under 
complementation.
\end{prop}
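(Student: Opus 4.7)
The plan is to follow the standard complementation recipe for parity automata, carried out at the one-step level: define the complement $\bbA^{c}$ by dualizing every transition formula and shifting every priority up by one, and then verify equivalence via a shadow-match argument. Concretely, given $\bbA = (A, \Delta, \Omega, a_{I})$ (whose transition map takes monotone values, as is the case for all the $\mathlang{L}$-automata that we actually care about), I would set $\bbA^{c} \isdef (A, \Delta^{c}, \Omega^{c}, a_{I})$ with $\Omega^{c}(a) \isdef \Omega(a) + 1$ and $\Delta^{c}(a, c) \isdef (\Delta(a, c))^{d}$ for every $a \in A$ and $c \in \psf(P)$. The hypothesis that the monotone fragment of $\mathlang{L}$ is closed under Boolean duals then guarantees that $\Delta^{c}$ again lands in $\mathlang{L}(A)$, so that $\bbA^{c}$ is a well-defined $\mathlang{L}$-automaton.

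The technical engine is a one-step Boolean duality lemma, which I would prove first: for any monotone generalized predicate lifting $\varphi$ over $A$ and any one-step model $(X, \alpha, V)$, one has $(X, \alpha, V) \vDash_{1} \varphi$ if and only if every $W : A \to \psf(X)$ with $(X, \alpha, W) \vDash_{1} \varphi^{d}$ \emph{meets} $V$, in the sense that $V(a) \cap W(a) \neq \emptyset$ for some $a \in A$. Both directions are one-line calculations from the definition $\varphi^{d}_{X}(W) = \fun X \setminus \varphi_{X}(X \setminus W)$: if $V$ and $W$ were pointwise disjoint, then $V(a) \subseteq X \setminus W(a)$ for each $a$, and monotonicity of $\varphi$ would give $\varphi_{X}(V) \subseteq \varphi_{X}(X \setminus W)$, contradicting $\alpha \in \varphi^{d}_{X}(W)$; conversely, if $(X, \alpha, V) \not\vDash_{1} \varphi$, then $W \isdef X \setminus V$ witnesses the right-hand side.

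With the lemma in hand, I would show, for any pointed $\fun$-tree model $(\bbS, s)$, that $\exists$ wins the acceptance game $\mathcal{A}(\bbA^{c}, \bbS)$ from $(a_{I}, s)$ iff $\forall$ wins $\mathcal{A}(\bbA, \bbS)$ from $(a_{I}, s)$; by positional determinacy of parity games this is exactly what it means for $\bbA^{c}$ to accept precisely those pointed $\fun$-tree models that $\bbA$ rejects. The transformation is a shadow-match argument. Given a positional winning $\forall$-strategy $g$ in $\mathcal{A}(\bbA, \bbS)$, at each position $(a, s)$ in $\mathcal{A}(\bbA^{c}, \bbS)$ let $\exists$ play the valuation $U^{c}$ which gathers all of $g$'s possible responses: $t \in U^{c}(b)$ iff there is some legal $\exists$-move $U$ in $\bbA$ at $(a, s)$ with $g(U) = (b, t)$. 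The lemma, applied in the direction $(\Rightarrow)$, shows that $U^{c}$ is a legal $\exists$-move in $\bbA^{c}$, because for any legal $U$ in $\bbA$ the pair $g(U) = (b, t)$ witnesses $t \in U(b) \cap U^{c}(b)$. Whenever $\forall$ then replies with $(b, t)$ satisfying $t \in U^{c}(b)$, by construction there is a legal shadow move $U$ with $g(U) = (b, t)$, so the shadow match in $\bbA$ can be continued consistently with $g$. The opposite direction, producing a $\forall$-strategy in $\mathcal{A}(\bbA, \bbS)$ from a positional $\exists$-strategy in $\mathcal{A}(\bbA^{c}, \bbS)$, is entirely symmetric, using the lemma with the roles of $\varphi$ and $\varphi^{d}$ swapped and the fact that $\varphi^{dd} = \varphi$.

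The parity shift $\Omega^{c} = \Omega + 1$ handles the winning condition: the shadow and the real match visit exactly the same sequence of automaton states from $A$, so a $\forall$-win in $\mathcal{A}(\bbA, \bbS)$ corresponds to an odd maximum priority seen infinitely often under $\Omega$, which becomes an even maximum under $\Omega^{c}$, i.e.\ an $\exists$-win in $\mathcal{A}(\bbA^{c}, \bbS)$. I expect the main obstacle to lie in verifying that the $U^{c}$ built from $g$ really is a legal move in the dualized game; this is precisely where monotonicity together with the closure hypothesis is used, via the one-step Boolean duality lemma.
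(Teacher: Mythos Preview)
Your approach is correct and is precisely the standard complementation argument for parity automata that the paper has in mind; note that the paper does not actually spell out a proof of this proposition, merely calling it one of the ``easy'' closure results. Your one-step Boolean duality lemma and the shadow-match construction are exactly the right ingredients, and your caveat that the argument applies to automata with monotone transition formulas is appropriate --- in the paper's intended application to $\Aut(\SOLa)$, this is harmless in view of Proposition~\ref{p:mon-aut}, which lets one first pass to an equivalent monotone automaton before complementing.
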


The most interesting property concerns closure under existential projection. 
The following terminology is taken from~\cite{jani:expr96}, but instead
of relying on a particular syntactic shape of one-step formulas, we define
the concepts in purely semantic terms.

\begin{defi}
A predicate lifting $\varphi$ over $A$ is said to be \textit{special basic} if,
for every one-step model $(X,\alpha,V)$ such that 
$$(X,\alpha,V)\vDash_1 \varphi$$
there is a valuation $V^* : A \rightarrow \cvp(X)$ such that
\begin{itemize}
\item $V^\ast(a) \subseteq V(a)$ for each $a \in A$,
\item $V^\ast(a) \cap V^\ast(b) = \emptyset$ whenever $a \neq b$, and
\item $(X,\alpha,V^*) \vDash_1 \varphi$. 
\end{itemize}
Call an $\mathlang{L}$-automaton \textit{non-deterministic} if every lifting 
$\Delta(a,c)$ is special basic.
\end{defi}

It is easy to see that if the language $\mathlang{L}$ is closed under 
disjunctions, then so is its fragment of special basic liftings.
From this we obtain the following.

\begin{prop}
\label{existentialclosure}
If the one-step language $\mathlang{L}$ is closed under disjunction, then the
class of non-deterministic $\mathlang{L}$-automata is closed under existential
projection over $\fun$-tree models. 
\end{prop}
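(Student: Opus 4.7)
The plan is as follows. Suppose $\bbA = (A, \Delta, \Omega, a_I)$ is a non-deterministic $\mathlang{L}$-automaton that is $(P \cup \{p\})$-chromatic for some variable $p \notin P$. I will construct a $P$-chromatic non-deterministic $\mathlang{L}$-automaton $\bbA'$ such that, for every pointed $\fun$-tree model $(\bbS,s)$,
\[
\bbA' \text{ accepts } (\bbS,s) \iff \bbA \text{ accepts } (\bbS[p \mapsto Z], s) \text{ for some } Z \subseteq S.
\]
The construction is the obvious one: let $\bbA' \isdef (A, \Delta', \Omega, a_I)$, where for $a \in A$ and $c \subseteq P$ we set $\Delta'(a,c) \isdef \Delta(a,c) \vee \Delta(a, c \cup \{p\})$. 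Closure of $\mathlang{L}$ under disjunction ensures $\Delta'(a,c) \in \mathlang{L}(A)$, and a disjunction of special basic liftings is again special basic (any valuation $V^{\ast}$ witnessing one disjunct witnesses their disjunction), so $\bbA'$ is non-deterministic.

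The soundness direction ($\Leftarrow$) is routine. Fix $Z$ and a winning strategy for $\exists$ in the $\bbA$-game on $(\bbS[p \mapsto Z], s)$. At each reached position $(a,t)$, observe that $V[p \mapsto Z]^{\dagger}(t)$ equals either $V^{\dagger}(t)$ or $V^{\dagger}(t) \cup \{p\}$, so $\Delta(a, V[p \mapsto Z]^{\dagger}(t))$ is literally one of the two disjuncts of $\Delta'(a, V^{\dagger}(t))$. The same valuation $U$ played by $\exists$ in the $\bbA$-game therefore witnesses that disjunct, hence satisfies $\Delta'(a, V^{\dagger}(t))$; since parities and subsequent $\forall$-moves are unchanged, this yields a winning $\exists$-strategy in the $\bbA'$-game on $(\bbS,s)$.

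The main obstacle lies in the completeness direction ($\Rightarrow$), which uses both non-determinism and the tree shape crucially. Fix a winning strategy $f$ for $\exists$ in the $\bbA'$-game on $(\bbS,s)$; by positional determinacy of parity games we may assume $f$ is positional. Since each $\Delta'(a,c)$ is special basic, $f$ may be further refined so that the valuation $U : A \to \psf(R(t))$ it prescribes at every reached basic position $(a,t)$ has pairwise disjoint values $U(b)$. The tree property provides each non-root node $t$ with a unique predecessor $t'$, and positionality of $f$ means that $\exists$'s prescribed valuation at $(a(t'), t')$ is determined by $(a(t'), t')$ alone; disjointness then forces at most one $b \in A$ to satisfy $t \in U(b)$. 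An easy induction on the depth of $t$ therefore assigns to every reached node $t$ a unique automaton state $a(t)$, with $a(s) \isdef a_I$ at the root. The winning $U$ that $f$ plays at $(a(t), t)$ satisfies one of the two disjuncts of $\Delta'(a(t), V^{\dagger}(t))$; we then set
\[
Z \isdef \{\, t \in S \mid t \text{ is reached under } f \text{ and $f$ witnesses } \Delta(a(t), V^{\dagger}(t) \cup \{p\}) \text{ at } (a(t),t) \,\}
\]
(the choice on unreached nodes is irrelevant). By construction, at every reached $(a(t), t)$ the lifting $\Delta(a(t), V[p \mapsto Z]^{\dagger}(t))$ is exactly the disjunct of $\Delta'(a(t), V^{\dagger}(t))$ witnessed by $f$, so $f$ transfers verbatim to an $\exists$-strategy in the $\bbA$-game on $(\bbS[p \mapsto Z], s)$ with matching plays and parities, and is therefore winning.
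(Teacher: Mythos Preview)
Your proof is correct and follows essentially the same route as the paper's: the same disjunctive construction $\Delta'(a,c) = \Delta(a,c) \vee \Delta(a,c\cup\{p\})$, the same easy soundness direction, and for completeness the same passage to a positional winning strategy, refinement to disjoint-valued moves via the special-basic property, an induction on tree depth yielding a unique automaton state per reached node (what the paper calls a \emph{scattered} strategy), and the definition of $Z$ according to which disjunct is witnessed at each node. The level of detail and the ideas match the paper's proof closely.
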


\begin{proof}
Suppose $\mathbb{A} = (A,\Delta,a_I,\Omega)$ is a non-deterministic
$\mathlang{L}$-automaton for the variable set $P$. 
Define the $P\setminus q$-chromatic automaton 
$\exists q.\mathbb{\mathbb{A}} = (A,\Delta^*,a_I,\Omega)$ by setting
$$
\Delta^*(a,c) = \Delta(a,c) \vee \Delta(a,c\cup\{q\}).
$$
It is easy to see that every $\fun$-tree model accepted by $\mathbb{A}$ is 
also accepted by $\exists p. \mathbb{A}$.
Conversely, suppose $\exists p.\mathbb{A}$ accepts some $\fun$-tree 
model $(S,R,\sigma,V,s_I)$. For each winning position $(a,s)$ in the 
acceptance game, let $V_{(a,s)}$ be the valuation chosen by $\exists$ 
according to some given winning strategy $\chi$. Note that we can assume that $\chi$ is a \textit{positional} winning strategy, since $\exists p. \mathbb{A}$ is a parity automaton.
It is not difficult to see that the 
automaton $\exists p .\mathbb{A}$ is a non-deterministic automaton, 
and so for each winning position $(a,s)$ there is a valuation $V_{(a,s)}^* : 
A \rightarrow \psf(R(s))$, which is an admissible move for $\exists$, such that 
$V_{(a,s)}^*(b) \subseteq V_{(a,s)}(b)$ and such that for all $b_1 \neq b_2 
\in A$ we have $V_{(a,s)}^*(b_1) \cap V_{(a,s)}^*(b_2) = \emptyset$.
Define the strategy $\chi^*$ by letting $\exists$ choose the valuation 
$V_{(a,s)}^*$ at each winning position $(a,s)$  - this is still a winning 
strategy, since the valuations chosen by $\exists$ are smaller and so no new 
choices for $\forall$ are introduced. Furthermore, $\chi^*$  is clearly still a positional winning strategy.

From these facts follow by a simple induction on the height of the nodes in the supporting
tree that the strategy $\chi^*$ is \textit{scattered}, i.e. that for every $s \in S$
there is at most one automaton state $a$ such that $(a,s)$ appears in a 
$\chi^*$-guided match of the acceptance game. 
So we can define a valuation $V^\prime$ like $V$ except we evaluate $q$ to be
true at all and only the states $s$ such that 
$$
(R(s),\sigma(s), V^*_{(a_s,s)})\vDash_1 \Delta(a_s,c \cup \{q\}),
$$
where $a_s$ is a necessarily \emph{unique} automaton state such that $(a,s)$
appears in some $\chi^*$-guided match, and $c$ is the color consisting of the
variables true under $V$ at $s$.
It is not hard to show that $\mathbb{A}$ accepts $(S,R,\sigma,V',s_I)$.
\end{proof}

\subsection{Second-order automata}

We now introduce a more concrete one-step language for a given set functor 
$\fun$ and a given set of (natural) liftings $\Lambda$, and show that 
$\MSOLa$ can be translated into the corresponding class of automata. 

Let $\Lambda$ be a set of monotone predicate liftings for $\fun$.
The set  of \textit{second-order one-step formulas} over any
set of variables $A$ and relative to the set of liftings $\Lambda$
is defined by the grammar:
$$\varphi \isbnf a \subseteq b \divbnf \lambda(a_1,...,a_n) \divbnf \neg \varphi 
   \divbnf \varphi \lor \varphi % \divbnf \varphi \land \varphi
   \divbnf \exists a.\varphi,
$$
where $a,b,a_1,...,a_n \in A$ and $\lambda$ is any predicate lifting in 
$\Lambda$. 
Fixing an infinite set of ``one-step variables'' $Var_1$, and given a finite set
$A$, the set of \textit{second-order one-step sentences} over $A$,
denoted $\SOLa^{1}(A)$, is the set of one-step formulas over $A \cup Var_1$, 
with all free variables belonging to $A$.
We write $\SOT^{1}(A)$ when $\La$ comprises all monotone liftings for $\fun$.

The semantics of a one-step second-order $A$-formula in a one-step model 
$(X,\alpha,V)$ (with $V : A \to \psf(X)$ is defined by the following clauses:
\begin{itemize}
\item $(X,\alpha,V)\vDash_1 p \subseteq q$ iff $V(p) \subseteq V(q)$,
\item $(X,\alpha,V)\vDash_1 \lambda (p_1,...,p_n)$ iff $\alpha \in 
    \lambda_X(V(p_1),...,V(p_n))$,
\item standard clauses for the Boolean connectives,
\item $(X,\alpha,V)\vDash_1 \exists p. \varphi$ iff 
    $(X,\alpha,V[p\mapsto S])\vDash_1 \varphi$ for some $S \subseteq X$.
\end{itemize}

Any one-step second-order $A$-sentence $\phi$ can be regarded as a generalized 
predicate lifting over $A$, with
$$
\varphi_X(V) = \{\alpha \in \fun X \mid (X,\alpha,V) \vDash_1 \varphi\}.
$$
Note that the syntax of $\SOT^{1}$ allows negations, implying that not all 
these predicate liftings are monotone.

\begin{defi}
Let $\Lambda$ be a set of monotone predicate liftings for $\fun$.
A \textit{second-order $\Lambda$-automaton} is an $\mathlang{L}$-automaton for
$\mathlang{L}$ being the assignment of the one-step second-order $A$-sentences 
$\SOLa^{1}(A)$ to every set of variables $A$. 
We write $\Aut(\SOLa)$ to denote this class, and $\Aut(\SOT)$ in case $\La$ is
the set of \emph{all} monotone predicate liftings for $\fun$.
\end{defi}

Our aim is to prove that every formula of $\MSOLa$ can be
translated into an equivalent second-order $\Lambda$-automaton (over rooted 
$\fun$-tree models), and the main problem here is to obtain closure under 
existential projection.

The key to this step is a simulation theorem. 
First, a useful trick due to Walukiewicz \cite{walu:mona96} allows us to 
transform any second-order automaton into one in which all the one-step formulas are monotone, when regarded as generalized predicate liftings. We call such an automaton a \textit{monotone} automaton.

\begin{prop}
\label{p:mon-aut}
Let $\La$ be any set of monotone predicate liftings.
Every automaton $\bbA \in \Aut(\SOLa)$ is equivalent to a monotone second-order
$\bbA \in \Aut(\SOLa)$.
\end{prop}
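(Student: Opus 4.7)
The plan is to apply Walukiewicz's monotonization trick, adapted to the second-order one-step language $\SOLa^{1}$. Given $\bbA = (A, \Delta, \Omega, a_I) \in \Aut(\SOLa)$, the construction produces an equivalent monotone automaton $\bbA'$ by doubling the state set to $A' = A \cup \bar{A}$, where $\bar{A} = \{\bar{a} \mid a \in A\}$ consists of a ``shadow'' state for each state of $A$; intuitively, $\bar{a}$ encodes $\exists$'s commitment to the rejection of the original game at $a$. I will assume without loss of generality that $\La$ is closed under Boolean duals (this is consistent with the standing hypothesis of Fact~\ref{coalgebraicmu}).

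For the transition function, I first put $\Delta(a,c)$ into negation normal form, and then rewrite each remaining atomic negation positively using shadow variables: every $\neg(b \subseteq b')$ is replaced by $\exists x\,(\Sing(x) \wedge x \subseteq b \wedge x \subseteq \bar{b'})$, and every $\neg \lambda(b_1,\ldots,b_n)$ by $\lambda^{d}(\bar{b_1},\ldots,\bar{b_n})$. To this positive rewriting I conjoin the ``cover'' constraint $\bigwedge_{b \in A} \forall x\,(\Sing(x) \to x \subseteq b \vee x \subseteq \bar{b})$, which is itself monotone and which forces $U(b) \cup U(\bar{b}) = X$ in any one-step model. The resulting formula $\Delta'(a,c)$ defines a monotone generalized predicate lifting over $A'$. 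For shadow states, I let $\Delta'(\bar{a}, c)$ be the analogous monotonization of the De Morgan dual $\neg \Delta(a,c)$, and set $\Omega'(a) = \Omega(a)$, $\Omega'(\bar{a}) = \Omega(a) + 1$ so as to invert the parity at shadow positions.

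For equivalence, the forward direction is routine: from a positional winning strategy $\chi$ for $\exists$ in the acceptance game of $\bbA$, I extend each chosen valuation $U : A \to \psf(R(s))$ by setting $U'(\bar{b}) := R(s) \setminus U(b)$. The cover constraint is then immediate, and by the semantic design of the rewriting, $U'$ witnesses $\Delta'(a, V^{\dagger}(s))$; shadow positions are handled dually. The backward direction is the crux: given a winning strategy $\chi'$ for $\exists$ in the game of $\bbA'$, I argue that at every reachable position $(a,s)$ the chosen valuation $U'$ must satisfy $U'(\bar{b}) = R(s) \setminus U'(b)$ for each $b$. Suppose otherwise, so $t \in U'(b) \cap U'(\bar{b})$ for some $b$: then $\forall$ may continue into either $(b,t)$ or $(\bar{b},t)$. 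By the De Morgan dualization of $\Delta'(\bar{b},\cdot)$ together with the parity inversion, the subgames rooted at these two positions are duals of one another as parity games, so by determinacy $\exists$ cannot win both, contradicting $\chi'$. With consistency forced, $U'|_{A}$ supplies an admissible winning move in the acceptance game of $\bbA$.

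The principal obstacle is making the backward argument rigorous: one must carefully verify that, at every depth of the game tree, the subgames at $(b,t)$ and $(\bar{b},t)$ genuinely present as De Morgan duals, so that parity determinacy delivers the desired contradiction. This involves bookkeeping for the parity inversion across shadow layers and confirming that the monotonization commutes with De Morgan duality uniformly, both at the level of the one-step formulas and at the level of the global winning conditions.
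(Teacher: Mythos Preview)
Your approach is correct in spirit but dramatically more complicated than what the proposition requires, and it contains a small technical gap.

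The paper's proof is a one-liner that exploits a feature of the one-step language you seem to have overlooked: $\SOLa^{1}$ already contains \emph{second-order existential quantification}. Given this, monotonization is trivial. Enumerate $A = \{a_1,\ldots,a_k\}$ and replace each transition formula $\Delta(a,c)$ by
\[
\exists Z_1 \ldots \exists Z_k.\ Z_1 \subseteq a_1 \wedge \ldots \wedge Z_k \subseteq a_k \wedge \Delta(a,c)[Z_i/a_i].
\]
This formula is monotone in each $a_i$ since the $a_i$ occur only as upper bounds on the existentially quantified $Z_i$. Equivalence of the two automata is immediate from the shape of the acceptance game: $\exists$ chooses the valuation $U$, so if smaller witnesses $Z_i \subseteq U(a_i)$ satisfy the original formula, she may as well have played $a_i \mapsto Z_i$ to begin with, and $\forall$ then has no additional responses. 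No state-doubling, no dualization, no parity bookkeeping.

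Your construction is essentially the technique one uses when the one-step language \emph{lacks} set quantification (as in the modal one-step language $\ML_\Lambda^1$), where one genuinely must encode complements via shadow states. Here it is unnecessary machinery. Moreover, your rewriting has a gap: you handle the negated atom $\neg(b \subseteq b')$, but the \emph{un-negated} atom $b \subseteq b'$ is already antitone in $b$, so merely pushing negations inward does not yield a monotone formula. You would also need to rewrite $b \subseteq b'$ as, say, $\forall x\,(\Sing(x) \to x \subseteq \bar{b} \vee x \subseteq b')$. This is easily repaired, but it illustrates that the bookkeeping is heavier than you indicate. The determinacy-based backward argument you sketch is plausible but, as you yourself note, requires nontrivial verification that the $(b,t)$ and $(\bar{b},t)$ subgames are genuinely dual throughout; the paper's approach sidesteps all of this.
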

\begin{proof}
Enumerate $A$ as $\{a_1,...,a_k\}$, and just replace each formula $\Delta(a,c)$ 
by
$$
\exists Z_1 ... \exists Z_k. Z_1 \subseteq a_1 \wedge ... \wedge 
  Z_k \subseteq a_k \wedge \Delta(a,c)[Z_i / a_i]
$$
where $\Delta(a,c)[Z_i/a_i]$ is the result of substituting the variable
$Z_i$ for each open variable $a_i$ in $\Delta(a,c)$. 
This new formula is monotone in the variables $A$ and the resulting automaton
is equivalent to $\mathbb{A}$.
\end{proof}

The intuition behind the simulation theorem is the same as that behind the 
standard ``powerset construction'' for word automata: the states of the new 
non-deterministic automaton $\mathbb{A}_n$ are ``macro-states'' representing
several possible states of $\mathbb{A}$ at once. 
Formally, the states of $\mathbb{A}_n$ will be binary relations over $A$, and 
given a macro-state $R$, its range gives an exact description of the states in
$\mathbb{A}$ that are currently being visited simultaneously. It is safe to 
think of the macro-states as subsets of $A$, however: the only reason that we
have binary relations over $A$ as states rather than just subsets is to have 
a memory device so that we can keep track of traces in infinite matches. 
For each macro-state $R$ and each colour $c$ we want to be able to say that the 
one-step formulas corresponding to each state in the range of $R$ hold, so we
want to translate the one-step formulas over $A$ into one-step formulas over 
the set of macro-states. In order to translate a formula $\Delta(a,c)$ to a 
new one-step formula with macro-states as variables, we have to replace the 
variable $b$ in $\Delta(a,c)$ with a new variable that acts as a stand-in for
$b$. For this purpose we introduce a new, existentally quantified variable 
$Z_b$, together with a formula stating explicitly that $Z_b$ is to represent
the union of the values of all those macro states that contain $b$. 
Furthermore we want all the one-step formulas to be special basic, and for
this purpose we simply add a conjunct ``$\mathsf{disj}$'' to each one-step
formula, stating that the values of any pair of distinct variables appearing
in the formula are to be disjoint.  Finally, in order to turn $\mathbb{A}_n$
into a parity automaton, we use a stream automaton to detect bad traces (see
for instance \cite{vene:lect12} for the details in a more specific case).

\begin{theo}[Simulation]
\label{simulationtheorem}
Let $\La$ be a set of monotone predicate liftings for $\fun$.
For any monotone automaton $\mathbb{A} \in \Aut({\SOLa})$ there exists an equivalent
non-deterministic $\mathbb{A}'\in \Aut({\SOLa})$.
\end{theo}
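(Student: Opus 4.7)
The plan is to carry out the standard powerset-style simulation construction for parity automata, adapted to the coalgebraic second-order one-step language, following the roadmap sketched in the paragraph preceding the theorem. Starting from a monotone automaton $\bbA = (A,\Delta,\Omega,a_I)$, I would take the carrier of $\bbA'$ to be the set of binary relations $R \subseteq A \times A$, called macro-states; the range of $R$ encodes the set of $\bbA$-states that are simultaneously active at the corresponding node of the tree model, while the source coordinate provides enough bookkeeping to track the evolution of individual states from one round to the next. The initial state is $\{(a_I,a_I)\}$.

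The transition map $\Delta'$ is built by rewriting, for each macro-state $R$ with range $B$ and each colour $c \in \psf(P)$, the conjunction $\bigwedge_{b \in B} \Delta(b,c)$ into a new one-step second-order sentence over the variable set $\psf(A \times A)$. Each free $a \in A$ is replaced by a fresh one-step variable $Z_a$, and existential quantifiers binding the $Z_a$ are prefixed to the formula; the quantifiers come with defining conjuncts asserting that $Z_a$ is the union of those macro-state variables $S$ whose range contains $a$. A final conjunct $\mathsf{disj}$ forces the interpretations of distinct macro-state variables to be pairwise disjoint, which guarantees that $\Delta'(R,c)$ is special basic and hence that $\bbA'$ is non-deterministic. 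To manage the parity condition, I would compose this macro-state automaton with a deterministic parity stream automaton $\mathbb{Z}$ over the alphabet $\psf(A \times A)$ that inspects traces through the macro-states and accepts exactly those sequences in which every trace has an even maximum recurring $\Omega$-value; because $\mathbb{Z}$ is deterministic, its synchronous product with the macro-state automaton remains non-deterministic and lives in $\Aut(\SOLa)$, since $\mathbb{Z}$ acts only on the finite state component and introduces no new liftings.

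For the equivalence $\bbA \equiv \bbA'$ on $\fun$-tree models, a positional winning strategy $\chi$ for $\exists$ in the $\bbA$-acceptance game induces, at each tree node $t$, the set $B_t \subseteq A$ of automaton states reached along $\chi$-matches, together with witnessing valuations $V_{(b,t)}$ for each $b \in B_t$. Here monotonicity of the one-step formulas $\Delta(b,c)$ is decisive: exactly as in the proof of Proposition~\ref{existentialclosure}, one may shrink the $V_{(b,t)}$ to pairwise disjoint subvaluations $V^{*}_{(b,t)}$ that still satisfy $\Delta(b,V^{\dagger}(t))$. These glue into a legal $\exists$-move against $\Delta'(R_t, V^{\dagger}(t))$, and acceptance by $\chi$ ensures acceptance by the trace component $\mathbb{Z}$. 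The converse direction is immediate: a positional winning strategy for $\bbA'$ is unravelled by projecting each visited macro-state to its range, and the witnessing sub-valuations chosen at each round give back admissible moves in the $\bbA$-game, with the trace condition imposed by $\mathbb{Z}$ ensuring the resulting match is won by $\exists$.

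The main obstacle I anticipate is the formal bookkeeping around the translation $\Delta \mapsto \Delta'$: one must check that the defining conjuncts for the $Z_a$ are themselves expressible in the one-step second-order language (they are, using only boolean combinations of inclusions $p \subseteq q$ between macro-state variables and the $Z_a$) and that the resulting sentence is genuinely special basic in the sense required by Proposition~\ref{existentialclosure}. Equivalence of $\bigwedge_{b \in B} \Delta(b,c)$ with its macro-state translation hinges on monotonicity of the $\Delta(b,c)$'s, which is precisely why Proposition~\ref{p:mon-aut} is needed as a preprocessing step; without it, shrinking valuations to disjoint ones could invalidate the truth of $\Delta(b,c)$. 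The trace-automaton construction and the verification of the product parity condition are by now standard, but care must be taken to confirm that no liftings outside $\La$ are introduced, so that $\bbA'$ indeed remains in $\Aut(\SOLa)$.
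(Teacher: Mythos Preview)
Your overall architecture matches the paper's: macro-states as binary relations on $A$, a $\mathsf{disj}$ conjunct to force special-basicness, and a product with a deterministic parity stream automaton detecting bad traces. Two technical points are off, however, and both matter.

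First, the defining clause for the substituted variable $Z_a$ must depend on the \emph{source} state $b$ currently being processed: in the conjunct corresponding to $\Delta(b,c)$, the paper sets $Z_a$ to the union of those macro-state variables $B'$ with $(b,a)\in B'$, not merely those with $a$ in their range. Without this dependence the link between a state $b$ in the current macro-state and its successor $a$ in the next one is lost; a play of the $\bbA$-game recovered from an $\bbA'$-strategy need then no longer be a \emph{trace} through the macro-state stream, and the verdict of the stream automaton says nothing about it. This is precisely the bookkeeping you allude to when you say the source coordinate tracks evolution of states --- it must actually be used in the one-step formula.

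Second, you have the role of monotonicity reversed in the direction $\bbA \Rightarrow \bbA'$. The formulas $\Delta(b,c)$ are only monotone, not special basic, so shrinking the $V_{(b,t)}$ could destroy satisfaction; the analogy with Proposition~\ref{existentialclosure} (where the one-step formulas \emph{are} special basic by assumption) does not apply. The paper does no shrinking at all here: it assigns to each child $s'$ the single macro-state $\{(b,a)\mid b\in \pi_2[R_t],\ s'\in V_{(b,t)}(a)\}$, which yields a disjoint macro-state valuation by construction. Monotonicity is then invoked in the \emph{upward} direction: for each $b$, the union of the values of the macro-states $B'$ with $(b,a)\in B'$ \emph{contains} $V_{(b,t)}(a)$, so $\Delta(b,c)$ still holds under this larger assignment. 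Your closing remark that ``without [monotonicity], shrinking valuations to disjoint ones could invalidate the truth of $\Delta(b,c)$'' gets this exactly backwards.
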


Given a set $A$, we consider the set $\psf(A \times A)$ as a set of variables.
Let
$$\mathsf{disj} : = \bigwedge_{B \neq B' \subseteq A\times A} \forall X . (X \subseteq B \wedge X \subseteq B') \rightarrow \mathtt{Em}(X)$$
Pick a fresh variable $Z_a$ for each $a \in A$. Given a 1-step formula $\varphi$, let
$$\varphi[Z_a/ a]$$ be the result of substituting $Z_a$ for each free variable $a \in A$ in $\varphi$.
If we enumerate the elements of $A$ as $a_1,...,a_k$, we now define the formula $\varphi^{\uparrow b}$ for $b \in A$ to be
\begin{displaymath}
\begin{array}{lcl}
 & \exists Z_{a_1}...\exists Z_{a_k}. \bigwedge_{1 \leq i \leq k}  ( \mathsf{Eq}(Z_{a_i}  \bigcup\{B^\prime \mid (b,a_i) \in B^\prime \})\; \wedge \\
& \varphi[Z_a/ a] )
\end{array}
\end{displaymath}
where  $\mathsf{Eq}(Z_{a_i}  \bigcup\{B^\prime \mid (b,a_i) \in B^\prime \})$ is a formula asserting that the value of the variable $Z_{a_i}$ is the union of the values of all variables $B^\prime$ with $(b,a_i)\in B^\prime$.

Let $\mathbb{A} = (A,\Delta,a_I,\Omega)$ be any monadic $\Lambda$-automaton. We can assume w.l.o.g. that $\mathbb{A}$ is monotone. We first construct the automaton $\mathbb{A}_n = (A_n,\Delta_n,a_I^*,F)$ with a non-parity acceptance condition $F \subseteq (A_n)^\omega$ as follows:
\begin{itemize}
\item $A_n = \psf(A \times A)$
\item $\Delta_n(B,c) = \mathsf{disj} \wedge  \bigwedge_{b \in \pi_2[B]}\Delta(b,c)^{\uparrow b}$
\item $a_I^\ast = \{(a_I,a_I)\}$
\item $F$ is the set of streams over $\psf(A \times A)$ with no bad traces.
\end{itemize}
Here, $\pi_2$ is the second projection of a relation $B$ so that $\pi_2[B]$ denotes the range of $B$. A \textit{trace} in a stream $(B_1,B_2,B_3,...)$ over $\psf(A \times A)$ is a stream $(a_1,a_2,a_3,...)$ over $A$ with $a_1 \in \pi_2[B_1]$ and $(a_{j-1},a_j) \in B_j$ for $j > 1$. A trace is \textit{bad} if the greatest number $n$ with $\Omega(a_i) = n$ for infinitely many $a_i$ is odd.

The new automaton $\mathbb{A}_n$ is clearly special basic, because of the $\mathsf{disj}$-formulas.

\begin{lemma}
$\mathbb{A}_n$ is equivalent to $\mathbb{A}$, provided that $\mathbb{A}$ is monotone
\end{lemma}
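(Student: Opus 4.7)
The plan is to prove the equivalence by translating winning strategies between the two acceptance games, keeping track of how the macro-states of $\mathbb{A}_n$ encode bundles of simultaneously active $\mathbb{A}$-states together with their immediate provenance.

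For the direction $\mathbb{A}$ accepts $\Rightarrow \mathbb{A}_n$ accepts, I would fix a positional winning strategy $\chi$ for $\exists$ in the $\mathbb{A}$-game and construct a strategy $\chi_n$ for $\exists$ in the $\mathbb{A}_n$-game by maintaining the invariant that at position $(B,s)$ every $b\in\pi_{2}[B]$ is a winning position $(b,s)$ for $\chi$. Writing $U_{b}:A\to\psf(R(s))$ for the valuation prescribed by $\chi$ at $(b,s)$ and letting $c=V^{\dagger}(s)$, I assign to each successor $t\in R(s)$ the \emph{unique} macro-state $B'_{t}=\{(b,a)\mid b\in\pi_{2}[B],\ t\in U_{b}(a)\}$ and define $U(B'):=\{t\mid B'_{t}=B'\}$. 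This partition immediately satisfies $\mathsf{disj}$; moreover, setting $Z_{a}:=U_{b}(a)$ witnesses the existential quantifiers in $\Delta(b,c)^{\uparrow b}$, since by construction $U_{b}(a)=\bigcup\{U(B')\mid (b,a)\in B'\}$, which is exactly what the $\mathsf{Eq}$-conjuncts demand, and then $(R(s),\sigma(s),U_{b})\vDash_{1}\Delta(b,c)$ gives $(R(s),\sigma(s),U)\vDash_{1}\Delta(b,c)^{\uparrow b}$ after the substitution.

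For the converse, I would reverse the process: from a positional winning strategy $\chi_{n}$ for $\exists$ in $\mathbb{A}_n$, define a strategy for $\exists$ in $\mathbb{A}$ by simultaneously simulating a match in $\mathbb{A}_n$. At a position $(a,s)$, while tracking a corresponding position $(B,s)$ with $a\in\pi_{2}[B]$, let $U$ be the valuation $\chi_{n}$ plays and set $U'(a'):=\bigcup\{U(B')\mid (a,a')\in B'\}$. Because $(R(s),\sigma(s),U)\vDash_{1}\Delta(a,c)^{\uparrow a}$, the witnesses $Z_{a'}$ must coincide with $U'(a')$ by the $\mathsf{Eq}$-clauses, so $(R(s),\sigma(s),U')\vDash_{1}\Delta(a,c)$; here monotonicity of $\Delta(a,c)$ in its variables, guaranteed by Proposition~\ref{p:mon-aut}, is what lets us equate the satisfying values of the $Z_{a'}$ with the full unions. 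When $\forall$ answers with $(a',t)$, there exists some $B'$ with $t\in U(B')$ and $(a,a')\in B'$, and we continue the simulated $\mathbb{A}_n$-match at $(B',t)$.

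The main obstacle is to handle the infinite matches, which now requires relating the parity condition of $\mathbb{A}$ to the ``no bad trace'' acceptance condition of $\mathbb{A}_n$. In the first direction, any infinite $\chi_{n}$-guided match produces a stream $(B_{1},B_{2},\ldots)$ of macro-states, and every trace $(a_{1},a_{2},\ldots)$ through it unravels, by the choices above, into an infinite $\chi$-guided match $(a_{i},s_{i})$ in the $\mathbb{A}$-game; since $\chi$ is winning, the parity condition for $\Omega$ rules out bad traces, so $\chi_{n}$ is winning. In the converse direction, the single trace extracted while simulating $\mathbb{A}_n$ furnishes precisely the sequence of $\mathbb{A}$-states visited in the induced $\mathbb{A}$-match; since $\chi_{n}$ wins and that trace is not bad, the highest priority visited infinitely often is even, so $\exists$ wins the $\mathbb{A}$-match. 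The only delicate point is well-definedness of the simulation: one must check that carrying the full relation $B$ rather than just $\pi_{2}[B]$ is enough to choose, positionally, a continuation macro-state for each $\forall$-move, which is exactly why the states of $\mathbb{A}_n$ were taken to be binary relations on $A$ in the first place.
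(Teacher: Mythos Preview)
Your approach is the same as the paper's: translate strategies in both directions, using the partition $t\mapsto B'_t$ in the forward direction and extracting $U'(a')=\bigcup\{U(B')\mid (a,a')\in B'\}$ in the backward direction, and relate infinite plays in $\mathbb{A}$ to traces through the macro-state stream.

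There is one genuine slip. In the direction $\mathbb{A}_n\Rightarrow\mathbb{A}$ you start from a \emph{positional} winning strategy $\chi_n$ for $\exists$ in the $\mathbb{A}_n$-game and then describe the simulation as ``tracking a corresponding position $(B,s)$''. But $\mathbb{A}_n$ is \emph{not} a parity automaton: its acceptance condition is the $\omega$-regular ``no bad trace'' condition $F$, so positional determinacy is not available and in general $\chi_n$ will depend on the full history. The paper is explicit about this (``the strategy $\chi$ is not necessarily positional, since the acceptance game is not a parity game'') and therefore carries out the backward simulation by an induction on the length of the partial $\mathbb{A}$-match, constructing in parallel a $\chi$-guided partial $\mathbb{A}_n$-match of the same length with the required trace property. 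Your shadow-match idea is the right one, but you must record the whole $\mathbb{A}_n$-match (or at least finite memory sufficient for the $\omega$-regular condition), not just the current macro-position; as written, your construction is underspecified at exactly this point.

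A minor remark on monotonicity: you locate its use in the backward direction, but there the $\mathsf{Eq}$-conjuncts already force the witnesses $Z_{a'}$ to equal the unions exactly, so nothing further is needed. The paper invokes monotonicity in the \emph{forward} direction, to pass from $\chi(a',s)(a_i)$ to the (a priori larger) set $\bigcup\{\chi^*(B,s)(B')\mid (a',a_i)\in B'\}$; in fact your equality observation $U_b(a)=\bigcup\{U(B')\mid (b,a)\in B'\}$ shows these coincide, so the appeal is harmless either way.
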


\begin{proof}
Fix a pointed $\fun$-tree model $(\mathbb{S},R,s_I)$ where $\mathbb{S} = (S,\sigma,V)$. We want to show that $\mathbb{A}$ accepts $(\mathbb{S},R,s_I)$ if and only if $\mathbb{A}_n$ does. That is, we want to show that the languages $L(\mathbb{A})$ and $L(\mathbb{A}_n)$ defined by these two automata are the same.

\subsubsection*{First part: $L(\mathbb{A}) \subseteq L(\mathbb{A}_n)$}

Suppose first that $\mathbb{A}$ accepts $(\mathbb{S},R,s_I)$. Let $\chi$ be a positional winning strategy for $\exists$ in the acceptance game, mapping each winning position $(a,s)$ to a valuation $U : A \rightarrow \cvp(R(s))$ such that
$$(R(s),\sigma(s),U) \vDash_1 \Delta(a,V^\dagger(s))$$
Such a strategy exists since $\mathbb{A}$ is a parity automaton, and so the acceptance game is a parity game.
We define the winning strategy $\chi^*$ for $\exists$ in the acceptance game for $\mathbb{A}_n$ as follows: given a position $(B,s)$, define the function $f_{B,s} : R(s) \rightarrow \psf(A \times A)$ by setting
$$f_{B,s} (s^\prime) = \{(a,b) \mid a \in \pi_2[B] \;~\&~\;s^\prime \in \chi(a,s)(b)\}$$
At the position $(B,s)$, let $\exists$ choose the following valuation $\chi^*(B,s)$, defined by:
$$\chi^*(B,s)(B^\prime) = \{s^\prime \in R(s) \mid f_{B,s}(s^\prime) = B^\prime\}$$
Our first claim is that, for each position of the form $(B,s)$ where each $(b,s)$ for $b \in \pi_2[B]$ appears in some $\chi$-coherent match of the acceptance game for $\mathbb{A}$ with start position $(a_I,s_I)$, the move for $\exists$ given by the strategy $\chi^*$ is legal. To prove this claim we need to check that, for each position $(B,s)$, we have
$$(R(s),\sigma(s),\chi^*(B,s))\vDash_1 \Delta_n(B,V^\dagger(s))$$
provided each $(b,s)$ for $b\in \pi_2[B]$ appears in some $\chi$-coherent match. First, the formula $\mathsf{disj}$ is true since the marking $\chi^*(B,s)$ is the inverse of a mapping from $R(s)$ to $\psf(A \times A)$. We now have to check that,
for each $a^\prime \in \pi_2[B]$ we have
  $$(R(s),\sigma(s),\chi^*(B,s)) \vDash_1 \Delta(a^\prime,V^\dagger(s))^{\uparrow a^\prime} $$
	We need to find sets $S_{a_1},...,S_{a_k} \subseteq R(s)$ such that $(R(s),\sigma(s),\chi^*(B,s))$ with the assignment $Z_{a_i} \mapsto S_{a_i}$ satisfies the formula
 \begin{displaymath}
\begin{array}{lcl}
 \bigwedge_{1 \leq i \leq k}  \mathsf{Eq} (Z_{a_i}, \bigcup\{B' \mid (a^\prime,a_i) \in B' \} )\; &  \wedge & \\
 \Delta(a^\prime,V^\dagger(s))[Z_a/a] & &
\end{array}
\end{displaymath}
Since $\chi$ gives a legal move at the position $(a^\prime,s)$ for each $a' \in \pi_2[B]$, each one-step model of the form $(R(s),\sigma(s),\chi(a^\prime,s))$ satisfies the formula $\Delta(a^\prime,V^\dagger(s))$.
 Hence, if we assign to each variable $Z_{a_i}$ the set $\chi(a^\prime,s)(a_i)$, then this variable assignment satisfies the formula
$$\Delta(a^\prime,V^\dagger(s))[a \mapsto Z_a \mid a \in A]$$
Since the formula $\Delta(a^\prime,s)$ is monotone in all the variables $A$, the same is true for any larger assignment. So
it now suffices to prove that
$$\chi(a^\prime,s)(a_i) \subseteq \bigcup \{\chi^*(B,s)(B^\prime) \mid (a^\prime,a_i) \in B^\prime\}$$
since we can then safely take
$$S_{a_i} = \bigcup \{\chi^*(B,s)(B^\prime) \mid (a^\prime,a_i) \in B^\prime\}$$
To prove this inclusion, suppose
$s^\prime \in \chi(a^\prime,s)(a_i)$. Let $B^\prime$ be the relation defined by
$$ (d,d^\prime) \in B^\prime \Leftrightarrow d \in \pi_2[B] \;\& \; s^\prime \in \chi(d,s)(d^\prime)$$
Clearly, $(a^\prime,a_i) \in B^\prime$. Moreover,  $f_{B,s}(s^\prime) = B^\prime$ by definition, and so $s^\prime \in \chi^*(B,s)(B^\prime)$ as required.

We now show that any $\chi^*$-coherent match with start position $(a_I^*,s_I)$ is winning for $\exists$. We have to prove two things: first, that $\exists$ never gets stuck in a $\chi^*$-coherent match, and second, that $\exists$ wins every infinite $\chi^*$-coherent match, i.e. no infinite $\chi^*$-coherent match contains a bad trace.

First we show that $\exists$ never gets stuck. For this to be the case, all we need to show that if $(B,s)$ is the last position of some $\chi^*$-coherent partial match, then all the positions $(a,s)$ for $a \in \pi_2[B]$ are winning positions for the strategy $\chi$ - by our previous claim this guarantees the move $\chi^*(B,s)$ to be legal. We prove by induction on the length of a finite partial match that this holds for the last position of the match: it holds for $(\{(a_I,a_I)\},s_I)$, clearly, since $\chi$ is a winning strategy at $(a_I,s_I)$. Suppose that the induction hypothesis holds for a finite match with last position $(B,s)$. Let $(B',s')$ be any position such that $s' \in \chi^*(B,s)(B')$. Then
$$B' = f_{B,s}(s^\prime) = \{(a,b) \mid a \in \pi_2[B] \;~\text{and}~ \;s^\prime \in \chi(a,s)(b)\}$$
So suppose $b\in \pi_2[B']$. Then there is some $a$ with $(a,b) \in B^\prime$, and we must have $a \in \pi_2[B]$ and $s^\prime \in \chi(a,s)(b)$. But since the position $(a,s)$ is winning by the inductive hypothesis, this means that $(b,s^\prime)$ is a winning position for $\chi$, and we are done.

We now show that $\exists$ wins every infinite $\chi^*$-coherent match. For this, it suffices to show that every trace $(a_1,a_2,a_3,...)$ in a $\chi^*$-coherent infinite match
$$(B_1,s_1),(B_2,s_2),(B_3,s_3),...$$
corresponds to a $\chi$-coherent match
$$(a_I,s_I) = (a_1,s_1),(a_2,s_2),(a_3,s_3),...$$
So fix a trace $(a_1,a_2,s_3,...)$ meaning that for each $a_i$ we have $(a_i,a_{i + 1}) \in B_{i + 1}$. We have to show that $s_{i + 1} \in \chi(a_i,s_i)(a_{i + 1})$ for each $i$. We have
$$s^{i + 1} \in \chi^*(B_{i},s_i)(B_{i + 1})$$ meaning that $B_{i + 1}$ is equal to
$$ f_{B_i,s_i}(s_{i + 1}) = \{(a,b) \mid a \in \pi_2[B_i] \;\& \;s_{i + 1} \in \chi(a,s_i)(b)\}$$
In particular, since $(a_i,a_{i + 1}) \in B_{i + 1}$, this means we must have $s_{i +1} \in \chi(a_i,s_i)(a_{i + 1})$ as required.

\subsubsection*{Second part: $L(\mathbb{A}_n) \subseteq L(\mathbb{A})$}

Conversely, suppose $\mathbb{A}_n$ accepts $(\mathbb{S},s_I)$ with winning strategy $\chi$. We construct a winning strategy $\chi^*$ for $\exists$ w.r.t $\mathbb{A}$. Note that the strategy $\chi$ is not necessarily positional, since the acceptance game is not a parity game.

By induction on the length of a $\chi^\ast$-coherent partial match
$$M = (a_1,s_1),(a_2,s_2),(a_3,s_3)...(a_{k},s_{k})$$
with $(a_1,s_1) = (a_I,s_I)$,
we are going to define a next legal move $\chi^*(M)$ for $\exists$, and by a simultaneous induction we construct a $\chi$-coherent partial match
$$N = (B_1,s_1),(B_2,s_2),(B_3,s_3)...(B_{k},s_{k})$$
with $(B_1,s_1) = (a_I^*,s_I)$, $a_j \in \pi_2[B_j]$ for each $j$ and $(a_{j- 1},a_{j}) \in B_j$ for each $k \geq j > 1$.
Furthermore we will make sure that whenever a $\chi^*$-coherent match $M$ is an initial segment of a match $M'$, the $\chi$-coherent match associated with $M$ is an initial segment of the $\chi$-coherent match associated with $M'$.  It will follow at once that $\chi^*$ is a winning strategy, since $\exists$ never gets stuck in any $\chi^*$-coherent partial match and, furthermore, every infinite $\chi^*$-coherent match corresponds to a trace in some $\chi$-coherent infinite match.

The base case of the induction is the unique match of length 1 with the single position $(a_I,s_I)$, and we take the corresponding position to be $(a_I^* ,s_I)$. Now, suppose $\chi^* $ has been defined on all matches of length $< k$, and let $M$ be a $\chi^*$-coherent partial match of length $k$ of the form
$$(a_1,s_1),(a_2,s_2),(a_3,s_3)...(a_{k},s_{k})$$
By the inductive hypothesis we have a corresponding $\chi$-coherent match $N$ which we write as
$$(B_1,s_1),(B_2,s_2),(B_3,s_3)...(B_{k},s_{k})$$
with $a_k \in \pi_2[B_k]$. Now we define the next legal move $\chi^*(M)$ for $\exists$, and we show that for every position $(a',s')$ such that $s'\in \chi^*(M)(a')$, we can find a relation $B'$ such that $(a_k,a') \in B'$ and
$$ (B_1,s_1),(B_2,s_2),(B_3,s_3)...(B_{k},s_{k}),(B',s')$$
is a $\chi$-coherent match.

Since  $N$ is a $\chi$-coherent partial match and $\chi$ is a winning strategy for $\exists$, we have that
$$(R(s_k),\sigma(s_k),\chi(N)) \vDash_1 \Delta_n(B_k,V^\dagger(s_k))$$
Since $a_k \in \pi_2[B_k]$,
this means that there exist sets $S_b \subseteq R(s_k)$ for each $b \in A$ such that the 1-step model $(R(s_k),\sigma(s_k),\chi(N))$ satisfies the formula
\begin{displaymath}
\begin{array}{lcl}
\bigwedge_{b \in A}  \mathsf{Eq}(Z_{b}, \bigcup\{B' \mid (a_k,b) \in B' \} )\; & \wedge &  \\
 \Delta(a_k,V^\dagger(s_k))[Z_b / b] & &
\end{array}
\end{displaymath}
under the assignment $Z_b \mapsto S_b$. Hence, the valuation $U$ defined by
$$U : b \mapsto S_b$$
will be such that
$$(R(s_k),\sigma(s_k),U) \vDash_1 \Delta(a_k,V^\dagger(s_k))$$
So we set $\chi^*(M) = U$, a legal move. Note that we have
$$U(b) = \bigcup\{\chi(N)(B') \mid (a_k,b) \in B'\}$$
Now, let $(a',s')$ be such that $s' \in U(a')$. This means that there is some $B'$ with $(a_k,a')\in B'$ and $s' \in \chi(N)(B')$. Hence, $(B',s')$ satisfies the required conditions, and we are done.
\end{proof}
The only thing left to do at this point is to transform this automaton into one that has its acceptance condition given by a parity map. The set of streams over $\psf(A \times A)$ that contain no bad traces w.r.t. the parity map $\Omega$ is an $\omega$-regular stream language, so let
$$\mathbb{Z} = (Z,\delta,z_I,\Omega_z)$$
be a parity stream automaton, $\delta : Z \times \psf(A \times A) \rightarrow Z$, that recognizes this language. We now construct the automaton
$$\mathbb{A}_n \odot \mathbb{Z} = (A_n',\Delta_n',a_I',\Omega_n')$$
as follows:
\begin{itemize}
\item $A_n' = A_n \times Z$
\item $a_I' = (a_I^*,z_I)$
\item $\Omega_n ' (B,z) = \Omega(z)$
\item $\Delta_n'((B,z),c) = \Delta_n(B,s)[ (B',\delta(B,z))/ B']$
\end{itemize}
It is not difficult to check that
$\mathbb{A}_n \odot \mathbb{Z}$ is equivalent to $\mathbb{A}_n$.
Since $\mathbb{A}_n\odot \mathbb{Z}$ is clearly still a non-deterministic automaton, this ends the proof of the simulation theorem.

Combining Proposition \ref{existentialclosure} with 
Theorem~\ref{simulationtheorem},
we easily obtain the following closure property.

\begin{prop}
\label{p:clos-exist}
Let $\La$ be a set of monotone predicate liftings for a set functor $\fun$.
Over $\fun$-tree models, the class of second-order $\Lambda$-automata is closed
under existential projection.
\end{prop}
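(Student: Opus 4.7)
The plan is to reduce the statement to Proposition~\ref{existentialclosure} by chaining the two previous normalization results, namely Proposition~\ref{p:mon-aut} (monotonization) and Theorem~\ref{simulationtheorem} (simulation). Concretely, given an $\SOLa$-automaton $\bbA \in \Aut(\SOLa)$ and a variable $q$ over which we wish to project existentially, I would first apply Proposition~\ref{p:mon-aut} to replace $\bbA$ by an equivalent automaton whose transition formulas are all monotone as generalized predicate liftings. Then I would feed this monotone automaton into Theorem~\ref{simulationtheorem} to obtain an equivalent non-deterministic automaton $\bbA' \in \Aut(\SOLa)$, where ``non-deterministic'' means that each $\Delta'(a,c)$ is special basic in the sense of the definition preceding Proposition~\ref{existentialclosure}.

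Once $\bbA'$ is non-deterministic, the existential projection $\exists q.\bbA'$ is constructed exactly as in the proof of Proposition~\ref{existentialclosure}: the state space and parity map are unchanged, and the new transition map is defined by
\[
\Delta^{*}(a,c) \isdef \Delta'(a,c) \vee \Delta'(a,c \cup \{q\}).
\]
Since the one-step language $\SOLa^{1}$ contains $\vee$ as a primitive, this disjunction is again a formula of $\SOLa^{1}$, so $\exists q.\bbA'$ is a genuine second-order $\La$-automaton. Proposition~\ref{existentialclosure} then directly gives that $\exists q.\bbA'$ accepts exactly those $\fun$-tree models $(\bbS,R,s)$ for which there is some valuation for $q$ under which $\bbA'$, and hence $\bbA$, accepts. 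This witnesses closure under existential projection over $\fun$-tree models.

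I do not expect any genuine obstacle here, since all the nontrivial work has already been done in the Simulation Theorem and in Proposition~\ref{existentialclosure}. The only thing worth checking carefully is that the two preparatory transformations can be composed: Proposition~\ref{p:mon-aut} produces a \emph{monotone} $\SOLa$-automaton, which is precisely the input required by Theorem~\ref{simulationtheorem}, and the output of that theorem is a non-deterministic $\SOLa$-automaton, exactly what Proposition~\ref{existentialclosure} consumes. A brief sentence in the write-up should also note that the tree-model hypothesis is preserved throughout, as neither transformation alters the class of models involved, and that closure under disjunction in $\SOLa^{1}$ is immediate from its grammar.
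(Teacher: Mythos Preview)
Your proposal is correct and matches the paper's own argument, which simply states that the result follows by combining Proposition~\ref{existentialclosure} with Theorem~\ref{simulationtheorem}. You are slightly more explicit in inserting the monotonization step (Proposition~\ref{p:mon-aut}) before invoking the Simulation Theorem, which is indeed required since Theorem~\ref{simulationtheorem} takes a monotone automaton as input.
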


We can now use the closure properties we have established for second-order
automata to give the desired translation of $\MSOT$ into second-order automata.

\begin{prop}
\label{p:automatachar}
For every formula $\varphi\in \MSOLa$ with free variables in $P$, there exists 
a $P$-chromatic automaton $\mathbb{A}_\varphi \in \Aut(\SOLa)$ which is 
equivalent to $\varphi$ over $\fun$-tree models.
\end{prop}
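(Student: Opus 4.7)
The plan is to proceed by structural induction on $\varphi \in \MSOLa$, combining explicit base-case constructions with the three closure properties of the class $\Aut(\SOLa)$ already established.

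\textbf{Base cases.} For each atomic $\MSOLa$-formula I construct a small $\SOLa$-automaton by hand. The formula $\bot$ is handled by a single rejecting state. For $p \subseteq q$, a one-state, even-parity automaton traverses the whole tree, accepting at a vertex iff the local colour $c$ satisfies $p \in c \Rightarrow q \in c$, and forcing every successor back into the same state via the $\SOLa^{1}$-sentence $\forall z.\, z \subseteq a$. For $\sr(p)$ I use a two-state automaton that checks $p \in c$ at the root and then propagates an ``absence-of-$p$'' state through the remainder of the tree. The interesting atomic case is $\lambda(p, q_{1}, \ldots, q_{n})$: I introduce tracking states $a_{q_{i}}^{+}, a_{q_{i}}^{-}$ whose transitions force them to label a vertex precisely when $q_{i}$ is (respectively, is not) in its colour; the main traversing state additionally imposes, at every vertex where $p$ holds, the one-step formula $\lambda(a_{q_{1}}^{+}, \ldots, a_{q_{n}}^{+})$. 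Naturality of $\lambda$ together with the fact that $R(s)$ is a support for $\sigma(s)$ guarantees that this local condition is equivalent to the required global statement $\sigma(s) \in \lambda_{S}(V(q_{1}), \ldots, V(q_{n}))$.

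\textbf{Inductive cases.} For a disjunction $\varphi_{1} \vee \varphi_{2}$, closure under union (Proposition \ref{closureunion}) applies directly, since $\SOLa^{1}$ is trivially closed under $\vee$. For a negation $\neg \varphi$, I first convert $\mathbb{A}_{\varphi}$ to an equivalent monotone automaton using Proposition \ref{p:mon-aut}; the monotone fragment of $\SOLa^{1}$ is closed under Boolean duals, because set-complementation of a variable $a$ inside $X$ is expressible in $\SOLa^{1}$ by existentially quantifying a fresh variable $z$ and stipulating $z \cap a = \emptyset$ and $z \cup a = X$ via the standard singleton trick. Proposition \ref{closurecomplementation} then yields the desired automaton. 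For an existential quantifier $\exists p.\, \varphi$, the inductively given automaton $\mathbb{A}_{\varphi}$ is $(P \cup \{p\})$-chromatic, and Proposition \ref{p:clos-exist}---which internally invokes the Simulation Theorem \ref{simulationtheorem}---produces the desired $P$-chromatic automaton equivalent to $\exists p.\, \varphi$ on $\fun$-tree models.

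\textbf{Main obstacle.} Essentially all of the real technical content has been isolated into the closure lemmas, above all the Simulation Theorem underwriting existential projection, so the induction itself is routine. The point requiring genuine care is the base case for $\lambda(p, q_{1}, \ldots, q_{n})$, where a \emph{global} semantic condition, involving the sets $V(q_{i}) \subseteq S$, must be reduced to a purely \emph{one-step} $\SOLa^{1}$-condition over the successor set $R(s)$. This reduction is exactly where the restriction to $\fun$-tree models becomes essential: it is the supporting-frame property combined with the naturality of $\lambda$ that turns the global check into a local one via the embedding $\iota : R(s) \hookrightarrow S$.
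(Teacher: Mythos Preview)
Your proposal is correct and follows essentially the same route as the paper: a structural induction on $\varphi$, with the inductive steps discharged by Propositions~\ref{closureunion}, \ref{closurecomplementation} and~\ref{p:clos-exist}, and the base cases handled by explicit small automata. The paper in fact leaves the atomic cases entirely to the reader and only remarks that $\SOLa^{1}$ is closed under disjunction and Boolean duals, so your write-up is strictly more detailed; in particular, your explicit use of Proposition~\ref{p:mon-aut} before invoking Proposition~\ref{closurecomplementation}, and your remark that naturality of $\lambda$ together with the support property of $R(s)$ is what makes the $\lambda(p,q_{1},\ldots,q_{n})$ base case go through locally, are exactly the points the paper glosses over.
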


\begin{proof}
Proceeding by a straightforward induction on the complexity of $\phi$, we leave
it to the reader to construct appropriate automata for the atomic formulas.
The inductive cases for disjunction and negation follow by the 
Propositions~\ref{closureunion} and~\ref{closurecomplementation}, together with
the easy observation that the one-step language $\SO_{\La}$ is closed under
disjunction and Boolean duals.
The case of existential quantification is taken care of by 
Proposition~\ref{p:clos-exist}.
\end{proof}

Theorem~\ref{t:automatachar} is immediate from this, as is the following.

\begin{coro}
\label{c:automatachar}
Suppose $\Lambda$ is any set of monotone predicate liftings for $\fun$ such that $\MSOT \equiv \MSO_\Lambda$.
Then for every formula of $\MSOT$, there exists an equivalent 
second-order $\Lambda$-automaton over $\fun$-tree models. In particular, this holds whenever $\Lambda$ is expressively complete.
\end{coro}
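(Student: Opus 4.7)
The plan is to derive this corollary as an essentially immediate consequence of Proposition~\ref{p:automatachar} together with the hypothesis on $\La$. First I would fix an arbitrary $\varphi \in \MSOT$ with free variables in some finite set $P$. Invoking the hypothesis that $\MSOT \equiv \MSO_\La$, I would pick a formula $\varphi' \in \MSO_\La$ equivalent to $\varphi$ over all pointed $\fun$-models (in particular over $\fun$-tree models), with $\varphi'$ having free variables contained in $P$ (one can always pad with dummy variables, or observe that the translation preserves free variables). Then I would apply Proposition~\ref{p:automatachar} to $\varphi'$ to obtain a $P$-chromatic automaton $\mathbb{A}_{\varphi'} \in \Aut(\SOLa)$ equivalent to $\varphi'$, and hence to $\varphi$, over $\fun$-tree models. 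This $\mathbb{A}_{\varphi'}$ is the desired automaton.

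For the ``in particular'' clause, I would verify that expressive completeness of $\La$ implies $\MSOT \equiv \MSO_\La$. This is precisely the remark made in the paragraph following the definition of expressive completeness: if every monotone $n$-place predicate lifting $\lambda$ is equivalent to some one-step formula $\varphi_\lambda \in \ML_\La^1$, then inside any $\MSOT$-formula one can inductively replace each atomic formula $\lambda(p,q_1,\dots,q_n)$ with an $\MSO_\La$-formula that ``applies'' $\varphi_\lambda$ at every point of $V(p)$, using a fresh first-order-style quantification (encoded via $\Sing$) to range over the elements of $V(p)$ and the clause for $\lambda$ in $\MSO_\La$ to evaluate the lifted predicate at such a point. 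Having reduced to the hypothesis of the first part, the second part follows.

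There is no real obstacle here: the corollary is a packaging statement, and the only substantive routine check is the implication from expressive completeness of $\La$ to $\MSOT \equiv \MSO_\La$, which is a standard syntactic induction on $\MSOT$-formulas replacing predicate liftings by their $\ML_\La^1$-definitions. The heavy lifting — closure under union, complementation, and in particular the simulation theorem yielding closure under existential projection — has already been done in Proposition~\ref{p:automatachar}.
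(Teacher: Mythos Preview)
Your proposal is correct and matches the paper's approach: the paper states the corollary as immediate from Proposition~\ref{p:automatachar}, and your argument is precisely the obvious unpacking of that immediacy. The additional justification you give for the ``in particular'' clause (that expressive completeness of $\La$ entails $\MSOT \equiv \MSO_\La$) is exactly the observation the paper makes informally just after the definition of expressive completeness.
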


\section{Bisimulation invariance}
\label{sec:bisinv}

This section continues the program of~\cite{vene:expr14}, making use of the
automata-theoretic translation of $\MSOT$ we have just established. 
The gist of our approach is that, in order to characterize a coalgebraic 
fixpoint logic $\muMLT$ as the bisimulation-invariant fragment of $\MSOT$, it 
suffices to establish a certain type of translation between the corresponding 
one-step languages. 
First we need some definitions.

%% 
 % \begin{defi}
 % An \textit{isomorphism} between two one-step models $(X_1,\alpha_1,V_1)$ and 
 % $(X_2,\alpha_2,V_2)$ is a bijection $i : X_1 \simeq X_2$ such that 
 % $\fun i(\alpha_1) = \alpha_2$ and $V_1^\dagger(u) = V_2^\dagger(i(u))$ for each 
 % $u \in X_1$.
 % \end{defi}
 % 
 % %\begin{observation}
 % Clearly, given any set of natural liftings $\Lambda$ and a set of variables $A$,
 % any two isomorphic one-step models satisfy the same  formulas in the one-step 
 % language $\mathcal{MSO}_\fun^\Lambda(A)$.
 % %\end{observation}
 %%

\begin{defi}
Given sets $X,Y$, a mapping $h : X \rightarrow Y$ and a valuation $V : A 
\rightarrow \cvp (Y)$, we define the valuation $V_{[h]} : A \rightarrow \cvp (X)$ by 
setting $V_{[h]}(b) = h^{-1}[V(b)]$ for each $b \in A$.
\end{defi}

The most important concept that we take from $\cite{vene:expr14}$ is that
of a \textit{uniform translation} (called \textit{uniform correspondence} 
in \cite{vene:expr14}):

\begin{defi}
Given a functor $\fun$, a \textit{uniform construction} $F$ for 
$\fun$ takes each pair $(X,\alpha)$ with $\alpha \in \fun X$ to a tuple consisting
of a set $X_*$, an $\alpha_* \in \fun (X_*) $, 
and a map $h_\alpha :  X_* \rightarrow X $ such that 
$$\fun (h_\alpha)(\alpha_*) = \alpha$$
We say that the second-order one-step language $\SOLa^{1}(A)$ 
\textit{admits uniform translations} if, given any natural number $k$, there 
exists a uniform construction $F$ and an assignment of a monotone (natural)
predicate lifting 
$$\varphi^* : \cvp(-)^A \rightarrow \cvp \circ \fun$$
to each monotone one-step formula $\varphi \in \SOLa^{1}$ 
with free variables $A$ and quantifier depth at most $k$, such that for any 
one-step model $(X,\alpha,V)$, we have 
$$(X,\alpha,V) \vDash_1 \varphi^* \text{ iff } 
  (X_*,\alpha_*,V_{[h_\alpha]})\vDash_1 \varphi.
$$
\end{defi}

\begin{remark}
It is easy to see that every monotone predicate lifting $\lambda : \cvp(-)^A \to \cvp \circ \fun$ is equivalent to an atomic formula of $\MLT^{1}(A)$.
In the following we shall not take care to distinguish between such a monotone 
predicate lifting and the corresponding 
atomic formula.
\end{remark}

\begin{defi}
Any translation $(\cdot)^{*}: \SOLa^{1} \to \MLT^{1}$ induces a construction
on automata, transforming a second-order $\Lambda$-automaton
$\bbA = (A,\Delta,a_I,\Omega)$ into the modal automaton
$\bbA^{*} = (A,\Delta^{*},a_I,\Omega)$, with $\Delta^{*}$ given by 
$\Delta^{*}(a,c) \isdef (\Delta(a,c))^{*}$.
\end{defi}

The proof of the following result closely follows that of the main result 
in \cite{vene:expr14}.
The main difference with \cite{vene:expr14} is that here we need an
``unravelling''-like component.

\begin{prop}
\label{p:unr}
Assume that $\SOLa^{1}$ admits a uniform translation $(\cdot)^{*}$, and let 
$\bbA$ be a second-order $\Lambda$-automaton.
Then for each pointed $\fun$-model $(\bbS,s)$ there is a $\fun$-tree model 
$(\bbT,R,t)$, with a $\fun$-model homomorphism $f$ from $\bbT$ to
$\bbS$, mapping $t$ to $s$, and such that
\[
\bbA \text{ accepts } (\bbT,R,t) \text{ iff }
\bbA^{*} \text{ accepts } (\bbS,s).
\]
Furthermore, given that $\bbS = (S,\sigma,V)$, if the map $h_{\sigma(s)} : S_* \to S$ is surjective, so is $f$.
\end{prop}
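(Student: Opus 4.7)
The plan is to construct $(\bbT, R, t)$ as the pointed $\fun$-tree model obtained by iteratively applying the uniform construction $F$ starting from a root labeled by $s$. Nodes of $\bbT$ are formal paths: the root $t$ is a single token with $f(t) \isdef s$; given any node $n$ with label $f(n) = u$, apply $F$ to $(S, \sigma(u))$ to obtain $(S_*, \sigma(u)_*, h_{\sigma(u)})$, declare the children of $n$ to be the pairs $\{(n, x) \mid x \in S_*\}$, and set $f((n, x)) \isdef h_{\sigma(u)}(x)$. The coalgebra map $\tau$ at $n$ is defined by transporting $\sigma(u)_*$ along the bijection $\iota_n : S_* \to R(n)$ sending $x$ to $(n, x)$, and we set $V_\bbT(p) \isdef f^{-1}[V(p)]$, so that $V_\bbT^{\dagger}(n) = V^{\dagger}(f(n))$. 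Since $f \circ \iota_n = h_{\sigma(u)}$, functoriality gives $\fun f(\tau(n)) = \fun h_{\sigma(u)}(\sigma(u)_*) = \sigma(f(n))$, making $f$ a coalgebra homomorphism; and $(T, R)$ is evidently a tree with $R$ a supporting Kripke frame. Surjectivity of $f$ whenever $h_{\sigma(s)}$ is surjective is immediate, since the labels of the children of the root already cover all of $S$.

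For the acceptance equivalence I first apply Proposition~\ref{p:mon-aut} to assume without loss of generality that $\bbA$ is monotone, so that $\Delta^*(a, c) \isdef \Delta(a, c)^*$ is well-defined. The proof proceeds by transferring winning strategies for $\exists$ between the two acceptance games while maintaining an inductive association between $\bbA^*$-positions $(a, u)$ on $\bbS$ and $\bbA$-positions $(a, n)$ on $\bbT$ with $f(n) = u$. The key point at each round is that the bijection $\iota_n$ carries admissible $\bbT$-valuations $U : A \to \psf(R(n))$ to valuations $U' : A \to \psf(S_*)$ with $(S_*, \sigma(u)_*, U') \vDash_1 \Delta(a, V^{\dagger}(u))$, and by the defining property of the uniform translation, the latter is equivalent to $(S, \sigma(u), W) \vDash_1 \Delta^*(a, V^{\dagger}(u))$ whenever $U' = W_{[h_{\sigma(u)}]}$. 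The easy direction then follows by transferring a winning strategy $\chi$ from $\bbS$ to $\bbT$: at $\bbT$-position $(a, n)$ with $u = f(n)$, play the image of $(\chi(a, u))_{[h_{\sigma(u)}]}$ under $\iota_n$, and check that every $\forall$-response in $\bbT$ projects via $f$ to a legal $\forall$-response in $\bbS$, preserving the automaton parity sequence.

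The main obstacle is the converse direction, since a tree strategy only supplies valuations on the local support $R(n) \cong S_*$ rather than on all of $S$. Given a winning strategy $\chi'$ for $\bbA$ on $\bbT$, at the $\bbS$-position $(a, u)$ associated with the $\bbT$-position $(a, n)$, let $U'$ be the pullback of $\chi'(a, n)$ along $\iota_n$ and define $W(b) \isdef h_{\sigma(u)}[U'(b)]$. Since $U'(b) \subseteq h_{\sigma(u)}^{-1}[h_{\sigma(u)}[U'(b)]] = W_{[h_{\sigma(u)}]}(b)$ and $\Delta(a, V^{\dagger}(u))$ is monotone, the one-step model $(S_*, \sigma(u)_*, W_{[h_{\sigma(u)}]})$ satisfies $\Delta(a, V^{\dagger}(u))$, and hence by the uniform translation $(S, \sigma(u), W) \vDash_1 \Delta^*(a, V^{\dagger}(u))$. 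When $\forall$ answers $(b, v)$ with $v \in W(b)$, pick any $x \in U'(b)$ with $h_{\sigma(u)}(x) = v$ and associate the new $\bbS$-position with the $\bbT$-position $(b, (n, x))$; since the automaton-components of associated matches coincide throughout, the winning condition transfers from $\bbT$ to $\bbS$.
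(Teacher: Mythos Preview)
Your proposal is correct and follows essentially the same route as the paper: the tree $\bbT$ is built exactly as in the paper by iterating the uniform construction along finite paths, and your verification that $f$ is a $\fun$-model morphism is the same computation (your identity $f\circ\iota_n = h_{\sigma(u)}$ is the paper's $\gamma\circ i_{\vec v} = \xi_{\gamma(\vec v)}$). The paper actually leaves the acceptance equivalence (its ``Claim~2'') to the reader, whereas you supply the argument, correctly isolating the one nontrivial point---namely that monotonicity of $\bbA$ (via Proposition~\ref{p:mon-aut}) is needed in the $\bbT\Rightarrow\bbS$ direction to pass from the tree valuation $U'$ to $W_{[h_{\sigma(u)}]}$.
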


\begin{proof}
Consider any given pointed $\fun$-model $(\mathbb{S}_1,s_1)$ where $\mathbb{S}_1 = (S_1,\sigma_1,V_1)$. We are going to construct a $\fun$-tree model $(\mathbb{S}_2,R,s_2)$, $\mathbb{S}_2 = (S_2,\sigma_2,V_2)$, together with a model homomorphism from the underlying pointed $\fun$-model $\mathbb{S}_2$  to $\mathbb{S}_1$ mapping $s_2$ to $s_1$, and such that $\mathbb{A}$ accepts the $\fun$-tree model $(\mathbb{S}_2,s_2)$ if and only if $\mathbb{A}^*$ accepts the pointed $\fun$-model $(\mathbb{S}_1,s_1)$. 

We construct this $\fun$-tree model as follows: for each $u \in S_1$, we define an associated pair $(X_u,\alpha_u)$ as follows: set $X_u = (S_1)_*$ and set $\alpha_u = \sigma_1(u)_*$.
Observe that, by the construction of these one-step models, for each $u \in S_1$, there is a mapping
$$\xi_u : X_u \rightarrow S_1$$
such that
\begin{enumerate}
\item $\fun(\xi_u)(\alpha_u) =  \sigma_1(u)$
\item For each valuation $U : A \rightarrow \cvp (S_1)$, every $u \in S_1$ and every one-step formula $\Delta(a,c)$ appearing in $\mathbb{A}$, we have
$$(S_1,\sigma_1(u),U)\vDash_1 \Delta(a,c) \text{ iff } (X_u,\alpha_u,U_{[\xi_u]})\vDash_1 \Delta^*(a,c)$$
\end{enumerate}
The map $\xi_u$ is given by $h_{\sigma_1(u)}$. We now construct the $\fun$-tree model $(S_2,R,\sigma_2,s_2,V_2)$ as follows: first, consider the set of all non-empty finite (non-empty) tuples $(v_1,...,v_n)$ of elements in 
$$\{s_1\}\cup\bigcup_{u \in S_1} X_u$$
such that $v_1 = s_1$. We define, by induction, for each natural number $n > 0$ a subset $M_n$ of this set, and a mapping $\gamma_n : M_n \rightarrow S_1$, as follows:
\begin{itemize}
\item set $M_1 = \{(s_1)\}$, and define $\gamma_1(s_1) = s_1$.
\item Set $M_{n + 1} = \{\vec{v} * w \mid \vec{v} \in M_n, w \in X_{\gamma_n(\vec{v})}\}$. Define $\gamma_{n + 1} (\vec{v} * w) = \xi_{\gamma_n(\vec{v})}(w)$.
\end{itemize}
Here, we write $\vec{v}*w$ to denote the tuple $(v_1,...,v_n,w)$ if $\vec{v} = (v_1,...,v_n)$. 
Set $S_2 = \bigcup_{n > 0} M_n$, and define $\gamma = \bigcup_{n> 0} \gamma_n$. Define the relation $R \subseteq S_2 \times S_2$ to be
$$\{(\vec{v},\vec{v} * w) \mid \vec{v} \in S_2, w \in X_{\gamma(\vec{v}})\}$$
Note that there is, for every $\vec{v} \in S_2$, a bijection $i_{\vec{v}} : X_{\gamma(\vec{v})} \simeq R[\vec{v}]$ given by $w \mapsto \vec{v} * w$. Note also that, for each $\vec{v} \in S_2$, we have
$$ \gamma \circ i_{\vec{v}} = \xi_{\gamma(\vec{v})}$$

 With this in mind, we define the coalgebra structure $\sigma_2$ by setting
$$\sigma_2(\vec{v}) = T(i_{\vec{v}})(\alpha_{\gamma(\vec{v})})$$
Finally, set $s_2$ to be the unique singleton tuple with sole element $s_1$, and define the valuation $V_2$ by setting $V_2^\dagger(\vec{v}) = V_1^\dagger(\gamma(\vec{v}))$.

Clearly, $(S_2,R,\sigma_2,s_2,V_2)$ is a $\fun$-tree model. Denote the underlying $T$-model by $\mathbb{S}_2$.
We can then prove the following two claims:

\textbf{Claim 1}: The map $\gamma$ is a $T$-model homomorphism from $\mathbb{S}_2$ to $\mathbb{S}_1$.

\textbf{Claim 2}: $\mathbb{A}$ accepts $(S_2,R,\sigma_2,s_2,V_2)$ iff $\mathbb{A}^*$ accepts $\mathbb{S}_1$.

The proof of Claim 2 is left to the reader. We prove the first claim:

The map $\gamma$ clearly respects the truth values of all propositional atoms, and $\gamma(s_2) = s_1$. It suffices to show that $\gamma$ is a coalgebra morphism, i.e. that $\fun \gamma  (\sigma_2(\vec{v})) = \sigma_1 (\gamma(\vec{v}))$ for all $\vec{v}$. Pick any $\vec{v} \in S_2$. We have:
\begin{eqnarray}
\fun\gamma(\sigma_2(\vec{v})) & = & \fun\gamma \circ \fun(i_{\vec{v}})(\alpha_{\gamma(\vec{v})}) \\
& = & \fun(\gamma \circ i_{\vec{v}})(\alpha_{\gamma(\vec{v})}) \\
& = & \fun(\xi_{\gamma(\vec{v})})(\alpha_{\gamma(\vec{v})}) \\
& = & \sigma_1(\gamma(\vec{v}))
\end{eqnarray}
as required. \end{proof}

From this, a routine argument yields the following result.

\begin{theo}[Characterization Theorem 1]
\label{generalcharacterization}
Let $\Lambda$ be an expressively complete set of monotone predicate liftings 
for a set functor $\fun$, and assume that $\SOLa^{}(A)$ (for any set of
variables $A$) admits uniform translations. 
Then $\muMLLa$ is the bisimulation-invariant fragment of $\MSOLa$.
\end{theo}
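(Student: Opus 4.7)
The plan is to establish both inclusions of the equivalence $\muMLLa \equiv \MSOLa/{\simeq}$. The easy direction is that every $\psi \in \muMLLa$ is bisimulation-invariant and translates into $\MSOLa$ by Proposition~\ref{p:mu-to-mso}. For the substantive direction, fix a bisimulation-invariant $\phi \in \MSOLa$ with free variables in some finite set $P$. By Proposition~\ref{p:automatachar}, there is a second-order $\Lambda$-automaton $\bbA_\phi \in \Aut(\SOLa)$ equivalent to $\phi$ on $\fun$-tree models. Let $k$ bound the quantifier depth of the (finitely many) transition formulas $\Delta(a,c)$ of $\bbA_\phi$, and apply the uniform translation hypothesis at depth $k$ to form the associated automaton $\bbA_\phi^{*}$ whose transition map sends $(a,c)$ to the monotone generalized predicate lifting $(\Delta(a,c))^{*}$.

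The crux of the proof is to verify that $\bbA_\phi^{*}$ accepts a pointed $\fun$-model $(\bbS,s)$ if and only if $\bbS, s \vDash \phi$, for arbitrary and not necessarily tree-like models. Given such $(\bbS,s)$, invoke Proposition~\ref{p:unr} to obtain a $\fun$-tree model $(\bbT,R,t)$ together with a $\fun$-model homomorphism $f : \bbT \to \bbS$ sending $t$ to $s$, such that $\bbA_\phi$ accepts $(\bbT,R,t)$ iff $\bbA_\phi^{*}$ accepts $(\bbS,s)$. The existence of $f$ witnesses behavioural equivalence of $(\bbT,t)$ and $(\bbS,s)$, so bisimulation-invariance of $\phi$ yields $\bbS,s \vDash \phi$ iff $\bbT,t \vDash \phi$. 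Chaining this with the equivalence of $\bbA_\phi$ and $\phi$ on the tree model $(\bbT,R,t)$ yields the desired equivalence.

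To conclude, use expressive completeness of $\Lambda$ to replace each generalized lifting appearing in the transition map of $\bbA_\phi^{*}$ by a semantically equivalent one-step formula in $\MLLa^{1}$, producing an equivalent automaton in $\Aut(\MLLa)$. We may assume without loss of generality that $\Lambda$ is closed under Boolean duals, since the dual of a monotone lifting is again monotone, so closing under duals preserves expressive completeness; the resulting automaton still lies in $\Aut(\MLLa)$. Fact~\ref{coalgebraicmu} then produces a formula $\psi \in \muMLLa$ equivalent to this automaton, and hence to $\phi$.

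The main obstacle lies in the middle paragraph: the argument pivots on Proposition~\ref{p:unr} meshing correctly with bisimulation-invariance of $\phi$, so as to bridge the gap between the automaton characterization (valid only on tree models) and acceptance on arbitrary coalgebras. The remaining work is essentially bookkeeping, assembling the closure and translation machinery already developed in Sections~\ref{sec:mso} and~\ref{sec:aut}.
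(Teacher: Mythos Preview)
Your argument is correct and matches the paper's own route (the paper merely says the theorem follows by a ``routine argument'' from Proposition~\ref{p:unr}, and you have spelled that argument out). One small gap worth patching: the uniform translation $(\cdot)^{*}$ is defined only for \emph{monotone} one-step formulas, so before forming $\bbA_\phi^{*}$ you should invoke Proposition~\ref{p:mon-aut} to replace $\bbA_\phi$ by an equivalent monotone automaton; relatedly, the outputs $(\Delta(a,c))^{*}$ are by hypothesis \emph{natural} (not merely generalized) predicate liftings, which is precisely what makes expressive completeness of $\Lambda$ applicable in your final step.
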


The existence of uniform translations for the one-step language 
$\cite{vene:expr14}$ involves two components: a translation 
%from the 
%monotone fragment of $\mathcal{MSO}_\fun^\Lambda(A)$ into the set of monotone, 
%natural predicate liftings $\cvp (-)^A \rightarrow \cvp \circ \fun$ 
on the syntactic side and a uniform construction on the semantic side. 
However, as we shall now see, we can focus entirely on finding a suitable 
uniform construction for the one-step models; the syntactic translation will 
come for free.  

\begin{defi}
Let $\varphi$ be any formula of $\SOLa^{1}(A)$ of quantifier depth $\leq k$, 
and let $F$ be a uniform construction for $k$. 
Then, we define the generalized predicate lifting $\varphi^* : 
\cvp(-)^A \rightarrow \cvp \circ \fun$ by setting, for a given set $X$, 
$\alpha \in \fun X$ and $V : A \rightarrow \cvp(X)$: 
$$
\alpha \in\varphi^*_X(V) \text{ iff } 
(X_*,\alpha_*,V_{[h_{\alpha}]})\vDash_1 \varphi.
$$ 
\end{defi}

The following is obvious:

\begin{prop}
If $\varphi$ is a monotone formula then $\varphi^*$ is a monotone generalized predicate 
lifting.
\end{prop}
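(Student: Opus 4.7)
The statement is essentially an immediate consequence of two compositional facts: the preimage operator is monotone, and monotonicity of the one-step formula $\varphi$ transfers to the generalized predicate lifting it denotes. So the plan is to unpack definitions and chain these two monotonicity observations.

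The plan is as follows. Fix a set $X$, an element $\alpha \in \fun X$, and two valuations $V, V' : A \to \cvp(X)$ with $V(b) \subseteq V'(b)$ for every $b \in A$. I want to show $\varphi^*_X(V) \subseteq \varphi^*_X(V')$. First I would observe that for any $b \in A$ we have
\[
V_{[h_\alpha]}(b) \;=\; h_\alpha^{-1}[V(b)] \;\subseteq\; h_\alpha^{-1}[V'(b)] \;=\; V'_{[h_\alpha]}(b),
\]
since taking preimages under a fixed function preserves inclusions. Hence the valuations $V_{[h_\alpha]}$ and $V'_{[h_\alpha]}$ on $X_*$ stand in the same pointwise order as $V$ and $V'$.

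Next I would invoke monotonicity of $\varphi$ at the level of one-step models. Because $\varphi$, viewed as a generalized predicate lifting on $X_*$, is monotone in its valuation argument, the satisfaction $(X_*, \alpha_*, V_{[h_\alpha]}) \vDash_1 \varphi$ implies $(X_*, \alpha_*, V'_{[h_\alpha]}) \vDash_1 \varphi$. Unfolding the definition of $\varphi^*$, this is precisely the required implication: $\alpha \in \varphi^*_X(V)$ entails $\alpha \in \varphi^*_X(V')$.

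There is no real obstacle here; the only thing one has to check carefully is that the notion of monotonicity used for ``$\varphi$ is monotone'' really is the same one that governs the predicate lifting semantics on $X_*$, so that the inclusion $V_{[h_\alpha]} \leq V'_{[h_\alpha]}$ feeds correctly into the hypothesis on $\varphi$. Once that is noted, the two-step argument above completes the proof.
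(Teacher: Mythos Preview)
Your argument is correct and is exactly the obvious unpacking the paper has in mind: the paper states the proposition without proof, simply calling it ``obvious,'' and your two-step reasoning (preimage preserves inclusions, then apply monotonicity of $\varphi$ on $X_*$) is the natural way to spell this out.
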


Note that, in order for $\SOLa^{1}(A)$ to admit a uniform translation, it 
suffices that there exists for any $k$ a uniform construction $F$ such that,
for every formula $\varphi$ of quantfier depth $\leq k$, the generalized 
lifting $\varphi^*$ is natural. An equivalent formulation of this condition
is the following.

\begin{prop}
Let $\varphi$ be any one-step formula in  $\SOLa^{1}(A)$  and let $F$ be a uniform construction. 
Then the lifting $\varphi^*$ is natural if, for any pair of sets $X,Y$,
any map $f : X \rightarrow Y$ and any valuation $V : A \rightarrow \cvp(Y)$, we 
have 
$$(\star) \quad (X_*,\alpha_*,V_{[f \circ h_\alpha]})\vDash_1 \varphi 
  \text{ iff } (Y_*, \beta_*, V_{[h_\beta]}) \vDash_1 \varphi
$$
provided that $\fun f(\alpha) = \beta$.
\end{prop}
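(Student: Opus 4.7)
The plan is to unfold the definition of naturality of $\varphi^*$ and show that, after unrolling everything, the naturality biconditional becomes exactly the condition $(\star)$. By definition, $\varphi^*$ being a natural transformation $\cvp(-)^A \to \cvp \circ \fun$ means that, for every map $f : X \to Y$, every valuation $V : A \to \cvp(Y)$, and every $\alpha \in \fun X$, one has
\[
\alpha \in \varphi^*_X(V_{[f]}) \iff \fun f(\alpha) \in \varphi^*_Y(V).
\]
So the goal is to show that this biconditional is the same statement as $(\star)$.

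First I would rewrite both sides using the definition of $\varphi^*$. The right-hand side, setting $\beta := \fun f(\alpha)$, becomes $(Y_*,\beta_*,V_{[h_\beta]}) \vDash_1 \varphi$, which is exactly the right-hand side of $(\star)$. The left-hand side becomes $(X_*,\alpha_*,(V_{[f]})_{[h_\alpha]}) \vDash_1 \varphi$, so the only remaining task is to identify $(V_{[f]})_{[h_\alpha]}$ with $V_{[f \circ h_\alpha]}$.

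This is the one small bookkeeping step: preimage is contravariantly functorial, i.e. for every $b \in A$,
\[
(V_{[f]})_{[h_\alpha]}(b) \;=\; h_\alpha^{-1}\bigl[f^{-1}[V(b)]\bigr] \;=\; (f \circ h_\alpha)^{-1}[V(b)] \;=\; V_{[f \circ h_\alpha]}(b).
\]
Substituting this into the unfolded left-hand side, the naturality biconditional becomes precisely $(\star)$.

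Hence, quantifying over all $f$, $V$, $\alpha$ (and dually all $\beta$ in the image of $\fun f$), condition $(\star)$ is not merely sufficient but equivalent to the naturality of $\varphi^*$. There is no real obstacle here: the proof is pure definition-chasing, and the only moving part is the elementary preimage identity above.
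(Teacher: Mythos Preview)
Your proof is correct. The paper does not actually supply a proof for this proposition: it states it and immediately moves on, treating the result as routine. Your definition-chasing argument is exactly the natural verification, and indeed shows the stronger biconditional (naturality of $\varphi^*$ is \emph{equivalent} to $(\star)$), not merely the ``if'' direction stated.
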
 
A uniform construction $F$ is said to be \textit{adequate} for $k$, and with 
respect to the liftings $\Lambda$, if the equivalence $(\star)$ holds for all 
(monotone) formulas in $\SOLa^{1}(A)$ of quantifier depth 
$\leq k$ (for any finite set of variables $A$). 
Since we could of course take the quantifier depth $k$ and the set of liftings
as extra inputs for the uniform construction, we shall simply say that the 
functor $\fun$ admits an adequate uniform construction if there is an adequate
uniform construction for $\fun$ with respect to every $k$ and every set of 
monotone liftings. 
If $\Lambda$ is an expressively complete set of liftings, this is equivalent to
requiring an adequate uniform construction with respect to $\Lambda$, for every
$k$.

The following theorem, from which we obtain Theorem~\ref{t:main2} by taking for
 $\La$  the set of all monotone liftings for $\fun$, summarizes
the results of this section.

\begin{theo}
\label{t:bisinv2}
Let $\La$ be any expressively complete set of monotone predicate 
liftings for the set functor $\fun$.
If $\fun$ admits an adequate uniform construction, then 
\[
\muMLLa \equiv \MSO_{\La}/{\simeq}.
\]
\end{theo}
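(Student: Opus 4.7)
The easy inclusion $\muMLLa \subseteq \MSOLa/{\simeq}$ follows directly from Proposition~\ref{p:mu-to-mso} together with the fact that every formula of $\muMLLa$ is bisimulation invariant. The work lies in the converse direction: given a bisimulation invariant formula $\phi \in \MSOLa$, I want to produce an equivalent formula of $\muMLLa$. My plan is to route this via automata, combining the $\MSO$-to-automata translation from Proposition~\ref{p:automatachar}, the tree-unravelling step of Proposition~\ref{p:unr}, and the modal characterization of Fact~\ref{coalgebraicmu}.

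Concretely, starting from $\phi \in \MSOLa$ with free variables $P$, I apply Proposition~\ref{p:automatachar} to obtain a second-order automaton $\bbA_\phi \in \Aut(\SOLa)$ that is equivalent to $\phi$ over $\fun$-tree models. I may assume $\bbA_\phi$ is monotone via Proposition~\ref{p:mon-aut}. Unfolding the adequacy assumption: for each quantifier depth $k$ there is a uniform construction $F$ such that, for every monotone $\varphi \in \SOLa^{1}(A)$ of depth at most $k$, the generalized lifting $\varphi^{*}$ (defined earlier in terms of $F$) is a natural monotone predicate lifting. Since $\La$ is expressively complete, each such $\varphi^{*}$ is equivalent to an atomic formula of $\MLLa^{1}(A)$; applying this pointwise to the transition map of $\bbA_\phi$ yields a modal automaton $\bbA_\phi^{*} \in \Aut(\MLLa)$.

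The central claim is that $\bbA_\phi^{*}$ is equivalent to $\phi$ over all pointed $\fun$-models. To see this, fix an arbitrary pointed $\fun$-model $(\bbS,s)$. By Proposition~\ref{p:unr}, there exists a $\fun$-tree model $(\bbT,R,t)$ together with a $\fun$-model homomorphism $f : \bbT \to \bbS$ with $f(t) = s$, such that $\bbA_\phi$ accepts $(\bbT,R,t)$ iff $\bbA_\phi^{*}$ accepts $(\bbS,s)$. The existence of $f$ (composed with the identity on $\bbS$) witnesses behavioral equivalence of $(\bbT,t)$ and $(\bbS,s)$, so bisimulation invariance of $\phi$ yields $(\bbT,t) \vDash \phi$ iff $(\bbS,s) \vDash \phi$. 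Chaining these equivalences with $\bbA_\phi \equiv \phi$ over tree models produces the desired $\bbA_\phi^{*} \equiv \phi$. Finally, Fact~\ref{coalgebraicmu} translates $\bbA_\phi^{*}$ into an equivalent formula of $\muMLLa$, which completes the argument. (If $\La$ is not already closed under Boolean duals, one first enriches it with these duals; this is a cosmetic step that preserves expressive completeness and does not change $\muMLLa$ up to equivalence.)

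The main obstacle is really hidden in the bookkeeping between the syntactic and semantic sides: one must verify that the generalized liftings $\varphi^{*}$ produced by the adequate uniform construction actually give rise to \emph{bona fide} $\MLLa^{1}$-formulas, so that $\bbA_\phi^{*}$ lives in $\Aut(\MLLa)$ rather than in some larger class. This is where adequacy (giving naturality of $\varphi^{*}$) and expressive completeness of $\La$ (allowing every natural monotone lifting to be named inside $\MLLa^{1}$) must fit together precisely; once this is in place, all remaining steps are direct applications of results already established in the paper.
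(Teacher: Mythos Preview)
Your proposal is correct and follows essentially the same route as the paper: the paper derives Theorem~\ref{t:bisinv2} by combining the observation that an adequate uniform construction yields a uniform translation $(\cdot)^{*}$ with Proposition~\ref{p:unr} and the automata--formula correspondences (Proposition~\ref{p:automatachar} and Fact~\ref{coalgebraicmu}), which is exactly the chain of equivalences you spell out. Your explicit handling of the bookkeeping (monotonicity via Proposition~\ref{p:mon-aut}, naturality of $\varphi^{*}$ from adequacy, and the use of expressive completeness to land $\bbA_\phi^{*}$ in $\Aut(\MLLa)$, plus the remark on closure of $\La$ under Boolean duals) fills in precisely the ``routine argument'' the paper alludes to after Proposition~\ref{p:unr}.
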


\begin{example}
\label{ex:psf}
As a first application, the standard Janin-Walukiewicz characterization of the 
modal $\mu$-calculus can be seen as an instance of the result by taking
$\La = \{ \Box, \Diamond \}$ and $\fun = \psf$, recalling that $\MSO = \MSO_{\{\Diamond\}} \equiv \MSO_{\{\Box,\Diamond\}}$. 
The adequate uniform construction for $\psf$ is given as follows: consider a 
pair $(X,\alpha)$ with $\alpha \in \psf(X)$. 
We take this to $X_* = \alpha_* = \alpha \times \omega$, and we let $h_\alpha: 
\alpha \times \omega \to \alpha$ be the projection map.
\end{example}

It turns out that several other applications of this result can be obtained in
a particularly simple way. 
Say that a uniform construction $F$ is \textit{strongly adequate} if, for any 
mapping $f : X \to Y$ and any $\alpha \in \fun X$, $\beta \in \fun Y$ with 
$\fun f (\alpha) = \beta$, there is a bijection $g : X_{*}\to Y_{*}$ such that 
$\fun g (\alpha_{*}) = \beta_{*}$ and $f \circ h_\alpha = h_\beta \circ g$. 
Since it is easy to check that any strongly adequate uniform construction is 
adequate, we get:
\begin{coro}
If there is a strongly adequate uniform construction for $\fun$, then 
$\mu \mathtt{ML}_\fun\equiv\mathtt{MSO}_\fun {/} {\simeq} $. 
\end{coro}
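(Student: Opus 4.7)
The plan is to derive the corollary directly from Theorem~\ref{t:main2} by verifying that any strongly adequate uniform construction is adequate in the sense of the paragraph preceding Theorem~\ref{t:bisinv2}. Taking $\La$ to be the set of all monotone predicate liftings for $\fun$ (trivially expressively complete), once this reduction is established, Theorem~\ref{t:main2} immediately yields $\muMLT \equiv \MSOT/{\simeq}$. Concretely, fix a strongly adequate uniform construction $F$, a mapping $f: X \to Y$, elements $\alpha \in \fun X$ and $\beta \in \fun Y$ with $\fun f(\alpha) = \beta$, a finite variable set $A$, and a valuation $V: A \to \cvp(Y)$. Strong adequacy hands us a bijection $g: X_* \to Y_*$ satisfying $\fun g(\alpha_*) = \beta_*$ together with $f \circ h_\alpha = h_\beta \circ g$. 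From the latter identity and the definition of $V_{[-]}$, an immediate computation yields $V_{[f \circ h_\alpha]}(a) = g^{-1}[V_{[h_\beta]}(a)]$ for every $a \in A$, so $g$ transports the one valuation onto the other.

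The core of the argument will be a routine induction on $\varphi \in \SOLa^{1}(A)$ showing
\[
(X_*, \alpha_*, V_{[f \circ h_\alpha]}) \vDash_{1} \varphi
\iff
(Y_*, \beta_*, V_{[h_\beta]}) \vDash_{1} \varphi.
\]
The atomic case $a \subseteq b$ is immediate since $g$ is a bijection, hence inverse images preserve inclusions. For a lifting atom $\lambda(a_1, \ldots, a_n)$ I would use naturality of $\lambda$: because $\cvp$ is contravariant, naturality supplies the identity $\lambda_{X_*} \circ (g^{-1})^n = (\fun g)^{-1} \circ \lambda_{Y_*}$, which combined with $\fun g(\alpha_*) = \beta_*$ delivers the required equivalence. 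Boolean cases are trivial, and existential quantification goes through because $g$ bijects $\cvp(X_*)$ with $\cvp(Y_*)$, while the relation $V_{[f \circ h_\alpha]} = g^{-1} \circ V_{[h_\beta]}$ is preserved when extending both valuations coherently by sending a witness $S \subseteq X_*$ to $g[S] \subseteq Y_*$.

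Since no bound on quantifier depth enters the induction, $F$ is adequate for every $k$ and every set of monotone liftings, so Theorem~\ref{t:main2} applies. I do not anticipate any substantial obstacle: the only point requiring care is the bookkeeping around the contravariant direction of $\cvp$ when invoking naturality of $\lambda$, which is nothing more than a short diagram chase.
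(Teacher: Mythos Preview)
Your proposal is correct and follows exactly the route the paper takes: the paper simply remarks that ``it is easy to check that any strongly adequate uniform construction is adequate'' and then invokes the preceding theorem, and your induction on $\varphi$ is precisely that easy check spelled out (the bijection $g$ with $\fun g(\alpha_*)=\beta_*$ makes the two one-step models isomorphic, so they agree on all $\SOLa^{1}$-formulas). The only refinement worth making explicit is that the induction hypothesis should be stated for arbitrary valuations $U_1,U_2$ related by $U_1 = g^{-1}\circ U_2$, not just those of the form $V_{[f\circ h_\alpha]}$ and $V_{[h_\beta]}$, so that the existential case goes through cleanly---but you already note this in your treatment of quantifiers.
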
 

\begin{example}
\label{ex:bag}
As a first example,  consider the finitary multiset (``bags'') functor 
$\mathcal{B}$, which sends a set $X$ to the set of mappings $f : X \rightarrow
\omega$ such that the set $\{u \in X \mid f(u) = 0\}$ is cofinite.
The action on morphisms is given by letting, for $f \in \mathcal{B}X$ and 
$h : X \rightarrow Y$, the multiset $\mathcal{B}h(f) : Y \rightarrow \omega$ 
be defined by $w \mapsto \sum_{h(v) = w} f(v)$. 
Given a pair $X,\alpha$ where $\alpha : X \rightarrow \omega$ has finite support,
we define 
$$X_{*} = \bigcup \{\{u\} \times \alpha(u) \mid u \in X\}.
$$
Here, we identify each each $n \in \omega$ with the set $\{0,...,n-1\}$. 
The mapping $\alpha_{*} : X_{*} \rightarrow \omega$ is defined by setting 
$\alpha_{*}(w) = 1$ for all $w \in X_{*}$. 
The map $h_\alpha :  X_* \rightarrow X$ is defined by $(u,i) \mapsto u$. 
It is easy to check that the construction $F$ is strongly adequate, hence 
$\mu \mathtt{ML}_{\mathcal{B}}\equiv\mathtt{MSO}_{\mathcal{B}} {/} {\simeq}$.
\end{example}

As a final application, consider the set of all \textit{exponential polynomial 
functors}~\cite{jaco:intr12} defined by the ``grammar''
$$\mathsf{T} \isbnf \mathsf{C} 
   \divbnf \mathsf{Id} \divbnf \mathsf{T} \times \mathsf{T} 
   \divbnf \coprod_{i \in I} \mathsf{T}_i \divbnf \mathsf{T}(-)^\mathsf{C} 
$$
where $\mathsf{C}$ is any constant functor for some set $C$, and $\mathsf{Id}$ 
is the identity functor on $\mathbf{Set}$. 
These functors cover many important applications: streams, binary trees,
deterministic finite automata and deterministic labelled transition systems
are all examples of coalgebras for exponential polynomial functors, as is 
the socalled \textit{game functor} whose coalgebras provide the semantics for 
``Coalition Logic'' \cite{cirs:moda11}. 
\begin{prop}
\label{p:epf}
Every exponential polynomial functor admits a strongly adequate uniform 
translation.
\end{prop}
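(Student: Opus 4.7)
The plan is to argue by structural induction on the grammar defining exponential polynomial functors, constructing a strongly adequate uniform construction $F_\fun$ for each functor $\fun$ compositionally from constructions for its components. In the base cases: for a constant functor $\mathsf{C}$ I take $X_* = \emptyset$ with $\alpha_* = \alpha$ and $h_\alpha$ the empty map, for which strong adequacy holds vacuously via the empty bijection; for the identity functor $\mathsf{Id}$ I take $X_* = \{*\}$ a singleton, $\alpha_* = *$, and $h_\alpha(*) = \alpha$, so that for any $f : X \to Y$ with $f(\alpha) = \beta$ the unique bijection $g : \{*\} \to \{*\}$ trivially satisfies both $\mathsf{Id}\,g(\alpha_*) = \beta_*$ and $f \circ h_\alpha = h_\beta \circ g$.

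For the inductive cases, assume $F_1, F_2$ (or an $I$-indexed family $(F_i)_{i \in I}$) are strongly adequate uniform constructions for the sub-functors. For a product $\fun_1 \times \fun_2$, writing $\alpha = (\alpha_1, \alpha_2)$, I set $X_* = X_*^1 \sqcup X_*^2$ with copairing $h_\alpha = [h^1_{\alpha_1}, h^2_{\alpha_2}]$ and $\alpha_* = (\fun_1 \iota_1\, (\alpha_1)_*,\; \fun_2 \iota_2\, (\alpha_2)_*)$, where $\iota_1, \iota_2$ are the coproduct injections. For a coproduct $\coprod_{i \in I} \fun_i$, any element $\alpha$ belongs to a unique summand $i_0$, and I inherit the construction directly from $F_{i_0}$. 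For an exponential $\fun_0(-)^{C}$ applied to $\alpha : C \to \fun_0 X$, I set $X_* = \coprod_{c \in C} X^c_*$, where $X^c_*$ is the set produced by $F_0$ for $\alpha(c)$, with $\alpha_*(c) = \fun_0 \iota_c\, (\alpha(c))_*$ and $h_\alpha(c, x) = h_{\alpha(c)}(x)$. A short calculation using the functoriality of each $\fun_i$ past the coproduct injections gives $\fun(h_\alpha)(\alpha_*) = \alpha$ in every case, which shows $F_\fun$ is a uniform construction.

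For strong adequacy, given $f : X \to Y$ with $\fun f(\alpha) = \beta$, I assemble the required bijection $g : X_* \to Y_*$ from the inductive ones: $g_1 \sqcup g_2$ for products, the single $g_{i_0}$ for coproducts, and $\coprod_{c \in C} g_c$ for exponentials. The verifications $\fun g(\alpha_*) = \beta_*$ and $f \circ h_\alpha = h_\beta \circ g$ then reduce to routine naturality chases using $\fun_i(\iota'_i \circ g_i) = \fun_i \iota'_i \circ \fun_i g_i$ together with the inductive identities $\fun_i g_i((\alpha_i)_*) = (\beta_i)_*$ and $f \circ h^i_{\alpha_i} = h^i_{\beta_i} \circ g_i$. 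No step presents a genuine obstacle; the main bookkeeping concern is ensuring that in the product and exponential cases the coproduct injections on the $X$-side correspond under $g$ to the coproduct injections on the $Y$-side, which is immediate from how $g$ is built out of the component bijections.
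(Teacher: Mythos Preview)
Your argument is correct: the structural induction on the grammar for exponential polynomial functors goes through exactly as you describe, and the verifications you sketch in each case are routine. The paper itself does not supply a proof of this proposition---it is simply stated and followed immediately by its corollary---so there is nothing to compare against beyond observing that a compositional induction along the functor grammar is the obvious (and presumably intended) route.

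One very minor remark on presentation: the notation $X_*$ in the paper's definition of a uniform construction depends on the pair $(X,\alpha)$, not on $X$ alone. Your superscripted notation $X_*^1$, $X_*^c$, etc.\ correctly tracks this dependence, but it might be worth making that explicit once (e.g.\ writing $(X,\alpha_i)_*$ or similar) to avoid any ambiguity for a reader who has only seen the bare symbol $X_*$ in the definition.
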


\begin{coro}
\label{c:epf}
For every exponential polynomial functor $\mathsf{T}$, we have 
$\mu \mathtt{ML}_{\fun}\equiv\mathtt{MSO}_{\fun} {/} {\simeq}$.
\end{coro}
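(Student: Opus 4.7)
The plan is to derive Corollary~\ref{c:epf} immediately from Proposition~\ref{p:epf} via the previously stated corollary relating strong adequacy to the bisimulation-invariance characterization, and to establish Proposition~\ref{p:epf} by structural induction on the grammar generating the exponential polynomial functors. For each constructor I will exhibit, for every pair $(X,\alpha)$ with $\alpha \in \fun X$, a triple $(X_{*},\alpha_{*},h_\alpha)$ defining the uniform construction, verify the coalgebra-equation $\fun h_\alpha(\alpha_{*}) = \alpha$, and produce, given any $f : X \to Y$ with $\fun f(\alpha) = \beta$, a bijection $g : X_{*} \to Y_{*}$ satisfying $\fun g(\alpha_{*}) = \beta_{*}$ and $f \circ h_\alpha = h_\beta \circ g$.

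The base cases are routine. For a constant functor $\mathsf{C}$, where $\mathsf{C}X = C$ and $\mathsf{C}f = \mathrm{id}_C$ for every $f$, I set $X_{*} = \emptyset$, take $\alpha_{*} = \alpha \in \mathsf{C}\emptyset = C$, and let $h_\alpha$ be the empty map; the required bijection is the unique map $\emptyset \to \emptyset$. For the identity functor, where $\alpha \in X$, I set $X_{*} = \{\ast\}$, $\alpha_{*} = \ast$, and $h_\alpha(\ast) = \alpha$; given $f(\alpha) = \beta$, the identity on $\{\ast\}$ supplies the required bijection.

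For the inductive step, the key idea is to build $X_{*}$ by aggregating, using coproducts, the inductively supplied sets over whatever ``shape data'' $\alpha$ carries. For a binary product $\mathsf{T}_1 \times \mathsf{T}_2$ with $\alpha = (\alpha_1,\alpha_2)$, I take $X_{*} = X_{*}^{(1)} \sqcup X_{*}^{(2)}$ with injections $\iota_1,\iota_2$, define $h_\alpha = [h_{\alpha_1},h_{\alpha_2}]$, and put $\alpha_{*} = (\mathsf{T}_1\iota_1(\alpha_1^{*}),\, \mathsf{T}_2\iota_2(\alpha_2^{*}))$. For a coproduct $\coprod_{i \in I}\mathsf{T}_i$, an element $\alpha$ is tagged by some $j \in I$, and I simply reuse the data supplied by the induction hypothesis for $\mathsf{T}_j$. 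For an exponential $\mathsf{T}(-)^{\mathsf{C}}$, where $\alpha : C \to \mathsf{T}X$, I set $X_{*} = \coprod_{c \in C} X_{*}^{(c)}$ with injections $\iota_c$ (each $X_{*}^{(c)}$ coming from the induction hypothesis applied to $(X,\alpha(c))$), define $h_\alpha$ by cotupling the $h_{\alpha(c)}$, and put $\alpha_{*}(c) = \mathsf{T}\iota_c(\alpha(c)_{*})$. In each inductive case, the bijection $g$ is built as the coproduct of the bijections $g_i$ or $g_c$ furnished by the induction hypothesis from the component equalities $\mathsf{T}_i f(\alpha_i) = \beta_i$ and $\mathsf{T}f(\alpha(c)) = \beta(c)$.

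The verifications all reduce to naturality plus the inductive identities $h_{\alpha_i} = h_{\beta_i} \circ g_i$ and $\fun g_i(\alpha_i^{*}) = \beta_i^{*}$. The only genuinely delicate point is the exponential case, where one must verify that the newly chosen injections $\iota_c^{X}$ and $\iota_c^{Y}$ on the two sides cohere under the coproduct bijection $g$, so that the chain $\mathsf{T}g(\mathsf{T}\iota_c^{X}(\alpha(c)_{*})) = \mathsf{T}(g\circ\iota_c^{X})(\alpha(c)_{*}) = \mathsf{T}(\iota_c^{Y}\circ g_c)(\alpha(c)_{*})$ really lands at $\beta_{*}(c) = \mathsf{T}\iota_c^{Y}(\beta(c)_{*})$. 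This is essentially a diagram-chase but is the most notationally dense part of the argument; I expect it to be the main obstacle only in terms of bookkeeping, with no conceptual surprises.
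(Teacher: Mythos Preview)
Your derivation of the corollary matches the paper exactly: both invoke Proposition~\ref{p:epf} and then the earlier corollary on strongly adequate uniform constructions. The paper, however, states Proposition~\ref{p:epf} without proof, so your structural induction actually fills a gap the paper leaves open. The induction is correct: the base cases for constant and identity functors are fine, and in each inductive case your choice of $X_{*}$ as a coproduct of the inductively given pieces, with $h_{\alpha}$ the cotupling and $\alpha_{*}$ pushed forward along the injections, yields the required identities; the bijection $g$ is indeed the coproduct of the componentwise bijections, and the verifications $\fun g(\alpha_{*}) = \beta_{*}$ and $f \circ h_{\alpha} = h_{\beta} \circ g$ go through by functoriality and the equation $g \circ \iota_{c}^{X} = \iota_{c}^{Y} \circ g_{c}$, exactly as you anticipate in your remark on the exponential case. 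One minor notational point: your symbols $X_{*}^{(i)}$ and $X_{*}^{(c)}$ depend on $\alpha_{i}$ and $\alpha(c)$, not just on $X$ and the index; this is consistent with the paper's definition of a uniform construction (which takes the pair $(X,\alpha)$ as input), but you may want to make that dependence explicit to avoid confusion.
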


The cases where we can find a strongly adequate uniform construction are the 
most straightforward applications of Theorem \ref{t:bisinv2} that we know of. 
The Janin-Walukiewicz theorem is a less direct application: there is no strongly 
adequate uniform construction for the powerset functor, but there is an adequate
uniform construction. 
In the next section, we shall study an example of a functor where there is no 
adequate uniform construction at all.

%: the monotone neighborhood functor. 
%Even in this case, despite first appearances, it turns out that our method of 
%uniform constructions has something to say about the expressive power of monadic
%second order logic and the $\mu$-calculus. 
\section{The monotone neighborhood functor}
\label{sec:mon}

The final section of our paper concerns the monotone neighborhood functor 
$\mon$.
Our main result concerns a characterization of the fragment of $\MSO_{\mon}$
that is invariant under \textit{global} neighborhood bisimulations, to be 
introduced below. 
Our proof applies the method of section~\ref{sec:bisinv}, but not directly: 
we will first see that the functor $\mon$ itself does \emph{not} admit an
adequate uniform construction.

\subsection{No adequate uniform construction for $\mon$}

We first consider the negative result.

\begin{prop}
\label{p:no-adc-mon}
There is no adequate uniform construction for the monotone neighborhood 
functor $\mon$.
\end{prop}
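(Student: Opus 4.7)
My plan is to derive a contradiction from the hypothetical existence of an adequate uniform construction $F$ for $\mon$, by exhibiting a small family of one-step $\mon$-models that all project to a common terminal structure yet whose unravelings under $F$ cannot consistently share a $k$-theory. The starting observation is that $(Y,\beta) := (\{*\},\{\{*\}\})$ is ``terminal'' in a strong sense: for every set $X$ and every $\alpha \in \mon X$ with $X \in \alpha$ and $\emptyset \notin \alpha$, the unique map $f \colon X \to Y$ satisfies $\mon f(\alpha) = \beta$. Because $|Y| = 1$, every valuation $V \colon A \to \cvp(Y)$ pulls back to a globally constant (top or bottom) valuation on both $X_*$ and $Y_*$, so adequacy along such $f$ reduces to requiring that $(X_*,\alpha_*)$ and $(Y_*,\beta_*)$ agree on every closed monotone one-step $\SO_\mon$-formula of quantifier depth $\leq k$.

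Concretely, I would pick two witnesses on $X = \{a,b\}$, both projecting to $(Y,\beta)$ along the constant $f$:
\[
  \alpha_0 := \{\{a,b\}\} \quad\text{and}\quad
  \alpha_1 := \{\{a\},\{b\},\{a,b\}\}.
\]
Writing $A_i := h_{\alpha_i}^{-1}[\{a\}]$ and $B_i := h_{\alpha_i}^{-1}[\{b\}]$, the naturality equation $\mon h_{\alpha_i}((\alpha_i)_*) = \alpha_i$ dictates: for $\alpha_1$, both $A_1$ and $B_1$ lie in $(\alpha_1)_*$ and form a disjoint partition of $X_{1*}$; for $\alpha_0$, neither $A_0$ nor $B_0$ lies in $(\alpha_0)_*$, so by upward closure every element of $(\alpha_0)_*$ must meet both $A_0$ and $B_0$ non-trivially.

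The heart of the proof is to exhibit a closed monotone one-step formula of bounded depth on which the unravelings are forced to disagree. A natural first candidate is
\[
  \varphi := \exists p\,\exists q.\; \Box p \wedge \Box q \wedge p \cap q = \bot \wedge p \cup q = \top,
\]
asserting that the universe can be bipartitioned into two neighborhood members; it holds trivially on $(\alpha_1)_*$ via $(A_1, B_1)$. Since $\varphi$ alone can be satisfied on $(\alpha_0)_*$ by adding ``crossing'' generators such as $\{a_1, b_1\}$ and $\{a_2, b_2\}$, I would combine $\varphi$ with further closed formulas, notably an ultrafilter-style sentence $\varphi_{\mathrm{uf}} := \forall p.\; \Box p \leftrightarrow \neg\Box(\neg p)$, applied also to a third witness $\alpha_a = \{\{a\},\{a,b\}\}$ (itself an ultrafilter on $X$). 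The $\alpha_0$-case forces $\varphi_{\mathrm{uf}}$ false on $(Y_*,\beta_*)$ because $A_0$ and $\neg A_0 = B_0$ are both outside $(\alpha_0)_*$, whereas making $(\alpha_a)_*$ non-ultrafilter (as adequacy then requires) demands adding mutually complementary crossing generators, which in turn conflicts with the ``every element meets both $A_0, B_0$'' constraint for $(\alpha_0)_*$ when one tracks the concomitant cardinality and upward-closure demands across the three witnesses.

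The main obstacle is that $F$ enjoys wide freedom: it can choose the carrier $X_*$ arbitrarily large and can thicken each $\alpha_*$ with subsets beyond those forced by naturality, so no single closed formula suffices to rule out every construction. The proof therefore orchestrates several closed formulas across several witnesses to corner all possible $F$, exploiting the failure of $\mon$ to admit a canonical ``minimal'' or ``symmetric'' preimage along $\mon$-morphisms --- the very feature that distinguishes $\mon$ from $\psf$, the bag functor, and the exponential polynomial functors treated by Theorem~\ref{t:main2}, and that later motivates passing to the modified functor $\monstar$.
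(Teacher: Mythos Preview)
Your strategy has a real gap. By factoring everything through the terminal one-step structure $(\{*\},\{\{*\}\})$ you deliberately reduce adequacy to agreement on \emph{closed} one-step sentences, and then you try to produce a finite battery of closed sentences that no single $F$ can validate simultaneously across your witnesses $\alpha_0,\alpha_1,\alpha_a$. But you never actually carry out that combinatorics; the phrases ``combine $\varphi$ with further closed formulas'' and ``when one tracks the concomitant cardinality and upward-closure demands'' are placeholders, not arguments. And the freedom you yourself flag is severe: the construction may choose a different carrier $X_{i*}$ for each $\alpha_i$, may make those carriers arbitrarily large, and may add to each $(\alpha_i)_*$ any upward-closed family whatsoever so long as the four preimage constraints ($A_i,B_i \in$ or $\notin (\alpha_i)_*$, $\emptyset\notin$, $X_{i*}\in$) are met. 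Against that, closed formulas of bounded depth only pin down the structure up to $k$-round Ehrenfeucht--Fra\"iss\'e equivalence, and it is not at all clear that your three witnesses cannot all be sent to $(\alpha_i)_*$'s sharing a common $k$-theory. Your own example (thickening $(\alpha_0)_*$ with the crossing generators $\{a_1,b_1\},\{a_2,b_2\}$) already shows how easily individual sentences are neutralised; stacking more sentences does not obviously converge.

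The paper takes a shorter and sharper route that you should compare with. Instead of collapsing to a singleton, it keeps a \emph{free variable} in the one-step formula: $\varphi \;=\; \forall Z.\, a \subseteq Z$, which just says that the extension of $a$ is empty. The point is that, pulled back along $h_\alpha$, this formula speaks directly about the \emph{image} $h_\alpha[X_*]$: if $\varphi$ holds in $(X_*,\alpha_*,V_{[f\circ h_\alpha]})$ then no element of $X_*$ is sent by $f\circ h_\alpha$ into $V(a)$. One then chooses $(Y,\beta)$ and $(X,\alpha)$ with $\mon f(\alpha)=\beta$ such that (i) naturality of $\varphi^*$ forces $\varphi^*$ true at $(X,\alpha,V_{[f]})$, yet (ii) there is a point $v^*\in V_{[f]}(a)$ that lies in \emph{every} support of $\alpha$. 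Since $h_\alpha[X_*]$ must be a support of $\alpha$, (i) and (ii) are in immediate contradiction. The concrete data are tiny: $Y=\{u,v\}$, $\beta=\{\{u\},\{u,v\}\}$, $X=\{u^*,v^*,w^*\}$, $\alpha=\{\{u^*,v^*\},\{u^*,w^*\},\{u^*,v^*,w^*\}\}$. The moral is that using a free variable lets you turn adequacy into a statement about the \emph{range of $h_\alpha$}, which collides head-on with the support condition built into uniform constructions; by restricting to closed sentences you gave up exactly this leverage.
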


\begin{proof}
To arrive at a contradiction assume that $F$ is adequate.
Fix some $a \in A$ and consider the formula $\varphi = \forall Z. (a \subseteq 
Z)$ expressing that $a$ has empty extension.

Let $Y$ be the set $\{u,v\}$ and let $\beta \in \mathcal{M}Y$ be the 
neighborhood structure $\{\{u\},\{u,v\}\}$.
Let $V$ be any valuation with $V(a) = \{v\}$.
First, we prove that
$(Y,\beta,V) \vDash_1 \varphi^*$:
to see this, consider the one-step model $(Y',\beta',V')$ where $\beta' = 
\{\{u\}\}$ and we recall that $Y' = \{u\}$, and where $V'$ is simply the 
restriction of $V$ to $Y'$. It is easy to show that
$( Y'_*, \beta'_*,V'_{[h_{\beta'}]}) \vDash_1 \varphi$, and hence
$(Y',\beta',V') \vDash_1 \varphi^*$.
Since the generalized predicate lifting $\varphi^*$ is natural by 
assumption and $\mon \iota_{Y',Y}(\beta')
= \beta$, we get $(Y,\beta,V) \vDash_1 \varphi^*$ as required.

With this in mind, let $X$ be the set $\{u^*,v^*,w^*\}$ and let $\alpha \in 
\mon X$ be the neighborhood structure
$$\{\{u^*,v^*\}, \{u^*,w^*\}, \{u^*,v^*,w^*\}\}$$
Define the map $f : X \rightarrow Y$ by setting $u^* \mapsto u$, $v^* \mapsto v$
and $w^* \mapsto u$. 
It can easily be checked that $\mon f (\alpha) = \beta$.
By naturality of the formula $\varphi^*$, it follows that
$(X,\alpha,V_{[f]}) \vDash_1 \varphi^*$.
Hence we must have
$$( X_*, \alpha_* ,V_{[f \circ h_{\alpha}]}) \vDash_1 \varphi$$
hence $V_{[f \circ h_\alpha]}(a) = \emptyset$. Since $v^* \in V_{[f]}(a)$, this
means that we have $v^* \notin h_\alpha[X_*]$. 
But since $\mon h_{\alpha}(\alpha_*) = \alpha$, this means $h_\alpha[X_*]$ 
must be a support for $\alpha$. But it is easy to show that $\alpha$ cannot 
have a support $S$ with $v^* \notin S$,
so we have now reached a contradiction showing that $F$ cannot be an 
adequate construction.
\end{proof}

\subsection{The  functor $\monstar$}

In this section, as a step towards our main characterization result, we shall 
consider the language $\muML_{\monstar}$, where the functor 
$\monstar$ is a slight variation of the monotone neighborhood functor 
$\mon$. 
The functor $\monstar$ is obtained as the subfunctor of $\mon 
\times \psf$ given by
$$X \mapsto \{(\alpha,Y) \in \mon X \times \psf X \mid Y 
  \text{ supports } \alpha\}$$                           
This is indeed a subfunctor of $\mon \times \psf$, because given a map 
$h : X \rightarrow Y$, if $Z$ is a support for $\alpha \in \mon X$, then
$h[Z]$ is a support for $\mon h(\alpha)$. Given $\alpha \in 
\monstar X$, we will write $\alpha = (N_\alpha,S_\alpha)$. 

\begin{defi}
For the functor $\monstar$ we define the unary predicate liftings $\Box$
and $E$ by
\begin{align*}
\Box_X(Z) & \isdef \{\alpha \in \monstar X \mid Z \in N_\alpha\}
\\
E_X (Z)   & \isdef 
  \{\alpha \in \monstar X \mid Z \cap S_\alpha \neq \emptyset\},
\end{align*}
and we let $\Diamond$ be the dual of  $\Box$ and let $E^d$ be the dual of $E$. 
The set of liftings $\{\Box,\Diamond, E,E^d\}$ is denoted as $\Theta$.
\end{defi}

The set $\Theta$ is an expressively complete set of liftings for $\monstar$. We shall omit the proof of this fact here, and merely state it as the following proposition:

\begin{prop}
\label{p:lyndon}
Every monotone natural predicate lifting 
$\lambda : \cvp(-)^A \rightarrow \cvp \circ \monstar$ is equivalent to a 
formula in $\ML_{\Theta}(A)$.
\end{prop}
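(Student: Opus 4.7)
The plan is to apply a Yoneda-style analysis to reduce expressibility to a finite combinatorial problem on $\monstar(2^A)$, and then construct the required modal formula explicitly using the four liftings in $\Theta$. This follows the general recipe of~\cite{schr:expr08}, specialised to the functor $\monstar$.

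\emph{Step 1: Yoneda representation.} Since $A$ is finite, the set $\monstar(2^A)$ is finite. By the Yoneda lemma, any natural predicate lifting $\lambda : \cvp(-)^A \to \cvp \circ \monstar$ is determined by the subset $L_\lambda \isdef \lambda_{2^A}(\pi) \subseteq \monstar(2^A)$, where $\pi : A \to \cvp(2^A)$ is the universal valuation $\pi(a) \isdef \{c \in 2^A \mid a \in c\}$. Concretely, writing $\chi_V : X \to 2^A$, $x \mapsto \{a \mid x \in V(a)\}$ for the characteristic colouring of $V$, naturality gives $\alpha \in \lambda_X(V)$ iff $\monstar(\chi_V)(\alpha) \in L_\lambda$. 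A direct calculation shows that monotonicity of $\lambda$ is equivalent to $L_\lambda$ being closed under $\monstar u$ for every $u : 2^A \to 2^A$ with $c \subseteq u(c)$; call the induced preorder on $\monstar(2^A)$ by $\preceq$.

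\emph{Step 2: Characteristic formulas.} For each $\alpha^\circ = (N^\circ, S^\circ) \in \monstar(2^A)$, the next step is to construct a formula $\varphi_{\alpha^\circ} \in \ML_\Theta(A)$ such that $(X,\alpha,V) \vDash_1 \varphi_{\alpha^\circ}$ iff $\alpha^\circ \preceq \monstar(\chi_V)(\alpha)$. The building blocks are the positive colour formulas $\psi_c \isdef \bigwedge_{a \in c} a$, and their disjunctions $\psi_U \isdef \bigvee_{c \in \min U} \psi_c$ for an up-set $U \subseteq 2^A$; these correspond under $\chi_V$ to pullbacks of up-sets of $2^A$. The formula $\varphi_{\alpha^\circ}$ is then a conjunction of four blocks:
\begin{itemize}
\item neighbourhood inclusions $\bigwedge_{Y \in N^\circ} \Box\,\psi_Y$, forcing every $Y$ in $N^\circ$ (viewed as an up-set) to be a neighbourhood of the image;
\item a $\Diamond$-block forcing appropriate non-neighbourhoods of the image, using that down-set complements of up-sets are positively describable;
\item support witnesses $\bigwedge_{c \in S^\circ} E\,\psi_c$, forcing every colour in $S^\circ$ to be realised in the support;
\item a support-confinement conjunct $E^d\,\psi_{(S^\circ)_\uparrow}$, forcing the image support to lie in the colour upset of $S^\circ$.
\end{itemize}
All arguments of liftings are positive, so $\varphi_{\alpha^\circ} \in \ML_\Theta(A)$.

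\emph{Step 3: Assembly.} Finally, define $\varphi_\lambda \isdef \bigvee_{\alpha^\circ \in L_\lambda} \varphi_{\alpha^\circ}$, a finite disjunction. By Step 2, this formula holds at $(X,\alpha,V)$ iff $\monstar(\chi_V)(\alpha)$ lies $\preceq$-above some element of $L_\lambda$, which by the $\preceq$-closure of $L_\lambda$ from Step 1 is iff $\monstar(\chi_V)(\alpha) \in L_\lambda$, iff $\alpha \in \lambda_X(V)$.

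\emph{Main obstacle.} The main difficulty lies in Step 2. The restriction of the one-step language to positive arguments inside liftings means one cannot directly describe ``colour exactly $c$'' or arbitrary non-up-sets of $2^A$; only up-sets are positively definable. Showing that the $\Diamond$-block correctly captures the required non-neighbourhood information, and that $E^d$ confines rather than only upper-bounds the support, is where the positivity restriction must be matched precisely against the monotonicity of $L_\lambda$. The argument is in the spirit of Lyndon's classical positivity preservation theorem (whence the label \texttt{lyndon}): monotonicity of the lifting exactly compensates for the expressive deficit imposed by positivity. A further subtlety arises from the coupling between the neighbourhood $N$ and its compatible support $S$ forced by the subfunctor condition defining $\monstar \subseteq \mon \times \psf$, which must be respected when moving between elements of $\monstar(2^A)$ along maps $u$.
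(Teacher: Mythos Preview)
The paper does not prove this proposition: it says explicitly ``We shall omit the proof of this fact here, and merely state it as the following proposition.'' So there is no paper argument to compare against; your proposal is being assessed on its own merits.

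Your overall strategy---Yoneda reduction to the finite set $\monstar(2^{A})$, characteristic $\ML_{\Theta}$-formulas for principal monotone up-sets, and assembly by disjunction---is the standard route (the paper itself cites Schr\"oder's method \cite{schr:expr08} for expressive completeness), and it is the right one here. Your identification of Step~2 as the crux is accurate.

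Two points need tightening. First, your description of $\preceq$ in Step~1 is too narrow: closure of $L_{\lambda}$ under $\monstar u$ for inflationary endomaps $u\colon 2^{A}\to 2^{A}$ is necessary but not sufficient for monotonicity. The correct preorder is generated by \emph{spans}: $\alpha^{\circ}\preceq\beta^{\circ}$ whenever there exist $X$, $\gamma\in\monstar X$ and $f,g\colon X\to 2^{A}$ with $f\leq g$ pointwise, $\monstar f(\gamma)=\alpha^{\circ}$ and $\monstar g(\gamma)=\beta^{\circ}$. For $\mon$ (and hence $\monstar$) this is strictly stronger than the endomap order: e.g.\ with $A=\{a,b\}$ one can span from the principal filter $\{Z\mid\emptyset\in Z\}$ to $\{Z\mid\{a\}\in Z\text{ and }\{b\}\in Z\}$, but no single inflationary $u$ realises this. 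Your Step~3 argument relies on $L_{\lambda}$ being $\preceq$-upward closed, so you must use the span order.

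Second, in the $\Box$-block you write $\Box\,\psi_{Y}$ for arbitrary $Y\in N^{\circ}$, but your $\psi_{U}$ is defined only for up-sets; and the $\Diamond$-block is left entirely schematic. These are exactly the places where the positivity constraint bites, and where the span-based $\preceq$ must be matched against what $\Box,\Diamond,E,E^{d}$ with positive arguments can express. The nabla-style shape $\bigwedge_{c\in S^{\circ}}E\,\psi_{c}\wedge E^{d}\bigvee_{c\in S^{\circ}}\psi_{c}$ handles the support component cleanly; the neighbourhood component requires a genuinely careful analysis of which up-set constraints on $N$ and its complement suffice to pin down the span-up-closure of $N^{\circ}$. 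None of this is fatal---the plan is correct---but the details you flag as the ``main obstacle'' are indeed where the proof actually lives, and they are not yet supplied.
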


The main technical result of this section states the existence, for all $k$,
of a uniform construction $F$ that is adequate for $k$ and with respect to 
the set of liftings $\{\Box,E\}$. 

\begin{defi}
\label{d:F-monstar}
Fix a natural number $k$. 
Given a set $X$, and object $\alpha  \in \monstar X$, put
$$
X_* \isdef 
\{(u,i,Z,j) \in (X \times 2^{k} \times \psf(S_\alpha) \times \omega) 
  \mid u \in Z \},
$$
and let $\pi_X$ be the projection map from $X_*$ to $X$.
Define $\alpha_* = (N_{\alpha_*},S_{\alpha_*}) \in \monstar(X_*)$ 
by setting $S_{\alpha_*} = X_*$, and set $Z \in N_{\alpha_*}$ for $Z 
\subseteq S_{\alpha_*}$ iff $\left\lceil Y,j\right\rceil \subseteq Z$ for some
$Y \in \alpha$, $Y \subseteq S_\alpha$ and some $j < \omega$, where
$$\left\lceil Y,j \right\rceil := \{(u,i,Y,j) \mid u \in Y, i < 2^k 
\}.
$$
The sets of the form $\left\lceil Z,j\right\rceil$ will be called the 
\textit{basic members} of $N_{\alpha^*}$.
\end{defi}

The main goal of this section is to prove the following:
\begin{prop}
\label{p:ad-monstar}
The construction given in
Definition~\ref{d:F-monstar} is an adequate, uniform construction for $k$.
\end{prop}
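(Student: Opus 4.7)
The plan is to verify the two requirements of an adequate uniform construction separately: (i) that $\monstar h_\alpha(\alpha_*) = \alpha$ (uniformity), and (ii) that for every $f : X \to Y$ with $\monstar f(\alpha) = \beta$ and every valuation $V : A \to \cvp(Y)$, the one-step models $(X_*, \alpha_*, V_{[f \circ h_\alpha]})$ and $(Y_*, \beta_*, V_{[h_\beta]})$ satisfy the same $\SOLa^{1}(A)$-formulas of quantifier depth $\leq k$. Uniformity should be a direct unpacking of definitions: for the $\psf$-component, every $(u,i,Z,j) \in X_*$ satisfies $u \in Z \subseteq S_\alpha$ so $h_\alpha[X_*] \subseteq S_\alpha$, and each $u \in S_\alpha$ is hit by $(u,0,\{u\},0)$; for the $\mon$-component, $h_\alpha^{-1}(Z) \in N_{\alpha_*}$ iff some $\lceil Y, j \rceil$ lies inside $h_\alpha^{-1}(Z)$, iff some $Y \in N_\alpha$ with $Y \subseteq S_\alpha$ satisfies $Y \subseteq Z$, which by monotonicity and the support property reduces to $Z \in N_\alpha$.

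For adequacy, the next step is to construct a canonical lifted map $f_* : X_* \to Y_*$ by $(u,i,Z,j) \mapsto (f(u), i, f(Z), j)$. Using $f(S_\alpha) = S_\beta$ (a consequence of $\monstar f(\alpha) = \beta$), one can check that $f_*$ is well-defined and surjective, that $h_\beta \circ f_* = f \circ h_\alpha$, and that $\monstar f_*(\alpha_*) = \beta_*$. In particular $V_{[f \circ h_\alpha]} = f_*^{-1} \circ V_{[h_\beta]}$, making $f_*$ a surjective, valuation-preserving $\monstar$-coalgebra morphism between the two one-step models. This immediately yields agreement on all atomic one-step formulas---both inclusion atoms and applications of the liftings $\Box$ and $E$---and also handles Spoiler moves on the $Y_*$-side in an Ehrenfeucht-Fra\"iss\'e game: given a Spoiler choice $U' \subseteq Y_*$, Duplicator always responds with $f_*^{-1}(U') \subseteq X_*$, which preserves atomic agreement by the coalgebra-morphism property.

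The main obstacle is the remaining direction of the game, where Spoiler picks $U \subseteq X_*$: here $f_*$ points the wrong way and the naive image $f_*(U)$ can collapse critical distinctions since $f$ need not be injective. This is precisely where the $2^{k} \times \omega$ copies encoded in the $(i,j)$-coordinates become essential. I plan to design Duplicator's strategy so that at round $l$ only the first $l$ bits of the $i \in 2^{k}$ coordinate are consumed to encode the fresh partition induced by Spoiler's move, leaving $k-l$ bits in reserve for later rounds; meanwhile the $\omega$ copies in $j$ provide arbitrarily many independent replicas of each basic neighborhood $\lceil Y, j \rceil$, so that new partitions demanded by the game can be realized without collision. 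Concretely, Duplicator's response on each basic neighborhood $\lceil f(Y), j \rceil \subseteq Y_*$ will mirror, combinatorially, the type of $U$ on $\lceil Y, j \rceil$ relative to the currently-fixed subsets. Preservation of $\Box$-atoms under this response is then verified via the preimage characterization $\mon f_*(N_{\alpha_*}) = N_{\beta_*}$, while preservation of $E$-atoms is automatic since $S_{\alpha_*} = X_*$ and $S_{\beta_*} = Y_*$. The delicate part will be checking that this bit-budget invariant genuinely survives all $k$ rounds, which will constitute the bulk of the proof.
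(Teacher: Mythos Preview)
Your uniformity check is fine, and the lifted map $f_*(u,i,Z,j) = (f(u),i,f(Z),j)$ is indeed well-defined, surjective, satisfies $h_\beta \circ f_* = f \circ h_\alpha$, and has $\monstar f_*(\alpha_*) = \beta_*$. But the Ehrenfeucht--Fra\"iss\'e strategy you outline has a structural gap.

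You propose an \emph{asymmetric} Duplicator strategy: answer Spoiler's $Y_*$-moves with $f_*^{-1}$, and answer $X_*$-moves by a separate bit-budget construction. These two responses cannot be freely interleaved. Suppose in round~1 Spoiler picks $U \subseteq X_*$ and you respond with some $U' \subseteq Y_*$ built by the bit-budget idea; in general $U \neq f_*^{-1}(U')$. If in round~2 Spoiler then picks $W' \subseteq Y_*$ and you respond with $f_*^{-1}(W')$, atomic agreement across rounds is no longer guaranteed: for instance $U \subseteq f_*^{-1}(W')$ is equivalent to $f_*[U] \subseteq W'$, but there is no reason $f_*[U] = U'$, so this need not match $U' \subseteq W'$. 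You have not identified a single invariant that (a) holds at the start, (b) is preserved by \emph{both} kinds of Duplicator response, and (c) forces atomic agreement at the end. Without that, the argument does not close.

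The paper's proof is organised around precisely such a symmetric invariant. Define the \emph{$m$-signature} of a basic member $\lceil Z,j\rceil$ (relative to the current valuation) as the function sending each propositional type $t$ to the number of elements of $\lceil Z,j\rceil$ of type $t$, truncated at $m$. Say the two one-step models \emph{match up to depth $m$} if, for every $m$-signature, the numbers of basic members realising it on each side are either both finite and equal, or both infinite. The paper then shows three things. First, the initial models match up to depth $2^k$: the $\omega$ many $j$-copies force every signature to occur either $0$ or infinitely many times, and one checks directly that the \emph{set} of realised signatures is the same on both sides. Second, if the models match up to depth $2^m$ then Duplicator can answer a Spoiler move on \emph{either} side so that the extended models match up to depth $2^{m-1}$; this step spends exactly one ``bit'' of the $2^k$ many $i$-copies, which is your bit-budget intuition, but now formulated symmetrically and combined with a global bijection/injection between basic members of equal signature. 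Third, matching up to depth $1$ already implies agreement on all atomic formulas for $\{\Box,E,\subseteq\}$. Your underlying intuition about the role of the $i$- and $j$-coordinates is correct, but it needs to be packaged inside this symmetric signature-counting invariant rather than mixed with the asymmetric $f_*^{-1}$ shortcut.
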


It is easy to check that, for all sets $X$ and $\alpha \in \monstar X$, we have $\monstar\pi_X(\alpha_F) = \alpha$.

Our main goal in this section is to prove the following result, from which Proposition~\ref{p:ad-monstar} now follows:
\begin{lemma}
\label{monuniform}
Let $X,Y$ be any sets, $\alpha \in \monstar X$, $\beta \in \monstar Y$ 
and $V : A \rightarrow Q(Y)$. Suppose that we have a map $h : X \rightarrow Y$ such that $\monstar h(\alpha) = \beta$. Then we have
$$(X_*,\alpha_*,V_{[h \circ \pi_X]}) \equiv^k (Y_*,\beta_*,V_{ [\pi_Y]})$$
\end{lemma}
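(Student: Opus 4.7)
The proof will proceed by an Ehrenfeucht-Fraïssé argument for second-order one-step logic. Since by Proposition~\ref{p:lyndon} every monotone predicate lifting for $\monstar$ is equivalent to a formula in $\ML_{\Theta}^{1}(A)$ for $\Theta = \{\Box,\Diamond,E,E^d\}$, and since $\Diamond$ and $E^d$ are Boolean duals of $\Box$ and $E$, it suffices to show that Duplicator wins the $k$-round second-order EF game for the one-step language built over $\{\Box,E\}$ between $(X_*,\alpha_*,V_{[h\circ\pi_X]})$ and $(Y_*,\beta_*,V_{[\pi_Y]})$. In this game, at each round Spoiler picks a subset of the carrier of one model and Duplicator responds with a subset of the other; Duplicator wins after $k$ rounds if the chosen subsets together with the parameter valuations agree on all atomic formulas, i.e., all inclusions and all applications of $\Box$ and $E$. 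A standard Fraïssé-style argument then yields $\equiv^k$.

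The central structural tool is the map $\tilde h : X_* \to Y_*$ defined by $\tilde h(u,c,Z,j) \isdef (h(u),c,h[Z],j)$. Using $\mon h(N_\alpha) = N_\beta$ and $h[S_\alpha] = S_\beta$, one verifies that $\tilde h$ is well-defined, that $\pi_Y \circ \tilde h = h \circ \pi_X$ (so $V_{[h\circ\pi_X]}$ is the $\tilde h$-pullback of $V_{[\pi_Y]}$), and that $\tilde h$ carries each basic member $\lceil Z,j\rceil$ of $N_{\alpha_*}$ to the basic member $\lceil h[Z],j\rceil$ of $N_{\beta_*}$. Moreover $\tilde h$ is surjective: given any basic $\lceil W,j\rceil$ in $N_{\beta_*}$, the set $Z \isdef h^{-1}[W]\cap S_\alpha$ lies in $N_\alpha$ with $h[Z] = W$ (the latter via $h[A\cap h^{-1}[W]] = h[A]\cap W$ and $W\subseteq S_\beta$), whence $\tilde h[\lceil Z,j\rceil] = \lceil W,j\rceil$. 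Thus $\tilde h$ is a surjective coalgebra morphism of $\monstar$-coalgebras over $h$.

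Duplicator's strategy then exploits the two ``reserve dimensions'' built into the construction. The colour coordinate $2^{k}$ supplies, for every element of $X$ (resp.\ $Y$), one copy per possible $\{0,1\}^{k}$-type, which is exactly what is needed because each round of Spoiler can at most double the number of realized membership-types. Concretely, Duplicator maintains the invariant that after round $r \leq k$ there is a bijection, compatible with $\tilde h$, between the $\{0,1\}^{r}$-type-regions on the two sides, and at each new round she refines this bijection by splitting paired regions along fresh colour values in unison with Spoiler's choice. The copy coordinate $\omega$ is what handles the $\Box$-atomic predicate: whenever Spoiler plays a set that includes (or excludes) a basic member $\lceil Z,j\rceil$, Duplicator plays a set that includes (or excludes) the corresponding $\lceil h[Z],j'\rceil$ for a fresh copy index $j'$, avoiding interference with previously chosen subsets; the converse direction uses surjectivity of $\tilde h$ to locate a preimage basic member. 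The $E$-predicate is trivial, since $S_{\alpha_*} = X_*$ and $S_{\beta_*} = Y_*$ make $E(P)$ equivalent to non-emptiness of $P$, which the type-preserving bijection respects.

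The main obstacle is the combinatorial bookkeeping showing that the colour and copy reserves are jointly sufficient through all $k$ rounds. The bound $2^{k}$ on colours is tight for the type-splitting argument; the $\omega$ copy reserve is sufficient because Duplicator needs only finitely many fresh basic members during any single play of the game. A routine induction on the round number, maintaining the twin invariants that paired regions realize the same $\{0,1\}^{r}$-type and the same pattern of basic-member inclusions relative to the chosen subsets, shows that Duplicator never gets stuck, so Duplicator wins the $k$-round EF game, and the desired equivalence $(X_*,\alpha_*,V_{[h\circ\pi_X]}) \equiv^k (Y_*,\beta_*,V_{[\pi_Y]})$ follows.
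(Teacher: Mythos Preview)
Your general framework is right and matches the paper: both argue via a $k$-round Ehrenfeucht--Fra\"iss\'e game for the second-order one-step language over $\{\Box,E\}$. The map $\tilde h$ you construct is correct and useful (the paper does not introduce it explicitly, but the computation behind Lemma~\ref{kstep} amounts to the same thing). However, your description of Duplicator's strategy is not a proof; it leaves a real gap.

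The problem is the invariant. ``A bijection, compatible with $\tilde h$, between the $\{0,1\}^r$-type-regions'' is not well-defined: $\tilde h$ is generically non-injective, so there is no bijection between elements of a given type, and you do not say what a ``type-region'' is otherwise. Likewise, ``Spoiler plays a set that includes (or excludes) a basic member'' addresses only a tiny fraction of Spoiler's moves; Spoiler plays \emph{arbitrary} subsets, which typically slice every basic member into pieces, and your strategy says nothing about how to respond to such moves. Without a precise invariant that survives arbitrary set moves and still decides the atomic formulas $p\subseteq q$, $\Box p$, $Ep$ at the end, you do not have a winning strategy for Duplicator.

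The paper's fix is to replace your vague bijection with a sharp counting invariant. For $m\leq k$ it defines the \emph{$2^m$-signature} of a basic member $\lceil Z,j\rceil$ under a valuation as the map sending each propositional type $t$ to the number of elements of type $t$ in $\lceil Z,j\rceil$, capped at $2^m$. Two one-step models \emph{match up to depth $2^m$} (written $\approx^{2^m}$) if for every possible signature the numbers of basic members with that signature are either equal and finite, or both infinite. Three lemmas then do the work: (i) the initial models match up to depth $2^k$ (this is exactly where the $\omega$-coordinate is used: each basic member has infinitely many copies with identical signature, so each signature is realized either $0$ or $\infty$ times; your map $\tilde h$ witnesses that the same signatures are realized on both sides); (ii) if the models match up to depth $2^m$ and Spoiler plays an arbitrary subset on one side, Duplicator can respond so that the resulting models match up to depth $2^{m-1}$ --- the $2^k$ colour copies are consumed here, because adding one variable can at worst halve the ``cap'' while still allowing Duplicator to realise the required refined type-counts inside each basic member; (iii) matching up to depth $1$ suffices for agreement on all atomic formulas. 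Chaining these gives $\equiv^k$.

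Your intuitions about the two reserve coordinates are essentially correct, but they need to be organised around this signature invariant rather than around a bijection. In particular, the $\omega$ reserve is not used to pick ``fresh copy indices $j'$'' during play; it is used once and for all to guarantee that signature-counts are $0$ or $\infty$ at the outset, which is what makes the back-and-forth in step (ii) possible.
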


Here, and throughout this section, we write $(X,\alpha,V) \equiv^k (Y,\beta,U)$ to say that two one-step models satisfy the same formulas of $\MSO_{\{\Box,E\}}^{1}(A)$ with at most $n$ nested quantifiers. Let us keep the data $X,Y,\alpha,\beta,V$ and $h$ fixed throughout the proof, and assume that $\monstar h(\alpha) = \beta$. We will also assume, from now on, that $N_\alpha$ and $N_\beta$ are both non-empty sets: if one of them is empty then both of them are, and in this case the lemma can be proved essentially using an easier version of the argument we use below.  

\begin{defi}
Given a finite set of variables $A$, a propositional $A$-\textit{type} $\tau$ is a subset of $A$. Given a set $X$ and a valuation $V : A \rightarrow Q(X)$, the propositional $A$-type of $v \in X$ is defined to be $V^\dagger(v) = \{a \in A \mid v \in V(a)\}$.
\end{defi}

\begin{defi}
Given a basic member $\left\lceil Z,j \right\rceil$  either in $N_{\alpha_*}$ or in $N_{\beta_*}$, a valuation 
$V : A \rightarrow \cvp(X_*)$ or $V : A \rightarrow \cvp(Y_*)$, and a natural number $m$, the \textit{$m$-signature} of $\left\lceil Z,j \right\rceil $ over variables $A$ and relative to the valuation $V$ is the mapping $\sigma : \psf(A) \rightarrow \{0,...,m\} $ defined by setting $\sigma(t)$ to be $n < m$ if $\left\lceil Z,j \right\rceil$ contains exactly $n$ elements of type $t$ under the valuation $V$, or $\sigma(t) = m$ if $\left\lceil Z,j \right\rceil$ contains \textit{at least} $m$ elements of type $t$. 
\end{defi}
 
\begin{defi}
 Let $B$ be any set of variables containing $A$, and let $V_1 : B \rightarrow \cvp(X_*)$ and $V_2 : B \rightarrow \cvp(Y_*)$. Then for any natural number $n$ we write  
$$(X_*,\alpha_*,V_1) \approx^{n} (Y_*,\beta_*,V_2) $$
and say that these one-step models \textit{match up to depth $n$}, if: for every $n$-signature $\sigma$ over variables $B$, either the number of basic elements of signature $\sigma$ in $N_{\alpha_*}$ and $N_{\beta_*}$ respectively are both finite and the same, or both infinite.
\end{defi}

\begin{lemma}
\label{kstep}
 $(X_*,\alpha_*,V_{[h \circ \pi_X]}) \approx^{2^k} (Y_*,\beta_*,V_{ [\pi_Y]})$.
\end{lemma}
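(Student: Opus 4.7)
The plan is to analyze basic members signature-by-signature, using the observation that under both valuations the propositional $A$-type of an element is inherited from its first coordinate: an element $(u,i,Y',j) \in X_*$ has type $V^\dagger(h(u))$ (independent of $i,j$), and an element $(v,i,W,j) \in Y_*$ has type $V^\dagger(v)$. Fix a $2^k$-signature $\sigma$. For a basic member $\lceil Y', j\rceil \in N_{\alpha_*}$, the $2^k$ tuples $(u,0,Y',j),\ldots,(u,2^k{-}1,Y',j)$ attached to a given $u \in Y'$ all carry the same type $V^\dagger(h(u))$. Hence every type either contributes $0$ elements to $\lceil Y', j\rceil$ or at least $2^k$ elements, so the $2^k$-signature of $\lceil Y', j\rceil$ takes values only in $\{0, 2^k\}$ and is completely determined by the type-set $T(Y') := \{V^\dagger(h(u)) \mid u \in Y'\}$. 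The analogous statement holds on the $\beta_*$-side with $T'(W) := \{V^\dagger(v) \mid v \in W\}$.

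If $\sigma$ ever takes a value outside $\{0,2^k\}$, then no basic member on either side has signature $\sigma$, and the counts match trivially. Otherwise set $T := \{\tau \mid \sigma(\tau) = 2^k\}$; for each $Y'$ (resp.\ $W$) whose type-set is $T$, varying $j \in \omega$ yields infinitely many distinct basic members of signature $\sigma$. Thus the two counts are each either $0$ or infinite, and the lemma reduces to the witnessing equivalence: there exists $Y' \in N_\alpha$ with $Y' \subseteq S_\alpha$ and $T(Y') = T$ if and only if there exists $W \in N_\beta$ with $W \subseteq S_\beta$ and $T'(W) = T$.

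For the forward direction, I would take $W := h[Y']$; then $W \subseteq h[S_\alpha] = S_\beta$, $T'(W) = T(Y')$, and $W \in N_\beta$ because $h^{-1}[W] \supseteq Y' \in N_\alpha$ and $N_\alpha$ is upward closed. For the converse, given $W$, I would set $Y' := h^{-1}[W] \cap S_\alpha$. Then $Y' \subseteq S_\alpha$; the hypothesis $\monstar h(\alpha) = \beta$ gives $h^{-1}[W] \in N_\alpha$; and since $W \subseteq S_\beta = h[S_\alpha]$, every $v \in W$ has some preimage in $S_\alpha$, necessarily lying in $Y'$, which gives $T(Y') = T'(W) = T$. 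The main obstacle I anticipate is justifying $Y' \in N_\alpha$: although $h^{-1}[W] \in N_\alpha$, membership needs to persist after intersection with the support $S_\alpha$. This will follow by unpacking the defining condition that $S_\alpha$ supports $\alpha$, namely that $\alpha$ is the image under $\monstar \iota_{S_\alpha,X}$ of some element of $\monstar S_\alpha$; a short calculation with this identification forces $Z \cap S_\alpha \in N_\alpha$ for every $Z \in N_\alpha$, and applying it to $Z = h^{-1}[W]$ closes the argument.
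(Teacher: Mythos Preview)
Your proposal is correct and follows essentially the same approach as the paper: both arguments reduce to the existence equivalence via the constructions $W = h[Y']$ and $Y' = h^{-1}[W]\cap S_\alpha$, after observing that basic members have $2^k$-signatures valued in $\{0,2^k\}$ and that varying the index $j$ yields infinitely many copies. Your explicit type-set abstraction $T(Y')$, $T'(W)$ and your flagging of the step $h^{-1}[W]\cap S_\alpha \in N_\alpha$ (which the paper simply asserts) are presentational refinements rather than a different route.
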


\begin{proof}
First note that, for any $2^k$-signature $\sigma$, $N_{}$ either contains no basic elements of signature $\sigma$, or infinitely many: if there is some basic element $\left\lceil Z,j \right\rceil$ if signature $\sigma$, then for any $i \neq j$, the basic element $\left\lceil Z,i \right\rceil$ has the same $2^k$-signature as $\left\lceil Z,j \right\rceil$ with respect to the valuation $V_{h \circ \pi_X}$. The same holds for $N_{\beta_*}$ with respect to the valuation $V_{ \pi_Y}$.

Thus, it suffices to show that $N_{\alpha_*}$ contains a basic element of signature $\sigma$ w.r.t. $V_{[h \circ \pi_X]}$ iff $N_{\beta_*}$ contains a basic element of signature $\sigma$ w.r.t $V_{[\pi_Y]}$. Suppose that $N_{\beta_*}$ contains a basic element $\left\lceil Z,j\right\rceil$ of signature $\sigma$, where $Z \in N_\beta$ and $Z \subseteq S_\beta$. Since $N_\beta = \mathcal{M}h(N_\alpha)$, we have $h^{-1}[Z] \in N_\alpha$ and so $h^{-1}[Z] \cap S_\alpha \in N_\alpha$. Hence we get
$$\left\lceil h^{-1}[Z] \cap S_\alpha, 0\right\rceil \in N_{\alpha_*}$$
It is easy to see that any basic element of $N_{\alpha_*}$ contains either $0$ or at least $2^k$ members of a propositional type $t$ w.r.t. $V_{[h \circ \pi_X]}$, and the same is true of any basic element of $N_{\beta_*}$ w.r.t. $V_{[\pi_Y]}$. Hence, to show that $\left\lceil h^{-1}[Z] \cap S_\alpha, 0\right\rceil $ has the same $2^k$-signature as $\left\lceil Z,j\right\rceil$, it suffices to show that these basic elements realize the same propositional types over $B$. So suppose $\left\lceil h^{-1}[Z] \cap S_\alpha, 0\right\rceil $ contains some element $v$ of type $t$. Then $\pi_X(v) \in h^{-1}[Z] \cap S_\alpha$, so $h \circ \pi_X(v) \in Z$. This means that
$$(h \circ \pi_X(v),0,Z,j) \in \left\lceil Z,j\right\rceil $$
and this will have the same propositional type as $v$. Conversely, suppose that $\left\lceil Z,j\right\rceil $ contains some element $w$ of type $t$. Then $\pi_Y(w) \in Z$. Since $Z \subseteq S_\beta$, and $S_\beta = h[S_\alpha]$, there exists some $w' \in h^{-1}[Z] \cap S_\alpha$ with $h(w') = \pi_Y(w)$. We get
$$(w',0,h^{-1}[Z] \cap S_\alpha,0) \in \left\lceil h^{-1}[Z] \cap S_\alpha, 0\right\rceil $$
and this will have the type $t$.

Now, suppose that $N_{\alpha_*}$ contains a basic element   $\left\lceil Z,j\right\rceil$ of $2^k$-signature $\sigma$, so that $Z \in N_\alpha$ and $Z \subseteq S_\alpha$. Then $h[Z] \in N_\beta$, and furthermore $h[Z] \subseteq S_\beta = h[S_\alpha]$. Again, it suffices to check that $\left\lceil Z,j\right\rceil$ and $\left\lceil h[Z],0\right\rceil$ realize the same propositional types. Given $v\in \left\lceil Z,j\right\rceil$, we have $h \circ\pi_X(v) \in h[Z]$ and so
$$ (h \circ \pi_X(v),0,h[Z],0) \in \left\lceil h[Z],0\right\rceil$$
and this will have the same propositional type as $v$. Conversely, if $w \in \left\lceil h[Z],0\right\rceil$ then $\pi_Y(w) \in h[Z]$, so there is $w' \in Z$ with $h(w') = \pi_Y(w)$. We get
$$(w',0,Z,j) \in \left\lceil Z,j\right\rceil$$
and this has the same propositional type as $w$.  
\end{proof}

We are going to show, by induction on a natural number $m \leq k$, that if two one-step models  of the form $(X_*,\alpha_*,V_1)$ and $(Y_*,\beta_*,V_2)$ match up to depth $2^m$, then they satisfy the same formulas of quantifier depth $m$. For the basis case of $2^0 = 1$, we need the following result:
\begin{lemma}
\label{atomicstep}
Let $B$ be a set of variables containing $A$, and let $V_1 : B \rightarrow \cvp(X_*)$ and $V_2 : B \rightarrow \cvp(F^{\beta}(Y))$ be valuations such that
$$(X_*,\alpha_*,V_1) \approx^{1} (Y_*,\beta_*,V_2) $$
Then these two one-step models satisfy the same atomic formulas of the one-step
language $\MSO^{1}_{\{\Box,E\}}$.
\end{lemma}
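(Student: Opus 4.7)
My plan is to treat each of the three kinds of atomic formulas of $\MSO^{1}_{\{\Box,E\}}(B)$ separately: inclusions $a\subseteq b$, box atoms $\Box(a)$, and existence atoms $E(a)$. In every case the strategy is to reduce truth in $(X_{*},\alpha_{*},V_{1})$ to a statement about which $1$-signatures are realised by basic members of $N_{\alpha_{*}}$, and then invoke $\approx^{1}$ to transport the statement to $(Y_{*},\beta_{*},V_{2})$.

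The cleanest case is $\Box(a)$. Unfolding the semantics, $(X_{*},\alpha_{*},V_{1})\vDash_{1}\Box(a)$ iff $V_{1}(a)\in N_{\alpha_{*}}$, iff some basic member $\lceil Y,j\rceil$ is contained in $V_{1}(a)$. The last condition depends only on the $1$-signature $\sigma$ of the basic member: it is equivalent to $\sigma(t)=0$ for every propositional type $t$ with $a\notin t$. Since $\approx^{1}$ equates (entry by entry, up to the finite/infinite distinction) the multisets of $1$-signatures realised by basic members of $N_{\alpha_{*}}$ and $N_{\beta_{*}}$, the existence of such a $\sigma$ on one side is equivalent to its existence on the other, and $\Box(a)$ transfers.

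For $E(a)$ and $a\subseteq b$ I would exploit the structure of the construction to reduce to a similar signature statement. Since $S_{\alpha_{*}}=X_{*}$, every element of $X_{*}$ is of the form $(u,i,Z,j)$ with $u\in Z$, and (using the non-emptiness of $N_{\alpha}$ and the role of the index $j<\omega$) one can argue that the propositional types appearing anywhere in $X_{*}$ under $V_{1}$ already appear inside basic members of $N_{\alpha_{*}}$. Granted this, $E(a)$ becomes equivalent to some basic member having $\sigma(t)\geq 1$ for some $t\ni a$, and $a\subseteq b$ becomes equivalent to no basic member having $\sigma(t)\geq 1$ for some $t$ containing $a$ but not $b$. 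Both are signature-theoretic, so $\approx^{1}$ closes the argument once again.

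The main obstacle lies in making precise this last reduction. Reading the definition of $X_{*}$ literally with $Z$ ranging over all of $\psf(S_{\alpha})$, elements $(u,i,Z,j)$ with $Z\notin N_{\alpha}$ fall outside every basic member and carry types that $1$-signatures cannot see; since $V_{1}$ is arbitrary it can in principle place a fresh variable $b\in B\setminus A$ only on such ``invisible'' elements. The clean way forward will be to observe that the intended content of $X_{*}$ is to work with $Z\in N_{\alpha}\cap\psf(S_{\alpha})$, so that every element is visible to some basic member; failing that, I would enrich the notion of $1$-signature to record type realisations outside basic members and show, using $\monstar h(\alpha)=\beta$ and the $2^{k}\!\times\omega$-fold redundancy of copies, that $\approx^{1}$ still forces the enriched information to agree on both sides.
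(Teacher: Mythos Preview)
Your plan is essentially the paper's proof. For $\Box(a)$ the paper does exactly what you do: pick a basic member $\lceil Z,j\rceil\subseteq V_1(a)$ and transfer it via $\approx^{1}$. For $a\subseteq b$ and $E(a)$ the paper also routes through basic members: given a witnessing element $(u,i,Z,j)$ (of a type containing $a$ but not $b$, respectively containing $a$), it observes that $(u,i,Z,j)\in\lceil Z,j\rceil$, treats $\lceil Z,j\rceil$ as a basic member, and invokes $\approx^{1}$ to find a basic member of the same $1$-signature on the other side, which then contains an element of the same propositional type.

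The obstacle you flag in your last paragraph is genuine on a literal reading of the definition of $X_{*}$, and the paper does not address it: it simply writes ``$(u,i,Z,j)\in\lceil Z,j\rceil$'' and proceeds as though $\lceil Z,j\rceil$ were a basic member of $N_{\alpha_{*}}$, without checking $Z\in N_{\alpha}$. Your first proposed fix---reading $X_{*}$ with $Z$ ranging over $N_{\alpha}\cap\psf(S_{\alpha})$ so that every element lies in a basic member---is precisely the implicit assumption the paper's argument needs, and is the intended reading (it is also what the later Lemma on extending valuations tacitly relies on, since there $V_{2}'(q)$ is assembled only from basic members). So you are being more careful than the paper here, not less; your backup ``enriched signature'' route is not needed.
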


\begin{proof}
Suppose first that
$$X_*,\alpha_*,V_1)\vDash_1 p \subseteq q$$
where $p,q \in B$. Suppose that $V_2(p) \nsubseteq V_2(q)$. Then there is some $(u,i,Z,j) \in Y_*$ such that $(u,i,Z,j) \in V_2(p)\setminus V_2(q)$. We have $(u,i,Z,j) \in \left\lceil Z,j\right\rceil$, and so there must be some basic element $\left\lceil Z',j'\right\rceil$ of $N_{\alpha_*}$ of the same $1$-signature over variables $B$ as $\left\lceil Z,j\right\rceil$. It follows that there is some element of $\left\lceil Z',j'\right\rceil$ of the same propositional type as $(u,i,Z,j)$, and then we cannot have $V_1(p) \subseteq V_1(q)$. The converse direction is proved in the same manner.

Now, suppose that 
$$(X_*,\alpha_*,V_1)\vDash_1 \Box p$$
Then $V_1(p) \in N_{\alpha_*}$, so there is some basic element $\left\lceil Z,j\right\rceil \in N_{\alpha_*}$ with $\left\lceil Z,j\right\rceil \subseteq V_1(p)$. There must be some basic $\left\lceil Z',j'\right\rceil \in N_{\beta_*}$ of the same $1$-signature over $B$ as $\left\lceil Z,j\right\rceil$, and clearly it follows that $\left\lceil Z',j'\right\rceil \subseteq V_2(p)$ and so $V_2(p) \in N_{\beta_*}$ as required. The converse direction is proved in the same way.

Finally, suppose that 
$$(X_*,\alpha_*,V_1)\vDash_1 E p$$
Then there is some $(u,i,Z,j) \in S_{\alpha_*}$ with $(u,i,Z,j) \in V_1(p)$. We have $(u,i,Z,j) \in \left\lceil Z,j\right\rceil$, and  there must be some basic element  $\left\lceil Z',j'\right\rceil$ of $N_{\beta_*}$ with the same $1$-signature as $\left\lceil Z,j\right\rceil$. Hence, $\left\lceil Z',j'\right\rceil$ contains some element $(u',i',Z',j')$ of the same propositional type as $(u,i,Z,j)$, and it follows that
$$(Y_*,\beta_*,V_2)\vDash_1 E p$$
as required. The converse direction is proved in the same way.
\end{proof}
To clinch the proof of Proposition \ref{p:ad-monstar}, we now only need the following lemma:
\begin{lemma}
\label{mainlemmamonstar}
Let $B$ be a finite set of variables containing $A$, let $0 < m \leq k$ and let $V_1 : B \rightarrow Q(X_*)$ and $V_2 : B \rightarrow \cvp (Y_*)$ be valuations such that
$$(X_*,\alpha_*,V_1) \approx^{2^m} (Y_*,\beta_*,V_2)$$
Let $q$ be any fresh variable. Then for any valuation
$V_1'$ 
extending $V_1$ with some value for $q$, there exists a valuation
$V_2' $ 
extending $V_2$, such that
$$(X_*,\alpha_*,V_1') \approx^{2^{(m-1)}} (Y_*,\beta_*,V_2')$$
and vice versa. 
\end{lemma}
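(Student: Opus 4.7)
The plan is to construct $V_2'$ basic-member by basic-member, exploiting that two basic members $\left\lceil Z,j\right\rceil$ and $\left\lceil Z',j'\right\rceil$ of $N_{\alpha_*}$ coincide only when $(Z,j) = (Z',j')$, and hence are pairwise disjoint (likewise on the $\beta$-side). Since $\approx^{2^{m-1}}$ is determined solely by the cardinality of the collection of basic members realising each refined $(B \cup \{q\})$-signature $\tau$, we have complete independence in choosing $V_2'(q)$ inside each basic member; on elements of $Y_*$ lying in no basic member we may set $V_2'(q)$ arbitrarily. I would first partition basic members of $N_{\alpha_*}$ (resp.\ $N_{\beta_*}$) by their $2^m$-signature under $V_1$ (resp.\ $V_2$), writing $\mathcal{B}^\alpha_\sigma$ and $\mathcal{B}^\beta_\sigma$ for the classes; by $\approx^{2^m}$, corresponding classes have matching cardinalities (both finite and equal, or both infinite).

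The key technical step is to characterise, for a basic member $\left\lceil W,l\right\rceil$ of $2^m$-signature $\sigma$, the refined $2^{m-1}$-signatures $\tau$ (in variables $B \cup \{q\}$) that can be realised by some choice of $V_2'(q) \cap \left\lceil W,l\right\rceil$. The analysis is type by type. For each $B$-type $t$, let $c$ be the actual count of type-$t$ elements in $\left\lceil W,l\right\rceil$; selecting $a$ of them to lie in $V_2'(q)$ yields the refined pair $(\tau(t), \tau(t \cup \{q\})) = (\min(a, 2^{m-1}), \min(c-a, 2^{m-1}))$ for $a \in \{0,\dots,c\}$. A short case split on the position of $a$ relative to the cap $2^{m-1}$ shows that if $\sigma(t) < 2^m$ then $c = \sigma(t)$ on both sides and the achievable pairs are exactly $\{(x,y) : x + y = \sigma(t)\}$, while if $\sigma(t) = 2^m$ then $c \geq 2^m$ on both sides and the achievable pairs are exactly $\{(x,y) : \max(x,y) = 2^{m-1}\}$. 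Either way the set depends only on $\sigma$, and since the choices for distinct types $t$ are independent, the set of refined $2^{m-1}$-signatures $\tau$ realisable by \emph{any} basic member of $2^m$-signature $\sigma$ is the same on both sides.

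With this, the construction of $V_2'$ is class-by-class. Given $V_1'$, record for each $\sigma$ and each refinement $\tau$ of $\sigma$ the count $c_\tau^\alpha$ of members of $\mathcal{B}^\alpha_\sigma$ realising $\tau$ under $V_1'$. If $|\mathcal{B}^\alpha_\sigma| = |\mathcal{B}^\beta_\sigma|$ is finite, fix a bijection between them and realise, on each $\beta$-side member, the refinement of its partner (possible by the previous paragraph). If both are infinite, note that only finitely many refinements of $\sigma$ exist, so at least one refinement $\tau^*$ satisfies $c_{\tau^*}^\alpha = \infty$; assign $c_\tau^\alpha$ basic members of $\mathcal{B}^\beta_\sigma$ to each refinement $\tau$ with $c_\tau^\alpha$ finite, and distribute the remaining infinitely many members of $\mathcal{B}^\beta_\sigma$ among the $\tau$ with $c_\tau^\alpha = \infty$, infinitely many to each. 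The ``vice versa'' direction follows by symmetry of the hypothesis and of the construction.

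The main obstacle will be the refinement classification: verifying the case analysis that yields $\{(x,y): \max(x,y) = 2^{m-1}\}$ for $c \geq 2^m$ \emph{independently} of the specific value of $c$, and $\{(x,y) : x + y = \sigma(t)\}$ for $c = \sigma(t) < 2^m$. This is precisely where the capping at $2^{m-1}$ in the refined signature interacts with the capping at $2^m$ in the original signature, and is what makes the descent from $\approx^{2^m}$ to $\approx^{2^{m-1}}$ work.
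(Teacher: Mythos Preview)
Your proposal is essentially the paper's argument, organised a bit more abstractly: both partition the basic members by their $2^m$-signature over $B$, match the $\alpha$- and $\beta$-side classes (bijection when finite, suitable distribution when infinite), and transfer refinements member-by-member using the fact that the set of realisable $2^{m-1}$-refinements of a basic member depends only on its $2^m$-signature. The paper inlines this last fact as a three-case analysis inside its ``Case 1'' (according as $\tau(t\cup\{q\})<2^{m-1}$, $\tau(t)<2^{m-1}$, or both $\geq 2^{m-1}$), whereas you isolate it as a stand-alone claim; the substance is the same.

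One small correction: your explicit description of the achievable pairs in the case $\sigma(t) < 2^m$ is not quite right when $2^{m-1} < \sigma(t) < 2^m$. For instance with $m=2$ and $\sigma(t)=3$, the achievable capped pairs are $\{(0,2),(1,2),(2,1),(2,0)\}$, not $\{(x,y):x+y=3\}$. This does not affect your argument, since the only fact you actually need is that the achievable set is determined by $\sigma(t)$; and in this case $c=\sigma(t)$ exactly, so that is immediate. (Also, as written your pair is $(\tau(t),\tau(t\cup\{q\}))=(\min(a,2^{m-1}),\min(c-a,2^{m-1}))$; if $a$ is the number selected for $q$ this should be swapped, but by the symmetry $a\leftrightarrow c-a$ the set of achievable pairs is unaffected.)
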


\begin{proof}
We only prove one direction since the other direction can be proved by a symmetric argument. Let $V_1'$ be given. By the hypothesis, for any $2^m$-signature $\sigma$ over the variables $B$, either the number of basic elements of signature $\sigma$ in $N_{\alpha_*}$ and $N_{\beta_*}$ relative to $V_1$ and $V_2$ are both finite and the same, or both infinite. Let $\sigma_1,...,\sigma_k$ be a list of all the distinct $2^m$-signatures over $B$ such that the set of basic elements of $N_{\alpha_*}$ and $N_{\beta_*}$ of signature $\sigma_i$, with $1 \leq i \leq k$, is non-empty but finite, and let $\sigma_{k+1},...,\sigma_{l}$ be a list of all the $2^m$-signatures such that, for $k + 1 \leq i \leq l$, there are infinitely many basic elements of $N_{\alpha_*}$ and of $N_{\beta_*}$ of signature $\sigma_i$. Then, for each $i \in \{1,...,l\}$, let $\alpha_*[\sigma_i]$ denote the set of basic elements in $N_{\alpha_*}$ of signature $\sigma_i$, and similarly let $\beta_*[\sigma_i]$ denote the set of basic elements of $N_{\beta_*}$ of signature $\sigma_i$. Then $\alpha_*[\sigma_1],...,\alpha_*[\sigma_l]$ is a partition of the set of basic elements of $N_{\alpha_*}$ into non-empty cells, and similarly $\beta_*[\sigma_1],...,\beta_*[\sigma_l]$ is a partition of the set of basic elements of $N_{\beta_*}$.

Given the extended valuation $V_1'$ in $X_*$ defined on variables $B \cup \{q\}$, we similarly let $\tau_1,...,\tau_{k^*}$ be a list of all the $2^{m - 1}$-signatures over $B \cup \{q\}$ such that, for $1 \leq i \leq k^*$, the set of basic elements of $N_{\alpha_*}$ of $2^{m - 1}$-signature $\tau_i$ is non-empty but finite. We let $\tau_{k^* + 1},...,\tau_{l^*}$ be a list of all the $2^{m - 1}$-signatures over $B \cup \{q\}$ such that, for each $i$ with $k^* + 1 \leq i \leq l^*$, the set of basic elements of $N_{\alpha_*}$ of $2^{m - 1}$-signature $\tau_i$ is infinite. Let $\alpha_*[\tau_i]$ denote the set of basic elements of $N_{\alpha_*}$ of $2^{m - 1}$-signature $\tau_i$, so that the collection $\alpha_*[\tau_1],...,\alpha_*[\tau_{l^*}]$ constitutes a second partition of the set of basic elements of $N_{\alpha*}$. It will be useful to introduce the abbreviation $D_1$ for the finite set $\alpha_*[\sigma_1] \cup...\cup \alpha_*[\sigma_k]$, and the abbreviation $D_2$ for the finite set $\alpha_*[\tau_1] \cup...\cup \alpha_*[\tau_{k^*}]$. 

For each $i$ with $1 \leq i \leq k$, there is a bijection between the set $\alpha_*[\sigma_i]$ and $\beta_*[\sigma_i]$, and we can paste all these bijections together into a bijective map 
$$f : \alpha_*[\sigma_1] \cup...\cup \alpha_*[\sigma_k] \rightarrow \beta_*[\sigma_1] \cup ... \cup \beta_*[\sigma_k]$$ 
Since every basic element of $N_{\alpha_*}$ not in $D_1$ belongs to a $2^m$-signature of which there are infinitely many basic elements in $\beta_*$, and since $D_1 \cup D_2$ is finite, it is easy to see that we can extend the map $f$ to a map $g$ which is an injection from the set $D_1 \cup D_2$
into the set of basic elements of $N_{\beta_*}$, such that for each basic element $\left\lceil Z,j\right\rceil$ in $D_1 \cup D_2$, $\left\lceil Z,j\right\rceil$ and $g(\left\lceil Z,j\right\rceil)$ have the same $2^m$-signature over $B$, and such that $g\!\upharpoonright\!D_1 = f$. Each basic element of $N_{\beta_*}$ not in the image of $g$ must then be of one of the $2^m$-signatures $\sigma_{k + 1},...,\sigma_{l}$, and so we can partition the set of basic elements of $N_{\beta_*}$ outside the image of $g$ into the cells
$\beta_*[\sigma_{k + 1}]\setminus g[D_2],...,\beta_*[\sigma_l]\setminus g[D_2]$.
For each $i$ with $k + 1 \leq i \leq l$, let $\gamma^i_{1},...,\gamma^i_{r}$ list all infinite sets of the form 
$\alpha_*[\sigma_i] \cap \alpha_*[\tau_j]$ for $k^* + 1 \leq j \leq l^*$.
The list $\gamma^i_{1},...,\gamma^i_{r}$ must be non-empty, and so since the set $\beta_*[\sigma_{i}]\setminus g[D_2]$ is also infinite, we may 
partition it into $r$ many infinite cells and list these as $\delta^i_{1},...,\delta^i_{r}$.
Now, for each basic element $\left\lceil Z,j\right\rceil$ of $\beta_*$, we 
define a map 
$W_{\left\lceil Z,j\right\rceil}$ from  $B \cup \{q\}$ to 
  $\psf(\left\lceil Z,j\right\rceil)$
by a case distinction as follows:

\emph{Case 1:} $\left\lceil Z,j\right\rceil = g(\left\lceil Z',j'\right\rceil)$ 
for some $\left\lceil Z',j'\right\rceil \in D_1 \cup D_2$. 
Then $\left\lceil Z,j\right\rceil$ and $\left\lceil Z',j'\right\rceil$ have the
same $2^m$-signature over $B$. Using this fact we define the valuation 
$W_{\left\lceil Z,j\right\rceil}$ so that, for each $p \in B$, we have 
$W_{\left\lceil Z,j\right\rceil}(p) = V_2(p) \cap \left\lceil Z,j\right\rceil$,
and so that $\left\lceil Z',j'\right\rceil$ and $\left\lceil Z,j\right\rceil$ 
have the same $2^{m - 1}$-signature over $B \cup \{q\}$ with respect to the
valuations $V_1'$ and $W_{\left\lceil Z,j\right\rceil}$.

 We show how to assign the value of the variable $q$: for each propositional type $t$ over $B \cup \{q\}$, there are three different possible cases to consider. If $\left\lceil Z',j'\right\rceil$ has $l < 2^{m-1}$ elements of type $t \cup \{q\}$ over $B \cup \{q\}$, then pick $l$ many elements of $\left\lceil Z,j\right\rceil$ of type $t$ and mark them by $q$. This is possible since $l < 2^{m-1} \leq 2^m$ and $\left\lceil Z',j'\right\rceil$ and $\left\lceil Z,j\right\rceil$ have the same $2^m$-signature. If there are $l < 2^{m-1}$ elements of $\left\lceil Z',j'\right\rceil$ of type $t$ over $B \cup \{q\}$, then pick $l $ elements of $\left\lceil Z,j\right\rceil$ of type $t$ over $B$, and mark all the other elements of $\left\lceil Z,j\right\rceil$ of type $t$ by $q$. Finally, if there are at least $2^{m-1}$ elements of $\left\lceil Z',j'\right\rceil$ of type $t \cup \{q\}$ over $B \cup \{q\}$ and at least $2^{m - 1}$ elements of $\left\lceil Z',j'\right\rceil$ of type $t$ over $B \cup \{q\}$, then all in all there must be at least $2^m$ elements of $\left\lceil Z',j'\right\rceil$ of type $t$ over $B$, and so there must be at least $2^m$ elements of $\left\lceil Z,j\right\rceil$ of type $t$ over $B$. Pick $2^{m-1}$ of these and mark them by $q$. Finally, let $W_{\left\lceil Z,j\right\rceil}(q)$ be the set of elements of $\left\lceil Z,j\right\rceil$ marked by $q$.

\emph{Case 2:} $\left\lceil Z,j\right\rceil$ is not in the image of $g$. Then there must be some $i \in \{k^* + 1,...,l^*\}$ such that $\left\lceil Z,j\right\rceil \in \beta_*[\sigma_{k + 1}]\setminus g[D_2]$, and this set is partitioned into $\delta^i_1,...,\delta^i_r$. Let $\left\lceil Z,j\right\rceil \in \delta^i_j$, and pick some arbitary element $\left\lceil Z',j'\right\rceil$ of the set $\gamma^i_j$. Then $\left\lceil Z',j'\right\rceil$ and $\left\lceil Z,j\right\rceil$ have the same $2^m$-signature over $B$ and we can proceed as in Case 1.

We define the valuation $V_2'$ by setting $V_2'(q)$ to be the union of the sets $W_{\left\lceil Z,j\right\rceil}(q)$ for $\left\lceil Z,j\right\rceil$ a basic element in $N_{\beta_*}$. It is now fairly straightforward to check that
$$(X_*,\alpha_*,V_1') \approx^{2^{(m-1)}} (Y_*,\beta_*,V_2')$$  
as required. 

First, suppose there are infinitely many basic elements of $N_{\alpha_*}$ of some $2^{m-1}$ signature $\tau_j$, meaning that $k^* \leq j \leq l^*$. Then since the set $\alpha_*[\tau_j]$ is infinite, $D_1$ is finite and $\alpha_*[\tau_j]$ is equal to
$$ (D_1 \cap \alpha_*[\tau_j]) \cup (\alpha_*[\sigma_{k +1}] \cap \alpha_*[\tau_{j}]) \cup ... \cup (\alpha_*[\sigma_l] \cap \alpha_*[\tau_{j}])$$
 there must be some $i \in \{k+1,...,l\}$ such that the set $\alpha_*[\sigma_i] \cap \alpha_*[\tau_j]$ is infinite. This means that $\alpha_*[\sigma_i] \cap \alpha_*[\tau_j]$ appears in the list $\gamma^i_1,...,\gamma^i_r$, and so we see that all elements of some member of the list $\delta^i_1,...,\delta^i_r$ will have the $2^{m-1}$-signature $\tau_j$. Since each member of this list is infinite, we see that there must be infinitely many basic elements of $N_{\beta_*}$ of signature $\tau_j$. 

Conversely, suppose there are infinitely many basic elements of $N_{\beta_*}$ of $2^{m-1}$-signature $\tau_j$ over $B \cup \{q\}$. Then since the image of $g$ is finite, some of these elements must be outside the image of $g$, which means that for some $i \in \{k + 1,...,l\}$, some member of the list $\delta^i_1,...,\delta^i_r$ will consist of elements of signature $\tau_j$. This means that some member of the list $\gamma_1^i,...,\gamma_r^i$ will consist of elements of signature $\tau_j$, and since each member of this list is infinite we see that $N_{\alpha_*}$ has infinitely many basic elements of $2^{m-1}$-signature $\tau_j$ over $B \cup \{q\}$.

Finally, suppose that there are finitely many basic elements of $N_{\alpha_*}$ and $N_{\beta_*}$ of $2^{m-1}$-signature $\tau_j$. We check that the mapping $g$ restricts to a bijection between the basic elements of $N_{\alpha_*}$ and $N_{\beta_*}$ of this signature. First, $g$ is injective and maps basic elements of $N_{\alpha_*}$ of signature $\tau_j$ to basic elements of $N_{\beta_*}$ of signature $\tau_j$. It only remains to show that (the restriction of) $g$ is surjective, i.e. each basic element $\left\lceil Z,r\right\rceil$ of signature $\tau_j$ is equal to $g(\left\lceil Z',r'\right\rceil)$ for some $\left\lceil Z',r'\right\rceil$. But suppose $\left\lceil Z,r\right\rceil$ is not in the image of $g$; then it is in one of the members of the list $\delta_1^i,...,\delta_r^i$ for some $i$, and since each of these members is an infinite set of basic elements of the same signature, we see that there are infinitely many basic elements of $N_{\beta_*}$ of signature $\tau_j$, contrary to our assumption. Hence, the proof is done. 
\end{proof}
Lemma \ref{monuniform} can now be deduced by combining the last three lemmas, by a straightforward argument using Ehrenfeucht-Fraïssé games for the one-step language.

\subsection{Global neighborhood bisimulations}

Since the set of liftings $\{\Box,\Diamond\}$ can be shown to be expressively complete for $\mon$, and since $\Diamond$ is just the dual of $\Box$,
the monadic second order language $\MSO_\mon$ is equivalent to 
the logic $\mathtt{MMSO}$ which has its syntax given by
$$
\varphi \isbnf
   \sr(p) \mid p \subseteq q \mid \Box(p,q) \mid \exists p. \varphi 
   \mid \neg \varphi \mid \varphi \vee \varphi.
$$
The semantics of an atomic formula $\Box(p,q)$ in a neighborhood model 
$\mathbb{S}$ is given, concretely, by the clause: 
$(S,\sigma,V,u)\vDash \Box(p,q)$ if, for all $v \in V(p)$, there is 
$Z \in \sigma(v)$ such that $Z \subseteq V(q)$. 

At the present time, we do not know how to characterize the fragment of the 
language $\mathtt{MMSO}$ that is invariant for arbitrary neighborhood
bisimulations. 
However, the situation changes if we consider \textit{global} bisimulations 
between neighborhood models.

\begin{defi}
A \textit{global neighborhood bisimulation} between $\mon$-models 
$\mathbb{S}_1$ and $\mathbb{S}_2$ is a neighborhood bisimulation $R$ that 
satisfies the conditions:
\begin{description}
\item[Forth] For every $u \in S_1$ there is some $v\in S_2$ with $u R v$
\item[Back] For every $v \in S_2$ there is some $u \in S_1$ with $u R v$
\end{description}
\end{defi}
We now ask: what is the fragment of $\mathtt{MMSO}$ that is invariant
for global neighborhood bisimulations? Since global bisimulations are the natural
equivalence relation for modal logic with the 
\textit{global modalities}, the most reasonable candidate would
be: the monotone modal $\mu$-calculus extended with the global modalities. 
To be precise, let the \textit{monotone modal $\mu$-calculus with global 
modalities}, denoted $\mu \mathtt{MML}_g$, be the language defined by
the grammar:
\begin{align*}
\varphi \isbnf p & \mid \neg p \mid \bot \mid  \top \mid \Box \varphi 
   \mid \Diamond\varphi \mid [\forall] \varphi \mid [\exists] \varphi 
\\ &   \mid \varphi \vee \varphi \mid  \varphi \wedge \varphi \mid\mu p. \varphi 
   \mid \nu p.\varphi
\end{align*}
where the formula $\varphi$ in $\mu p. \varphi$ and $\nu p . \varphi$ must be 
positive in the variable $p$.  
The new operators $[\forall]$ and $[\exists]$ are the global universal and 
existential modalities, with their standard semantics: 
$(\mathbb{S},u) \vDash [\forall] \varphi$ if $(\mathbb{S},v) \vDash \varphi$ 
for all $v \in S$, and $(\mathbb{S},u) \vDash [\exists] \varphi$ if
$(\mathbb{S},v) \vDash \varphi$ for some $v \in S$. 

Given an $\monstar$-model $\mathbb{S}$, let $\mathbb{S}_\mon$ be the underlying $\mon$-model. Conversely, given an $\mon$-model $\mathbb{S} = (S,\sigma,V)$, define the $\monstar$-model $\mathbb{S}^G = (S,\sigma^G,V)$ by setting $\sigma^G(s) = (\sigma(s),S)$.
The main result of this section is the following.
\begin{theo}
\label{janinwalukiewicztheorem}
A formula in $\mathtt{MMSO}$ is invariant for global neighborhood 
bisimulations if, and only if, it is equivalent to a formula of the logic  
$\mu \mathtt{MML}_g$.
\end{theo}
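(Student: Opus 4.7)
The reverse implication is routine: each formula of $\mu\mathtt{MML}_g$ admits a straightforward $\mathtt{MMSO}$ translation---modal operators via their defining atomic formulas, fixpoints via the standard MSO encoding as in Proposition~\ref{p:mu-to-mso}, and the global modalities via first-order quantifications over the carrier---and an easy induction shows all such formulas to be invariant for global neighborhood bisimulations.

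For the forward direction, my plan is to reduce to Theorem~\ref{t:bisinv2} applied to the variant functor $\monstar$, which admits an adequate uniform construction by Proposition~\ref{p:ad-monstar} and whose expressively complete set of liftings $\{\Box,\Diamond,E,E^d\}$ is supplied by Proposition~\ref{p:lyndon}. The bridge is the assignment $\mathbb{S} \mapsto \mathbb{S}^G$, which bijectively identifies $\mon$-models with the \emph{full} $\monstar$-models, namely those with $S_{\sigma(s)}$ equal to the whole carrier for every state $s$. I would first verify that $\mathbb{S}_1,\mathbb{S}_2$ are globally neighborhood bisimilar iff $\mathbb{S}_1^G,\mathbb{S}_2^G$ are behaviorally equivalent as $\monstar$-coalgebras; the point being that the forced preservation of the support component by $\monstar$-morphisms reproduces precisely the \emph{Forth} and \emph{Back} conditions of global bisimilarity on the $\mon$-side.

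Next, I would set up companion translations $(\cdot)^\circ:\mathtt{MMSO} \to \MSO_{\monstar}$ (using only the $\Box,\Diamond$ liftings of $\monstar$) and $(\cdot)^\bullet:\muML_{\monstar} \to \mu\mathtt{MML}_g$ (sending $E$ and $E^d$ to $[\exists]$ and $[\forall]$ respectively, and $\Box,\Diamond$ to themselves). A routine induction shows that both translations preserve truth on full $\monstar$-models, equivalently on the image of $(\cdot)^G$. Given a global-bisim-invariant $\varphi \in \mathtt{MMSO}$, the goal is then to produce a formula $\widehat{\varphi} \in \MSO_{\monstar}$ that is $\monstar$-behavioral-equivalence invariant and agrees with $\varphi^\circ$ on full models. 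An application of Theorem~\ref{t:bisinv2} will yield $\psi \in \muML_{\monstar}$ equivalent to $\widehat{\varphi}$, whence $\psi^\bullet \in \mu\mathtt{MML}_g$ is the desired formula equivalent to $\varphi$ on $\mon$-models.

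The main obstacle is the construction and verification of $\widehat{\varphi}$. The naive choice $\widehat{\varphi}:=\varphi^\circ$ fails, since $\varphi^\circ$ forgets the support component and is therefore sensitive to \emph{ordinary} $\mon$-bisimilarity, which is strictly coarser than global bisimilarity; a $\monstar$-bisimilarity relating a full and a non-full model can then witness a failure of invariance of $\varphi^\circ$. The plan is to relativise all second-order quantifiers and lifting-applications in $\varphi^\circ$ to a set variable capturing the $E$-reachable part of the model, thereby simulating the ``support is everything'' condition characteristic of full models. Establishing that the resulting relativised formula still agrees with $\varphi^\circ$ on full models while becoming genuinely invariant across the whole class of $\monstar$-models is where the global-bisim invariance hypothesis on $\varphi$ is used in an essential way, and where the bulk of the technical work will reside.
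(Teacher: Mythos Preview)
Your approach differs substantially from the paper's, and the difference is instructive. You attempt to invoke Theorem~\ref{t:bisinv2} as a black box, which forces you to manufacture an $\MSO_{\monstar}$-formula $\widehat{\varphi}$ that is invariant for behavioural equivalence over \emph{all} $\monstar$-models. You correctly identify this as the crux and propose a relativisation to the $E$-reachable part, but leave it unverified; and indeed the details are delicate, since restricting to the $E$-reachable part does not in general yield a full $\monstar$-model, so one must argue separately (via surjectivity of the restricted morphisms onto the $E$-reachable part of a common target) that the restriction is invariant. This can be made to work, but it is real additional labour.

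The paper bypasses this entirely by not going through Theorem~\ref{t:bisinv2}. Instead it works directly with Proposition~\ref{p:unr}: translate $\varphi$ to $\varphi^{*}\in\MSO_{\monstar}$ (with no invariance requirement), pass to an automaton $\bbA_{\varphi}$, and form the modal automaton $\bbA_{\varphi}^{t}$ via the uniform construction of Definition~\ref{d:F-monstar}. The key observation you are missing is that this particular construction has \emph{surjective} maps $h_{\alpha}$ (since $S_{\alpha_{*}}=X_{*}$ projects onto $S_{\alpha}$, and for $\alpha=\sigma^{G}(s)$ one has $S_{\alpha}=S$). By the last clause of Proposition~\ref{p:unr}, the resulting tree model $(\bbT,R,r)$ then comes with a \emph{surjective} $\monstar$-morphism onto $\bbS^{G}$, whose graph is therefore a global neighborhood bisimulation between $\bbT_{\mon}$ and $\bbS$. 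The global-bisimulation invariance of $\varphi$ is invoked exactly once, at this point, to pass from $\bbS$ to $\bbT_{\mon}$; the rest is a straightforward chain of equivalences culminating in the $E\mapsto[\exists]$, $E^{d}\mapsto[\forall]$ substitution. Thus the paper never needs $\varphi^{*}$ to be invariant across arbitrary $\monstar$-models, and your $\widehat{\varphi}$ construction is simply unnecessary.
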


\begin{proof}
Clearly  $\mu \mathtt{MML}_g$ translates into 
$\mathtt{MMSO}$ and is invariant for global bisimulations. 

Conversely, suppose $\varphi \in \mathtt{MMSO} $ is invariant for
global neighborhood bisimulations. 
First observe that $\varphi$ can be regarded as a formula in 
$\MSO_{\monstar}$ as well. 
More precisely, there is a formula $\varphi^*  \in \MSO_{\monstar}$
such that
\begin{equation}
\label{eq:y0}
(\mathbb{T},t) \vDash \varphi^* \text{ iff } 
(\mathbb{T}_\mon,t) \vDash \varphi
\end{equation}
for any $\monstar$-model $(\mathbb{T},t)$.
By Corollary~\ref{c:automatachar} there is a second-order $\{\Box,E\}$-automaton 
$\bbA_{\phi}$ such that 
\begin{equation}
\label{eq:y1}
\bbA_{\phi} \equiv \phi^{*} \text{ (on all $\monstar$-tree models)}.
\end{equation}

Now we use the existence of an adequate, uniform construction for $\monstar$
(Proposition~\ref{p:ad-monstar}).
Let $\bbA_{\phi}^{t}$ be the corresponding modal $\Lambda$-automaton given
by Proposition~\ref{p:unr}, where $\Lambda$ is the collection of all monotone, 
natural predicate liftings for $\monstar$.
By Proposition~\ref{p:lyndon} we may in fact assume that $\bbA_{\phi}^{t}$ is
a $\Theta$-automaton, where $\Theta = \{ \Box, \Diamond,  E, E^{d} \}$.
Let $\psi = \psi_{\bbA_{\phi}^{t}}$ be the corresponding formula in
$\muML_{\Theta }$. 
We claim that, for any pointed neighborhood model $(\bbS,s )$ we have
\begin{equation}
\label{eq:y2}
\mathbb{S},s \vDash \varphi \text{ iff } 
\mathbb{S}^{G},s \vDash \psi.
\end{equation}

To prove this, consider the $\monstar$-tree model $(\bbT,R,r)$ given by
Proposition~\ref{p:unr}, applied to the pointed $\monstar$-model 
$(\bbS^{G},s)$.
Then there is a surjective $\monstar$-coalgebra morphism $f: (\bbT,r) \to 
(\bbS^{G},s)$, and so in particular, $f$ is the graph of a global 
neighborhood bisimulation between $\bbT_{\mon}$ and $\bbS$ relating $r$ to $s$.
Gathering some facts we obtain the following chain of equivalences:
\begin{align*}
\bbS,s \vDash \phi & \text{ iff } \bbT_{\mon},r \vDash \phi
   & \text{(assumption on $\phi$)}
\\ & \text{ iff } \bbT,r \vDash \phi^{*}
   & \text{\eqref{eq:y0}}
\\ & \text{ iff } \bbT,R,r \vDash \bbA_{\phi}
   & \text{\eqref{eq:y1}}
%   & \text{(Corollary~\ref{c:automatachar})}
\\ & \text{ iff } \bbS^{G},s \vDash \bbA_{\phi}^{t}
   & \text{(Proposition~\ref{p:unr})}
\\ & \text{ iff } \bbS^{G},s \vDash \psi
   & \text{(assumption on $\psi$)}
\end{align*}
which proves \eqref{eq:y2} indeed.

Finally, let $\psi^{\forall} \in \mu\mathtt{MML}_g$ be the formula
we obtain from $\psi$ by replacing every occurrence of $E$ with $[\exists]$ and
every occurrence of $E^d$ with $[\forall]$.
It is a routine check to verify that 
\begin{equation}
\label{eq:y3}
\mathbb{S}^{G},s \vDash \psi \text{ iff } 
\mathbb{S},s \vDash \psi^{\forall}.
\end{equation}
But then the equivalence of $\phi \in \mathtt{MMSO}$ and
$\psi^{\forall} \in \mu \mathtt{MML}_g$ is immediate from
\eqref{eq:y2} and \eqref{eq:y3}.
\end{proof}

\bibliographystyle{abbrv}

\bibliography{mu,automata,nabla,extra}

\end{document}